\newtheorem{proposition}{Proposition}
\newtheorem{corollary}{Corollary}
\newtheorem{definition}{Definition}
\newtheorem{lemma}{Lemma}
\newtheorem{theorem}{Theorem}[section]
\newcommand{\CO}{\mathcal{O} }
\title{Efficient Optimal Control of Open Quantum Systems}
\author[1,2]{Wenhao He\thanks{wenhaohe@mit.edu}}
\author[3]{Tongyang Li\thanks{tongyangli@pku.edu.cn}}
\author[4]{Xiantao Li\thanks{xiantao.li@psu.edu}}
\author[5]{Zecheng Li\thanks{zxl5523@psu.edu}}
\author[5]{Chunhao Wang\thanks{cwang@psu.edu}}
\author[4]{Ke Wang\thanks{wangke.math@psu.edu}}
\affil[1]{Center for Computational Science and Engineering, MIT}
\affil[2]{School of Physics, Peking University}
\affil[3]{Center on Frontiers of Computing Studies, School of Computer Science, Peking University}
\affil[4]{Department of Mathematics, Pennsylvania State University}
\affil[5]{Department of Computer Science and Engineering, Pennsylvania State University}
\date{}
\begin{document}

\maketitle

\begin{abstract}
  The optimal control problem for open quantum systems can be formulated as a time-dependent Lindbladian that is parameterized by a number of time-dependent control variables. Given an observable and an initial state, the goal is to tune the control variables so that the expected value of some observable with respect to the final state  is maximized. In this paper, we present algorithms for solving this optimal control problem efficiently, i.e., having a poly-logarithmic dependency on the system dimension, which is exponentially faster than best-known classical algorithms. Our algorithms are hybrid, consisting of both quantum and classical components. The quantum procedure simulates time-dependent Lindblad evolution that drives the initial state to the final state, and it also provides access to the gradients of the objective function via quantum gradient estimation. The classical procedure uses the gradient information to update the control variables.

At the technical level, we provide the first (to the best of our knowledge) simulation algorithm for time-dependent Lindbladians with an $\ell_1$-norm dependence. As an alternative, we also present a simulation algorithm in the interaction picture to improve the algorithm for the cases where the time-independent component of a Lindbladian dominates the time-dependent part. On the classical side, we heavily adapt the state-of-the-art classical optimization analysis to interface with the quantum part of our algorithms. Both the quantum simulation techniques and the classical optimization analyses might be of independent interest.
\end{abstract}

\newpage
\section{Introduction}
The ability to control the dynamics of a quantum system to maximize its property has been a persistent pursuit in quantum physics and chemistry \cite{d2021introduction}. This endeavor has recently gained momentum, spurred by the growing interest in designing quantum information processing devices.
One remarkable obstacle in controlling a quantum system's behavior stems from the reality that quantum systems typically evolve in the presence environmental noise. Consequently, the control strategy must take into account system/bath interactions.  In the Markovian regime, this problem can be formulated as an optimal control problem based on the Lindblad master equation \cite{lindblad1976generators,gorini1976completely} acting on $n$ qubits,
\begin{equation}
\label{eq:Lindblad}
    \frac{d}{dt}\rho =\mathcal{L}(t)(\rho):= -i\biggl[H_0+ \sum_{\beta=1}^{n_{\mathrm{c}}} u_\beta(t)\mu_\beta,\rho\biggr]+\sum_{j=1}^{m} \left(L_j\rho L_j^{\dag}-\frac12\{L_j^{\dag}L_j,\rho\}\right),
\end{equation}
in conjunction with a control functions $u_\beta(t)$ that enters the system Hamiltonian through the operator $\mu_\beta$, and we have $n_{\mathrm{c}}$ control functions. Here $\rho$ is a density operator  on $n$ qubits, and the second term in \cref{eq:Lindblad} is a result of system/bath interactions with $L_j$'s being the jump operators.  The quantum optimal control (QOC) is then formulated as an optimization problem following \cite{altafini2004coherent}:
\begin{equation}\label{qoc-opt}
    \max_{\bm u} f[\bm u(t)],  \quad  f[\bm u(t)]\coloneqq \text{tr}\big(\mathscr{O}\rho(T)\big)- \alpha \sum_{\beta=1}^{n_{\mathrm{c}}}\int_0^T\abs{u_\beta(t)}^2\mathrm{d}t.
\end{equation}
The Hermitian operator $\mathscr{O}$ represents the property to be maximized. The term $\bm u(t)$ embodies all the control variables $\{u_\beta\}$ and the last term in the objective function $f[\bm u(t)]$ is regarded as a regularization.
It is worthwhile to point out that there are other choices of the objective function \cite{brif2010control} in the formulation of the QOC problem. For example, one can guide the Lindblad dynamics \eqref{eq:Lindblad} toward a target state $\bar{\rho}(T)$. In this case, one can minimize the difference between $\bar{\rho}(T)$ and ${\rho}(T)$,
\begin{equation}\label{gate-optm}
    \min_{\bm u} f[\bm u(t)],  \quad  f[\bm u] := \norm{\rho(T)-\bar{\rho}(T)}^2+ \alpha \sum_\beta \int_0^T \abs{u_\beta(t)}^2\mathrm{d}t.
\end{equation}
Implicit in both optimization problems \cref{qoc-opt,gate-optm} is that $\rho(T)$ has to be obtained from the Lindblad equation \eqref{eq:Lindblad}. Thus the main computational challenge comes from the repeated computation of the solution of the Lindblad equation. In this paper, we mainly focus on the optimal control problem with the objective function \cref{qoc-opt}.

To be able to clearly illustrate the computational complexity,  we assume that the Hamiltonian $H(t)$ and the jump operators $L_j(t)$'s are sparse. Moreover, the sparsity structure for each operator does not change over time (i.e., the positions of nonzero entries do not change with time). For a sparse matrix $A$, we assume we have access to a procedure $\mathcal{P}_A$ that can apply the following oracles: 
\begin{align}
  \label{eq:oh}
  \mathcal{O}_{A, \mathrm{loc}}\ket{i, j} &= \ket{i, \nu_A(i, j)}, \quad \text{and} \\
  \label{eq:ohval}
  \mathcal{O}_{A, \mathrm{val}}\ket{t, i, j, z} &= \ket{t, i, j, z\oplus A_{i,j}(t)},
\end{align}
where $\nu_A(i, j)$ is the index of the $j$'s nonzero entry of column $i$ in $A$. 
Particularly for the optimal control problem, we assume we have access to $\mathcal{P}_{H_0}$, $\mathcal{P}_{\mu_\beta}$, and $\mathcal{P}_{L_j}$ for all $j \in [m]$, as well as $\mathcal{P}_{\mathscr{O}}$ for the observable $\mathscr{O}$.

\paragraph*{Main contributions}
We will present a hybrid quantum/classical algorithm for the QOC problem \eqref{eq:Lindblad} and \eqref{qoc-opt}. The overall algorithm consists of the following elements:
\begin{enumerate}
    \item A Lindblad simulation algorithm \cite{childs2016efficient,cleve2016efficient,li2022simulating} that prepares $\rho(T)$ in a purification form. The complexity of our algorithm exhibits a linear scaling with respect to $T$ with a scaling factor proportional to the $L_1$ norm of the Lindbladians instead of the $L_\text{max}$ norm. The dependence of the complexity on the precision $\epsilon$ is only poly-logarithmic. Alternatively, we can also simulate time-dependent Lindbladian using interaction picture~\cite{low2018hamiltonian}. This algorithm applies to important models in experimental physics. For instance, in an ion trap system, it is common to have a time-independent Hamiltonian with norm much larger than the rest of the Lindbladian terms, and thus our algorithm can make the simulation more efficient.
    
    \item The construction of a quantum phase oracle of the gradient of the function $f$. This is achieved by incorporating the quantum gradient computation algorithms in~\cite{gilyen2019optimizing}. This phase oracle will then be interfaced with a classical optimization algorithm.
    
    \item Having approximates of gradients $\nabla f( \bm  u(t))$, we use an accelerated gradient descent (AGD) method~\cite{jin2017accelerated} to solve the optimization problem. In particular, we analyze the influence of the statistical error from the gradient estimation and provide a precise complexity analysis for solving the optimization problem, which essentially characterizes the robustness of AGD for reaching second-order stationary points and may be of independent interest.

\end{enumerate}

In addition to the proposed algorithms, we provide rigorous analysis of the numerical error and precise overall complexity estimates for the hybrid algorithm. Formally, we establish the following result for optimal control of open quantum systems:

\begin{theorem}[main theorem] \label{thm:main}
   Assume there are $n_{\mathrm{c}}$ control functions  $u_\beta(t) \in \mathrm{C}^2([0,T])$. Further assume\footnote{More generally, if $\norm{H_0}, \norm{\mu} = \Theta(\Lambda)$, it is equivalent to enlarge the time duration $T$ by a factor $O(\Lambda)$.} that $\norm{H_0}, \norm{\mathscr{O}}, \norm{\mu_{\beta}}, \norm{L_j} \leq 1$, and $\alpha \geq 2/T$. 
There exists a quantum algorithm that, with probability at least $2/3$\footnote{Using standard techniques, the success probability can be boosted to a constant arbitrarily close to 1 while only introducing a logarithmic factor in the complexity.}, solves problem \eqref{qoc-opt} by:
\begin{itemize}
\item reaching a first-order stationary point $\|\nabla{f}\|<\epsilon$ with \eqref{eq:Lindblad} using
      $\widetilde{O}\left(\frac{n_{\mathrm{c}} \|\mathcal{L}\|_{\text {be }, 1}T}{\epsilon^{23/8}}\Delta_f\right)$
  queries to $\mathcal{P}_{H_0}$ and $\mathcal{P}_{\mu_\beta}$, $\beta=1,2,\ldots, n_{\mathrm{c}}$, and
    $\widetilde{O}\left(mn\frac{n_{\mathrm{c}} \|\mathcal{L}\|_{\text {be }, 1} T}{\epsilon^{23 / 8}} \Delta_f + n\frac{ T^{3/2}}{\epsilon^{9 / 4}} \Delta_f\right)$
  additional 1- and 2-qubit gates; or
  
\item reaching a second-order stationary point using
      $\widetilde{O}\left(\frac{n_{\mathrm{c}}  \|\mathcal{L}\|_{\text {be }, 1}T^{7/4}}{\epsilon^{5}}\Delta_f\right)$
  queries to $\mathcal{P}_{H_0}$ and $\mathcal{P}_{\mu_\beta}$, $\beta=1,2,\ldots, n_{\mathrm{c}}$ and
    $\widetilde{O}\left(mn\frac{n_{\mathrm{c}}  \|\mathcal{L}\|_{\text {be }, 1}T^{7/4}}{\epsilon^{5}} \Delta_f + n\frac{ T^{3/2}}{\epsilon^{9 / 4}} \Delta_f\right)$
  additional 1- and 2-qubit gates.
  \end{itemize} 
  Here $n_{\mathrm{c}}$ and $m$ are respectively the number of control variables and jump operators.
  \end{theorem}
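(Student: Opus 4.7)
The plan is to combine the Lindblad simulator of the paper with quantum gradient estimation and a noise-robust gradient-based optimizer. First I would reduce the infinite-dimensional control problem \eqref{qoc-opt} to a finite-dimensional optimization over a vector $\bm u\in\mathbb{R}^N$ by discretizing each $u_\beta(t)\in C^2([0,T])$ on a grid of $\widetilde{O}(T)$ breakpoints; the $C^2$ assumption and standard interpolation error bounds ensure the piecewise-polynomial control induces an $\epsilon/2$ error in the objective. A Gronwall-style perturbation estimate for the Lindblad flow, together with the assumed bounds $\|H_0\|,\|\mu_\beta\|,\|L_j\|,\|\mathscr{O}\|\leq 1$, then shows that the discretized objective $F:\mathbb{R}^N\to\mathbb{R}$ is $\mathrm{poly}(T)$-smooth with a $\mathrm{poly}(T)$-Lipschitz Hessian; these constants enter the optimization complexity, and $N = n_{\mathrm{c}}\cdot\widetilde{O}(T)$.

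Next I build a noisy evaluation oracle for $F$. Running the $\ell_1$-norm Lindblad simulator established earlier in the paper on the Lindbladian determined by a given $\bm u$ prepares a purification of $\rho(T)$ to trace-norm accuracy $\delta$ using $\widetilde{O}(\|\mathcal{L}\|_{\mathrm{be},1}T/\delta)$ queries to $\mathcal{P}_{H_0}$, $\mathcal{P}_{\mu_\beta}$, $\mathcal{P}_{L_j}$. Composing this with a block encoding of $\mathscr{O}$ and a Hadamard-test arrangement yields a unitary whose amplitude in a flagged subspace encodes $\mathrm{tr}(\mathscr{O}\rho(T))$ to precision $\delta$; after adding the classically computable regularization term, a standard amplitude-to-phase conversion produces a phase oracle for $F(\bm u)$ of accuracy $\delta$. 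Feeding this phase oracle into the quantum gradient estimator of \cite{gilyen2019optimizing} returns an approximation $\tilde g$ with $\|\tilde g-\nabla F(\bm u)\|_\infty\leq\eta$ using $\widetilde{O}(1/\eta)$ oracle calls, i.e.\ $\widetilde{O}(\|\mathcal{L}\|_{\mathrm{be},1}T/(\eta\delta))$ simulator queries per gradient. The $\widetilde{O}(mn)$ overhead per simulator query accounts for the arithmetic needed to evaluate $\mathcal{O}_{A,\mathrm{val}}$ for the $m$ jump operators on $n$ qubits.

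To obtain the two stated regimes, I plug this noisy gradient oracle into the (perturbed) AGD framework of \cite{jin2017accelerated}. For the first-order stationary point I run the standard inexact-gradient descent-lemma argument: forcing the per-step gradient error $\eta$ to be an appropriate small power of $\epsilon$, and combining with an $\widetilde{O}(\Delta_f/\epsilon^{7/4})$ AGD iteration count, balances $\eta$ and $\delta$ against the per-query cost to produce the $\widetilde{O}(n_{\mathrm{c}}\|\mathcal{L}\|_{\mathrm{be},1}T/\epsilon^{23/8}\Delta_f)$ query complexity; the factor $n_{\mathrm{c}}$ comes from the gradient estimator probing all control directions through a single phase-oracle evaluation but with accuracy degrading in the dimension. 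For the second-order stationary point I invoke the escape-from-saddle version of perturbed AGD, which injects a small explicit random perturbation whenever the gradient is small; this demands a more stringent gradient accuracy so that the injected perturbation is not masked by estimation bias. Re-running the perturbation bookkeeping under the tighter $\eta$ schedule yields the $\widetilde{O}(n_{\mathrm{c}}\|\mathcal{L}\|_{\mathrm{be},1}T^{7/4}/\epsilon^{5}\Delta_f)$ bound. The additive $\widetilde{O}(nT^{3/2}/\epsilon^{9/4}\Delta_f)$ gate term captures the classical overhead of storing and updating the $N$-dimensional iterate and of implementing the controlled phase kickbacks inside gradient estimation.

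The main obstacle is the last step: extending the perturbed AGD analysis to a gradient oracle that is biased—controllable at the $O(\eta)$ level—rather than either exact or i.i.d.\ mean-zero. The descent lemma must tolerate this bias without losing per-step progress, and the saddle-escape argument must still certify that after $\widetilde{O}(1)$ rounds of slightly perturbed motion the iterate has moved along the negative-curvature subspace by enough to decrease $F$, even though every gradient call is corrupted by quantum estimation error. Pinning down the exact exponents $23/8$ and $5$ requires threading these noise tolerances carefully through the AGD step sizes, momentum parameters, and negative-curvature exploitation phase; everything else—discretization error, amplitude-to-phase conversion, and Gronwall propagation of simulation error into the gradient—reduces to careful but routine bookkeeping.
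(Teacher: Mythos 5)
Your high-level plan matches the paper's: discretize the control into an $N$-dimensional vector, simulate the Lindblad dynamics, turn the Hadamard-test amplitude into a phase oracle, run quantum gradient estimation, and feed the noisy gradient into a noise-tolerant PAGD. However there are two concrete problems.

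First, your simulation-cost claim is off. You write that the purification of $\rho(T)$ costs $\widetilde{O}(\|\mathcal{L}\|_{\mathrm{be},1}T/\delta)$ queries. \Cref{thm:ldb-td} delivers precision poly-logarithmically, not at $1/\delta$, and the time dependence is already absorbed into $\|\mathcal{L}\|_{\mathrm{be},1}=\int_0^T\|\mathcal{L}(\tau)\|_{\mathrm{be}}\,\dd\tau$ (so multiplying by $T$ again double-counts it). If the simulator really cost $1/\delta$ per invocation, the final query exponents would be strictly worse than $23/8$ and $5$, because the gradient-estimation step already contributes a $1/\epsilon_g$ factor. The whole point of the $\ell_1$-norm, polylog-precision simulator is that its error contribution disappears into the $\widetilde{O}$, leaving only the $\|\mathcal{L}\|_{\mathrm{be},1}$ prefactor and the per-gradient $\widetilde{O}(n_{\mathrm{c}}T/\epsilon_g)$ oracle calls (which is where the $T$ actually enters, via the polynomial-degree bound on $\widetilde{J}_1$, cf.\ \cref{lemma:jordan,lemma:grad-est-J}).

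Second, and more fundamentally, you acknowledge but do not resolve the crux of the argument: you write that ``pinning down the exact exponents $23/8$ and $5$ requires threading these noise tolerances carefully'' and leave it there. But those exponents are direct consequences of the gradient-noise thresholds that have to be established, namely $\epsilon_g=\widetilde{\Theta}(\epsilon^{9/8})$ for a first-order stationary point (giving $7/4+9/8=23/8$) and $\epsilon_g=\widetilde{\Theta}(\epsilon^3/(\sqrt{d}\,\Delta_f))$ for a second-order one (giving, after substituting $d=N=\widetilde{O}(T^{3/2}/\epsilon^{1/2})$ and multiplying by the $\widetilde{O}(1/\epsilon^{7/4})$ iteration count and the $n_{\mathrm{c}}T$ per-gradient factor, the stated $T^{7/4}/\epsilon^5$). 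Deriving these tolerances is not routine bookkeeping: it requires re-proving the PAGD invariants with a biased, non-stochastic gradient error---specifically the improve-or-localize Hamiltonian-decrease lemma, the large-gradient and momentum descent bounds, the coupled-sequence saddle-escape argument, and the negative-curvature-exploitation step (\cref{theorem:opti} and Appendix~\ref{appendix:PAGD}). In each of these the bias enters the quadratic forms and the propagated error $\hat{\delta}_\tau=\delta_\tau+\mathbf{e}(\mathbf{y}_\tau)$ differently, and the $\epsilon^{9/8}$ threshold in particular arises from balancing an accumulated $\mathscr{T}\cdot\epsilon_g^2$ term against the per-window Hamiltonian decrease $\mathscr{E}$. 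Your proposal names this as the main obstacle but supplies no argument for it, so the exponents in the theorem remain unjustified.

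Finally, a minor point: your explanation of the $n_{\mathrm{c}}$ factor (``probing all control directions through a single phase-oracle evaluation but with accuracy degrading in the dimension'') is not quite the mechanism. In the paper's gradient estimator the $n_{\mathrm{c}}T$ appears because $\widetilde{J}_1$ is a polynomial of degree $\widetilde{O}(T)$ in each control variable, so the high-order central-difference formula in \cref{lemma:jordan} (the $m/\epsilon$ branch of the query count) must be applied for each of the $n_{\mathrm{c}}$ control functions.
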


\paragraph*{Techniques}
Our technical contributions are outlined as follows.
\begin{itemize}
\item In \Cref{sec:simulation}, we give efficient quantum algorithms for simulating time-dependent Lindbladians with a scaling factor in time proportional to the $L_1$-norm of the Lindbladians instead of the $L_\text{max}$-norm, as well as poly-logarithmic $\epsilon$ dependence. Our simulation algorithm is based on the higher-order series expansion from Duhamel's principle as sketched in~\cite{li2022simulating}. A notable difference from~\cite{li2022simulating} is that in their paper, Gaussian quadratures are used to approximate integrals; however, in our time-dependent case, Gaussian quadratures can no longer be used as, unless upper bounds on the higher-order derivatives of the operators are given in advance. The techniques for obtaining the $L_1$-norm dependence follow from the rescaling trick in~\cite{BCSWW20}, while generalized to Lindbladians. Our time-dependent Lindbladian simulation techniques might be of independent interest.

\item In \Cref{sec:interaction}, we show how to simulate time-dependent Lindbladian using interaction picture~\cite{low2018hamiltonian}. This technique is suited for simulating a Lindbladian $\mathcal{L}=\mathcal{L}_{1}+\mathcal{L}_{2}$ where $\mathcal{L}_1(\cdot ) = -i\left[H_1, \cdot\right]$ is a Hamiltonian with complexity linear in norm of $\mathcal{L}_{2}$ (up to poly-logarithmic factors) and similar number of simulations of the Hamiltonian $H_1$ . The simulation scheme is based on a mathematical treatment of the Lindblad equation as a differential equation, and the construction leverages the simulation algorithms shown in \cref{sec:simulation} without rescaling the evolution time. It turns out that using our simulation algorithm in the interaction picture, we obtain better gate complexity compared with directly using the simulation in \cref{sec:simulation} even with the $\ell_1$-norm dependence. To the best of our knowledge, this is the first Lindblad simulation algorithm in the interaction picture, which can also be of independent interest.

\item In \Cref{sec:optimization}, we adapt a nonconvex optimization algorithm that can reach first-order stationary points with $\tilde{O}(1/\epsilon^{7/4})$ \emph{noisy} gradient queries with $\ell_{2}$-norm error at most $O(\epsilon^{9/8})$, and reach second-order stationary points with $\tilde{O}(1/\epsilon^{7/4})$ \emph{noisy} gradient queries with $\ell_{2}$-norm error at most  $\tilde{O}(\epsilon^{3})$. Our setting is different from either gradient descent (GD) or stochastic gradient descent (SGD): Compared to GD we only have access to noisy gradients, while in standard SGD the noise can be adjusted and there is no Lipschitz condition for the noisy gradient. With this novel setting, we successfully design an optimization algorithm based on perturbed accelerated gradient descent (PAGD) \cite{jin2017accelerated}. We carefully analyze the error bound in different cases and it turns out that our algorithm reaches an optimal error scaling for PAGD (up to poly-logarithmic factors) in finding a first-order stationary point.
\end{itemize}

\paragraph*{Related work}
In addition to the large variety of conventional applications \cite{geppert2004laser}, quantum optimal control problems are crucial in near-term quantum computing, because in the architecture of quantum computers, the underlying physical operations such as microwave control pulses and the modulated laser beam can be abstracted as control pulse sequences (see the survey~\cite{shi2020resource} for more detailed discussions), and hence the are inherently quantum control problems. Quantum optimal control also plays a vital role  in quantum computing algorithms. For instance, Magann et al.~\cite{magann2021pulses} studied the relationship between variational quantum algorithms (VQAs) and quantum optimal control, and showed that the performance of VQAs can be informed by quantum optimal control theory. Banchi and Crooks~\cite{banchi2021measuring} demonstrated how gradients can be estimated in a hybrid quantum-classical optimization algorithm, and quantum control is used as one important application.
In Ref.~\cite{lloyd2014information} the authors showed that for a quantum many-body system, if it exists an efficient classical representation, then the optimal control problems on this quantum dynamics can be solved efficiently with finite precision.

There exist heuristic classical methods for solving the quantum optimal control problem, including the monotonically convergence algorithms~\cite{zhu1998rapidly}, the Krotov method~\cite{palao2003optimal}, the GRadient Ascent Pulse Engineering (GRAPE) algorithm~\cite{khaneja2005optimal,de2011second}, the Chopped RAndom-Basis (CRAB) algorithm~\cite{caneva2011chopped}, etc. Furthermore, such heuristics can be extended to quantum optimal control of open quantum systems~\cite{koch2016controlling,koch2022quantum,reich2015efficient,goerz2015optimizing}, including~\cite{abdelhafez2019gradient,li2009pseudospectral,boutin2017resonator}. However, these algorithms do not establish provable guarantees for the efficiency of solving the quantum optimal control problem. Meanwhile, the landscape of the quantum control problem has been analyzed in~\cite{chakrabarti2007quantum,de2013closer,ge2021optimization}, which suggests that for closed quantum systems, the landscape may not involve suboptimal optimizers. However, the implication to the computational complexity still remains open.

Quantum algorithms, due to their natural ability to simulate quantum dynamics, have been developed for quantum control problems~\cite{magann2021digital,li2017hybrid,castaldo2021quantum,li2023efficient}. Liu and Lin \cite{liu2023dense} 
developed an efficient algorithm to output the integral of the observable in \cref{qoc-opt}, which can potentially solve a more generalized optimal control problem.  
These approaches  employ hybrid quantum-classical algorithms that combine a quantum algorithm for the time-dependent Schr\"odinger equation with a classical optimization method. However, these
efforts have been focused on closed quantum systems, and quantum control algorithms for open quantum systems require separate techniques.

\paragraph{Open questions}
Our paper leaves several open questions for future investigations:
\begin{itemize}
\item Are there efficient quantum algorithms for the optimal control of other master equations beyond the Lindbladian equation?

\item How to extend the current framework to the control problems with a target density operator $ \bar{\rho}(T)$? The challenge in such a control problem \eqref{gate-optm} is the estimation of the Frobenius norm from the quantum circuit. 

\item Gaussian quadrature was used in the Lindblad simulation method \cite{li2022simulating}, which significantly suppressed the number of terms in a Dyson-series type of approach, and implies the implementation. The extension of Gaussian quadrature to the current framework with time-dependent Lindbladians would require derivative bounds for the evolution operator from both the drift and jump terms, which is not trivial.   
\end{itemize}

\section{Preliminaries}
\subsection{Notations}
For a positive integer $m$, we use $[m]$ to denote the set $\{1, \ldots, m\}$. In this paper, we use two types of notations to denote vectors. For a quantum state, we use the Dirac notation $\ket{\cdot}$ to denote the corresponding state vector. For vectors involved in classical information, e.g., the gradient vector, we use bold font, such as $\bm{u}$, to denote them. For such a vector $\bm{u} \in \mathbb{C}^d$, we use subscripts with a norm font to indicate its entries, i.e., $u_1, \ldots, u_d$ are the entries of $\bm{u}$. When we use subscripts with a bold font, such as, $\bm{u}_1, \ldots, \bm{u}_k$, they are a list of vectors. For a vector $\bm{v} \in \mathbb{C}^d $, we use $\norm{\bm{v}}$ to denote its \emph{Euclidean norm}. For a matrix $M \in \mathbb{C}^{d\times d}$, we use $\norm{M}$ to denote its \emph{spectral norm}, and use $\norm{M}_1$ to denote its \emph{trace norm}, i.e., $\norm{A}_1 = \Trace(\sqrt{M^{\dag}M})$. We also use $[\cdot, \cdot]$ to denote the operator \emph{commutator}, i.e., $[A, B] \coloneqq AB-BA$, and use $\{\cdot,\cdot\}$ to denote the \emph{anticommutator}, i.e., $\{A, B\} \coloneqq AB+BA$.

In addition, we use calligraphic fonts, such as $\mathcal{L}$, to denote  \emph{superoperators}, which is also referred to as \emph{quantum maps}. Superoperators are linear maps that take matrices to matrices. The \emph{induced trace norm} of a superoperator $\mathcal{M}$, denoted by $\norm{\mathcal{M}}_1$, is defined as
\begin{align}
  \norm{\mathcal{M}}_1 \coloneqq \max\{\norm{\mathcal{M}(A)}_1 : \norm{A}_1 \leq 1\}.
\end{align}
The \emph{diamond norm} of a superoperator $\mathcal{M}$, denoted by $\norm{\mathcal{M}}_{\diamond}$, is defined as
\begin{align}
  \norm{\mathcal{M}}_{\diamond} \coloneqq \norm{\mathcal{M}\otimes \mathcal{I}}_1,
\end{align}
where $\mathcal{I}$ acts on the space with the same size as the space $\mathcal{M}$ acts on.

We denote by $\mathrm{C}^2[0, T]$ the class of twice continuously differential functions in $[0, T]$.

\subsection{Algorithmic tools}
\subsubsection{Block-encoding and implementing completely-positive maps}
Although we assume that the input of the operators of the Lindbladian are given by sparse-access oracles, it is convenient to use a more general input model when presenting the simulation algorithm. For a matrix $A \in \mathbb{C}^{2^n\times 2^n}$, we say that a unitary, denoted by $U_A$, is an $(\alpha, b, \epsilon)$-block-encoding of $A$ if
\begin{align}
  \norm{A - \alpha(\bra{0}^{\otimes b}\otimes I)U_A(\ket{0}^{\otimes b}\otimes I)} \leq \epsilon,
\end{align}
where the identity operator $I$ is acting on $n$ qubits. Intuitively, this unitary $U_A$ is acting on $(n+b)$ qubits and $A$ appears in the upper-left block of it, i.e.,
\begin{align}
  U_A =
  \begin{pmatrix}
    A/\alpha & \cdot\\
    \cdot & \cdot
  \end{pmatrix}.
\end{align}
Here, we refer to $\alpha$ as the \emph{normalizing factor}.

Our simulation algorithm relies on the following technical tool from~\cite{Li_2023} for implementing completely positive maps given the block-encodings of its Kraus operators, which generalizes a similar tool in~\cite{cleve2016efficient} where the Kraus operators are given as linear combinations of unitaries.

\begin{lemma}[Implementing completely positive maps via block-encodings of Kraus operators~\cite{Li_2023}]
  \label{lemma:block-encoding-channel}
  Let $A_1, \ldots, A_{m} \in \mathbb{C}^{2^n}$ be the  Kraus operators of a completely positive map. Let $U_1, \ldots, U_m \in \mathbb{C}^{2^{n+n'}}$ be their corresponding $(s_j, n', \epsilon)$-block-encodings, i.e.,
  \begin{align}
    \norm{A_j - s_j (\bra{0}\otimes I)U_j\ket{0}\otimes I)} \leq \epsilon, \quad \text{ for all $1 \leq j \leq m$}.
  \end{align}
  Let $\ket{\mu}\coloneqq \frac{1}{\sqrt{\sum_{j=1}^{m}s_j^2}}\sum_{j=1}^{m}s_j\ket{j}$. Then $(\sum_{j=1}^m\ketbra{j}{j}\otimes U_j)\ket{\mu}\ket{0}\otimes I$ implements this completely positive map in the sense that
  \begin{align}\label{eq: M-apply}
    \norm{I\otimes \bra{0}\otimes I \left(\sum_{j=1}^m\ketbra{j}{j}\otimes U_j\right) \ket{\mu}\ket{0}\ket{\psi} - \frac{1}{\sqrt{\sum_{j=1}^ms_j^2}}\sum_{j=1}^m \ket{j}A_j\ket{\psi}} \leq \frac{m\epsilon}{\sqrt{\sum_{j=1}^ms_j^2}}
  \end{align}
  for all $\ket{\psi}$.
\end{lemma}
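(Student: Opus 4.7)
The plan is to prove \Cref{lemma:block-encoding-channel} by first establishing what the operator does when the block-encodings are exact (that is, $\epsilon=0$), and then propagating the block-encoding error through a triangle-inequality argument. This lemma is essentially a verification statement, so I expect no conceptual obstacle, only the need to be careful about how the error in each $U_j$ aggregates across the controlled sum.

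First, I would introduce the notation $\tilde A_j \coloneqq s_j(\bra{0}\otimes I)U_j(\ket{0}\otimes I)$ for the \emph{effective} operator that the block-encoding $U_j$ realizes. By definition $\norm{A_j-\tilde A_j}\leq \epsilon$ for each $j$. Writing $\ket{\mu}=\frac{1}{\sqrt{\sum_k s_k^2}}\sum_j s_j\ket{j}$, a direct computation gives
\begin{equation*}
\Bigl(\sum_{j=1}^m\ketbra{j}{j}\otimes U_j\Bigr)\ket{\mu}\ket{0}\ket{\psi}
=\frac{1}{\sqrt{\sum_k s_k^2}}\sum_{j=1}^m s_j\ket{j}\,U_j\ket{0}\ket{\psi}.
\end{equation*}
Projecting the block-encoding ancilla register onto $\ket{0}$ (via $I\otimes \bra{0}\otimes I$) extracts the upper-left block of each $U_j$, yielding the state $\frac{1}{\sqrt{\sum_k s_k^2}}\sum_j s_j\ket{j}\,\frac{\tilde A_j}{s_j}\ket{\psi}=\frac{1}{\sqrt{\sum_k s_k^2}}\sum_j\ket{j}\,\tilde A_j\ket{\psi}$. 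This is exactly the desired expression but with $A_j$ replaced by $\tilde A_j$, so it verifies the $\epsilon=0$ case and identifies the target quantity on the right-hand side of \eqref{eq: M-apply}.

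Next I would bound the error between the realized state and the ideal state in \eqref{eq: M-apply}. Subtracting the two expressions yields
\begin{equation*}
\frac{1}{\sqrt{\sum_k s_k^2}}\sum_{j=1}^m\ket{j}\bigl(\tilde A_j-A_j\bigr)\ket{\psi},
\end{equation*}
whose norm is, by the triangle inequality, at most $\frac{1}{\sqrt{\sum_k s_k^2}}\sum_j \norm{(\tilde A_j-A_j)\ket{\psi}}\leq \frac{m\epsilon}{\sqrt{\sum_k s_k^2}}$, using $\norm{A_j-\tilde A_j}\leq \epsilon$ and $\norm{\ket\psi}\leq 1$. This is precisely the bound claimed in \eqref{eq: M-apply}.

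The only mild subtlety, and the closest thing to an obstacle, is being precise about which registers the partial projection $I\otimes\bra{0}\otimes I$ acts on and that the orthogonality of the $\ket{j}$'s is used only implicitly in computing the controlled action; one could in fact get a slightly better $\sqrt{m}\epsilon$ bound by exploiting this orthogonality in the error term, but the looser $m\epsilon$ bound suffices for our downstream use and matches the statement of the lemma. No further ingredients are needed, since the lemma follows entirely from the definition of a block-encoding and the action of a control-on-$\ket{j}$ unitary.
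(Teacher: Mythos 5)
Your proof is correct, and the argument is essentially the canonical one. Note that the paper itself does not prove \Cref{lemma:block-encoding-channel}: it is imported from~\cite{Li_2023} and stated without proof, so there is no in-paper argument to compare against. Your verification — introducing $\tilde A_j \coloneqq s_j(\bra{0}\otimes I)U_j(\ket{0}\otimes I)$, computing the exact output of the controlled-$U_j$ sum followed by the ancilla projection, and then bounding the residual $\frac{1}{\sqrt{\sum_k s_k^2}}\sum_j\ket{j}(\tilde A_j-A_j)\ket{\psi}$ — is exactly what one would expect the cited source to do. You are also right that the orthogonality of the $\ket{j}$'s gives the tighter bound $\sqrt{m}\,\epsilon/\sqrt{\sum_k s_k^2}$ (via $\norm{\sum_j\ket{j}w_j}^2=\sum_j\norm{w_j}^2$), so the $m\epsilon$ in the lemma statement is deliberately loose; using the triangle inequality instead of Pythagoras reproduces the stated constant. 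No gap.
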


We also need the following lemma from~\cite{Li_2023} for obtaining a block-encoding of a linear combination of block-encodings.
\begin{lemma}[Block-encoding of a sum of block-encodings~\cite{Li_2023}]
  \label{lemma:sum-to-be}
  Suppose $A \coloneqq \sum_{j=1}^m y_j A_j \in \mathbb{C}^{2^n\times 2^n}$, where $A_j \in \mathbb{C}^{2^n \times 2^n}$ and $y_j > 0$ for all $j \in \{1, \ldots m\}$. Let $U_j$ be an $(\alpha_j, a, \epsilon)$-block-encoding of $A_j$, and $B$ be a unitary acting on $b$ qubits (with $m \leq 2^b-1$) such that $B\ket{0} = \sum_{j=0}^{2^b-1}\sqrt{\alpha_jy_j/s}\ket{j}$, where $s = \sum_{j=1}^my_j\alpha_j$. Then a $(\sum_{j}y_j\alpha_j, a+b, \sum_{j}y_j\alpha_j\epsilon)$-block-encoding of $\sum_{j=1}^my_jA_j$ can be implemented with a single use of $\sum_{j=0}^{m-1}\ketbra{j}{j}\otimes U_j + ((I - \sum_{j=0}^{m-1}\ketbra{j}{j})\otimes I _{\mathbb{C}^{2^a}}\otimes I_{\mathbb{C}^{2^{n}}})$ plus twice the cost for implementing $B$.
\end{lemma}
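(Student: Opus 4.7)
The plan is to execute the standard PREPARE/SELECT (linear combination of unitaries) construction, but with the $U_j$'s interpreted as block-encodings rather than as exact implementations of the $A_j$'s. Concretely, define the controlled ``SELECT''-style unitary
\begin{equation}
W \coloneqq \sum_{j=0}^{m-1}\ketbra{j}{j}\otimes U_j + \Bigl(I - \sum_{j=0}^{m-1}\ketbra{j}{j}\Bigr)\otimes I_{\mathbb{C}^{2^a}}\otimes I_{\mathbb{C}^{2^n}},
\end{equation}
and take as the candidate block-encoding the unitary
\begin{equation}
V \coloneqq (B^\dagger \otimes I_{\mathbb{C}^{2^a}} \otimes I_{\mathbb{C}^{2^n}})\cdot W \cdot (B \otimes I_{\mathbb{C}^{2^a}} \otimes I_{\mathbb{C}^{2^n}}).
\end{equation}
This uses $W$ once and $B$, $B^\dagger$ once each, matching the resource count claimed in the lemma.

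The main step is to compute the upper-left $(a+b)$-ancilla block of $V$ directly. Writing $\ket{\mu}\coloneqq B\ket{0}^{\otimes b} = \sum_{j=0}^{m-1}\sqrt{\alpha_j y_j/s}\,\ket{j}$ (only indices $j<m$ carry nonzero amplitude by the definition of the coefficients, so the identity branch of $W$ contributes nothing), orthonormality of the $\ket{j}$'s yields
\begin{equation}
(\bra{0}^{\otimes(a+b)}\otimes I_{\mathbb{C}^{2^n}})\,V\,(\ket{0}^{\otimes(a+b)}\otimes I_{\mathbb{C}^{2^n}}) \;=\; \sum_{j=0}^{m-1} \frac{\alpha_j y_j}{s}\cdot(\bra{0}^{\otimes a}\otimes I_{\mathbb{C}^{2^n}}) U_j (\ket{0}^{\otimes a}\otimes I_{\mathbb{C}^{2^n}}).
\end{equation}
The block-encoding hypothesis gives $\alpha_j (\bra{0}^{\otimes a}\otimes I)U_j(\ket{0}^{\otimes a}\otimes I) = A_j + E_j$ with $\|E_j\|\leq\epsilon$. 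Substituting and scaling through by the claimed normalizing factor $s$ yields $s\cdot(\bra{0}^{\otimes(a+b)}\otimes I)V(\ket{0}^{\otimes(a+b)}\otimes I) = \sum_j y_j A_j + \sum_j y_j E_j$, whose distance from $\sum_j y_j A_j$ is at most $\sum_j y_j \epsilon \leq s\epsilon$ under the standard convention $\alpha_j \geq 1$. This establishes the $(s, a+b, s\epsilon)$-block-encoding asserted in the lemma.

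I do not anticipate any serious obstacle. The only point of care is verifying that the identity branch of $W$ on $(I - \sum_{j<m}\ketbra{j}{j})$ contributes nothing to the upper-left block; this is immediate from the support of $B\ket{0}^{\otimes b}$ lying in $\operatorname{span}\{\ket{0},\ldots,\ket{m-1}\}$. Modulo this bookkeeping, the argument is the standard LCU identity with explicit error tracking for the block-encoding approximations of the summands.
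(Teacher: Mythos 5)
The paper does not present its own proof of \Cref{lemma:sum-to-be}; it is cited from~\cite{Li_2023}, and your PREPARE/SELECT argument is the standard LCU construction that such a reference would use. Your computation of the upper-left block of $V=(B^\dagger\otimes I)W(B\otimes I)$, the observation that the identity branch of $W$ is annihilated because $B\ket{0}^{\otimes b}$ is supported on $\{\ket{0},\ldots,\ket{m-1}\}$, and the substitution of the block-encoding hypothesis are all correct. Correct too is the resource accounting: one use of $W$ and two uses of $B$ (as $B$ and $B^\dagger$).

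One small but real point of friction remains in the error bound, which you flagged yourself. Under the paper's stated definition of an $(\alpha_j,a,\epsilon)$-block-encoding, $\|A_j-\alpha_j(\bra{0}^{\otimes a}\otimes I)U_j(\ket{0}^{\otimes a}\otimes I)\|\le\epsilon$, your derivation gives
\begin{equation}
\Bigl\| \textstyle\sum_j y_j A_j \;-\; s\,(\bra{0}^{\otimes(a+b)}\otimes I)V(\ket{0}^{\otimes(a+b)}\otimes I)\Bigr\|\;\le\;\sum_j y_j\epsilon,
\end{equation}
which is $\le s\epsilon=\sum_j y_j\alpha_j\epsilon$ only when every $\alpha_j\ge 1$. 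That hypothesis is not part of the lemma statement, and for subnormalized operators one may well have $\alpha_j<1$, in which case $\sum_j y_j\epsilon$ is actually larger than the advertised $s\epsilon$ and your argument does not establish the claimed bound. The likely resolution is that the source reference uses the normalized-error convention $\|A_j/\alpha_j-(\bra{0}^{\otimes a}\otimes I)U_j(\ket{0}^{\otimes a}\otimes I)\|\le\epsilon$, under which the per-term error in $A_j$ is $\alpha_j\epsilon$ and the sum cleanly yields $\sum_j y_j\alpha_j\epsilon$. So either state the $\alpha_j\ge1$ convention explicitly, or redo the one line using the normalized convention; either way the overall structure of your proof is exactly right.
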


\subsubsection{Optimization}

For the current quantum-classical hybrid algorithm, we will couple a Lindblad simulation with a classical optimization algorithm. For this purpose, we work with the PAGD algorithm \cite{jin2017accelerated}, which is based on Nesterov’s accelerated gradient descent idea \cite{nesterov1983method},
\begin{equation}
\begin{aligned}
     \bm u_{k+1} &= \bm u_k  - \eta \nabla f(\bm u_k ) + (1 - \theta) \bm v_k, \quad 
     \bm v_{k+1} &= \bm u_{k+1} - \bm u_k.
\end{aligned}
\end{equation}
Here $\bm u_k$ is the $k$th iterate of the control variable. The idea in PAGD is to introduce a perturbation to the iterate when $\norm{\nabla f} > \epsilon$ for some iterations, along with a  negative curvature exploitation step. 

There are two common goals for solving (nonconvex) optimization problems:
\begin{itemize}
\item $\bm x$ is called an $\epsilon$-approximate first-order stationary point if $\|\nabla f(\bm x)\|\leq\epsilon$. 
\item $\bm x$ is called an $\epsilon$-approximate second-order stationary point if $\|\nabla f(\bm  x)\|\leq\epsilon,\quad\lambda_{\min}(\nabla^{2}f(\bm x))\geq-\sqrt{\varrho\epsilon}$. Here $f$ is a $\varrho$-Hessian-Lipschitz function, i.e., $\|\nabla^{2}f(\bm x)-\nabla^{2}f(\bm y)\|\leq\varrho\|\bm x-\bm y\|$ for any $\bm x$ and $\bm y$.
\end{itemize}

\subsubsection{Quantum gradient estimation}\label{sec:gradient-est}
With copies of $\rho(T)$, which will be obtained from Lindblad simulation algorithms,  and sparse access to $\mathscr{O}$, we can obtain an estimated gradient value of $\widetilde{J}_1(\boldsymbol{u})$. The high-level strategy is to construct a probability oracle first, then construct a phase oracle with the probability oracle, and finally obtain the gradient by the phase oracle. The probability oracle and the phase oracle are defined as follows.

The Lindblad simulation algorithm leads to a purification of $\rho(T)$, i.e., $\rho(T)= \text{tr}(\ketbra{\rho_T}).$  It is clear that the regularization term in \eqref{qoc-opt} is easy to compute. With the purification, we can express the first term as,
\begin{equation}
  \label{eq:j1}
  \widetilde{J}_1(\bm{u}) := \bra{\rho_T}\mathscr{O}\otimes I \ket{\rho_T}.
  \end{equation}
Suppose $U_{\mathscr{O}}$ denotes the block encoding of $\mathscr{O}$, i.e. $\left\langle 0\left|\left\langle\psi_N\left|U_{\mathscr{O}}\right| 0\right\rangle\right| \psi_N\right\rangle=\left\langle\psi_N|\mathscr{O}| \psi_N\right\rangle$. Let c-$U_{\mathscr{O}}$ be the controlled $U_{\mathscr{O}}$. Applying Hadamard test circuit $(H \otimes I)\left(\mathrm{c-}U_{\mathscr{O}}\right)(H \otimes I)$ acting on  $\ket{\rho_T}$ produces
\begin{align}
\sqrt{f(\boldsymbol{u})}\ket{1}\ket{\phi_1(u)}+\sqrt{1-f(\boldsymbol{u})}\ket{0}\ket{\phi_0(\boldsymbol{u})}
\end{align}
where $f(\boldsymbol{u}):=-\frac{1}{2}\left\langle\rho_T|\mathscr{O}| \rho_T\right\rangle+\frac{1}{2}= -\frac{1}{2}\widetilde{J}_1(\boldsymbol{u})+\frac{1}{2}$ . By Lemma 48 of \cite{Gily_n_2019}, we can efficiently construct a block encoding of $\mathscr{O}$ with sparse access to $\mathscr{O}$. The $1/2$ factor does not matter because the gradient will only be multiplied by a constant factor.

\begin{definition}[Probability oracle]
Consider a function $f: \mathbb{R}^d \rightarrow[0,1]$. The probability oracle for $f$, denoted by $U_f$, is a unitary defined as
$$
\begin{aligned}
U_f|\boldsymbol{x}\rangle|\mathbf{0}\rangle=|\boldsymbol{x}\rangle\left(\sqrt{f(\boldsymbol{x})}|1\rangle\left|\phi_1(\boldsymbol{x})\right\rangle+\sqrt{1-f(\boldsymbol{x})}|0\rangle\left|\phi_0(\boldsymbol{x})\right\rangle\right),
\end{aligned}
$$
where $\left|\phi_1(\boldsymbol{x})\right\rangle$ and $\left|\phi_0(\boldsymbol{x})\right\rangle$ are arbitrary states.
\end{definition}

\begin{definition}[Phase oracle]
Consider a function $f\colon \mathbb{R}^d \rightarrow \mathbb{R}$. The phase oracle for $f$, denoted by $\mathcal{O}_f$, is a unitary defined as
$$
\mathcal{O}_f|\boldsymbol{x}\rangle|\mathbf{0}\rangle=e^{i f(\boldsymbol{x})}|\boldsymbol{x}\rangle|\mathbf{0}\rangle
$$
\end{definition}

\begin{theorem}
[Constructing phase oracle with probability oracle, Theorem 14 of \cite{gilyen2019optimizing}]
Consider a function $f: \mathbb{R}^d \rightarrow[0,1]$. Let $U_f$ be the probability oracle for $f$. Then, for any $\epsilon \in(0,1 / 3)$, we can implement an $\epsilon$-approximate of the phase oracle $\mathcal{O}_f$ for $f$, denoted by $\widetilde{\mathcal{O}}_f$, such that $\| \widetilde{\mathcal{O}}_f|\psi\rangle|\boldsymbol{x}\rangle-\mathcal{O}_f|\psi\rangle|\boldsymbol{x}\rangle \| \leq \epsilon$, for all state $|\psi\rangle$. This implementation uses $O(\log (1 / \epsilon))$ invocations to $U_f$ and $U_f^{\dagger}$, and $O(\log \log (1 / \epsilon))$ additional qubits.
\end{theorem}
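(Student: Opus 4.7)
The plan is to view $U_f$ as implicitly a block-encoding of the diagonal operator $D_{\sqrt{f}} := \sum_{\bm{x}} \sqrt{f(\bm{x})}\,|\bm{x}\rangle\langle \bm{x}|$ and then apply the quantum singular value transformation (QSVT) framework, combined with a two-term linear-combination-of-unitaries (LCU), to produce an $\epsilon$-approximate block-encoding of the unitary diagonal $D_{e^{if}}$, which acts exactly as the phase oracle $\mathcal{O}_f$ on the all-zero-ancilla subspace. To make the block-encoding precise, I would use the projectors $\Pi_{\mathrm{in}} = I\otimes|\mathbf{0}\rangle\langle\mathbf{0}|$ and $\Pi_{\mathrm{out}} = I\otimes|1\rangle\langle 1|\otimes I$ on the ancilla registers of $U_f$; the defining relation of the probability oracle shows that the nonzero singular values of $\Pi_{\mathrm{out}}U_f\Pi_{\mathrm{in}}$ restricted to the basis $\{|\bm{x}\rangle\}$ are exactly $\sqrt{f(\bm{x})}$, and the reflections $2\Pi_{\mathrm{in}}-I$, $2\Pi_{\mathrm{out}}-I$ are each implementable by a single controlled phase on the ancilla.

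With this setup, the QSVT phase-sequence procedure produces, for any real polynomial $P$ of definite parity with $\|P\|_{[-1,1],\infty}\le 1$, a unitary whose appropriate block equals $P(D_{\sqrt{f}}) = \mathrm{diag}(P(\sqrt{f(\bm{x})}))$, using $\deg P$ queries to each of $U_f$ and $U_f^{\dagger}$ together with a single rotation ancilla and classically precomputed phase angles. My next step would be to approximate $e^{if(\bm{x})}$ as an even polynomial in $y=\sqrt{f(\bm{x})}\in[0,1]$: writing $e^{iy^2}=\cos(y^2)+i\sin(y^2)$, both components are entire and even in $y$ with rapidly decaying Taylor coefficients on $[-1,1]$, so truncation yields even real polynomials $C,S$ of degree $d = O(\log(1/\epsilon)/\log\log(1/\epsilon))$ with uniform error at most $\epsilon/3$ and sup-norm at most $1$. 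I would then apply QSVT separately with $C$ and with $S$ to obtain block-encodings of $C(D_{\sqrt{f}})$ and $S(D_{\sqrt{f}})$.

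Finally, I would combine the two block-encodings by a two-term LCU with selector $\tfrac{1}{\sqrt{2}}(|0\rangle+i|1\rangle)$ on a fresh ancilla qubit, producing a $(\sqrt{2},\,O(\log\log(1/\epsilon)),\,\epsilon/3)$-block-encoding of $C(D_{\sqrt{f}})+iS(D_{\sqrt{f}})$, whose operator-norm distance from the \emph{unitary} diagonal $D_{e^{if}}$ is at most $2\epsilon/3$. Because the target is unitary, one round of robust oblivious amplitude amplification strips the $\sqrt{2}$ normalization at only constant query overhead, yielding a unitary $\widetilde{\mathcal{O}}_f$ that satisfies $\|\widetilde{\mathcal{O}}_f|\psi\rangle|\bm{x}\rangle - \mathcal{O}_f|\psi\rangle|\bm{x}\rangle\|\le\epsilon$ for every input. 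The total cost is $O(d)=O(\log(1/\epsilon))$ calls to $U_f$ and $U_f^{\dagger}$ together with $O(\log\log(1/\epsilon))$ additional qubits, the latter absorbing the LCU selector, the QSVT rotation ancilla, and any precision bits needed for the QSP phase angles. The main obstacle is the parity and unitarity bookkeeping: QSVT realizes only polynomials of a single parity per invocation, so $e^{iy^2}$ has to be expressed as an LCU of two real even-parity pieces, and one must then confirm that oblivious amplitude amplification tolerates the $O(\epsilon)$ deviation of the combined block from exact unitarity while keeping all overheads within the stated logarithmic budgets.
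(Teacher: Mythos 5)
The paper does not prove this statement; it is quoted verbatim as Theorem~14 of~\cite{gilyen2019optimizing}, so there is no ``paper's own proof'' to compare against. What you have written is a genuine re-derivation, and it follows a \emph{different} route from the one in~\cite{gilyen2019optimizing}. There the probability oracle is first turned into a $(1,a,0)$-block-encoding of the Hermitian diagonal $\mathrm{diag}(f(\bm{x}))$ itself (via a $U_f^{\dagger}(Z\otimes I)U_f$-type trick applied to the flag qubit, not of $\mathrm{diag}(\sqrt{f(\bm{x})})$), and then $e^{i\,\mathrm{diag}(f)}$ is produced by LCU-based block-Hamiltonian simulation with a truncated Jacobi--Anger/Taylor series of degree $d=O(\log(1/\epsilon)/\log\log(1/\epsilon))$; the $O(\log\log(1/\epsilon))$ extra qubits in the statement are precisely the $\lceil\log_2 d\rceil$ qubits of the LCU ``select'' register. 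Your approach instead treats $U_f$ directly as a singular-value block-encoding of $\mathrm{diag}(\sqrt{f(\bm{x})})$ and applies QSVT with the two even real polynomials approximating $\cos(y^2)$ and $\sin(y^2)$. Both give the same query count, and your route is in fact \emph{cheaper} in ancillas (a constant number: one QSVT rotation qubit and one LCU selector), but of course one cannot improve a black-box citation.

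Two points in your writeup are off and should be fixed. First, you attribute the $O(\log\log(1/\epsilon))$ qubit cost to ``precision bits needed for the QSP phase angles''; the QSP phases are classical parameters baked into single-qubit rotations and consume no qubits, so your construction actually needs only $O(1)$ ancillas --- the $\log\log$ count is an artifact of the LCU route in the original paper, not of yours. Second, the oblivious-amplitude-amplification step is not quite as stated: a block-encoding with subnormalization $\sqrt{2}$ has block amplitude $\approx 1/\sqrt{2}$, i.e.\ $\theta=\pi/4$, and a single Grover step maps $\sin\theta\mapsto\sin 3\theta=1/\sqrt{2}$ again, so it does not boost to~$1$. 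You need to rescale the LCU so that the block is exactly $(C+iS)/2$ (subnormalization~$2$, amplitude $\approx 1/2$, $\theta=\pi/6$), after which the standard three-reflection OAA lands exactly on a unitary, and the robustness results for OAA tolerate the $O(\epsilon)$ deviation of $C+iS$ from exact unitarity. With those two corrections, the argument is sound and delivers the stated $O(\log(1/\epsilon))$ queries with room to spare.
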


  In order to interface the Lindblad simulation algorithm with a classical optimization method, one needs to estimate the gradient of the objective function. Similar to the approach in \cite{li2023efficient},
we first represent the control variable as a piecewise linear function in time $u_\beta(t)  \approx \sum_{j=1}^N u_j B_j(t) $ with $B_j(t)$ being the standard shape function and $u_j$ being a nodal function. The total number of steps $N$ is proportional to the time duration $T.$
We will use the improved Jordan's algorithm \cite{Jordan_2005} using high order finite difference formulas \cite{gilyen2019optimizing}.  Basically, the gradient estimation in \cite{gilyen2019optimizing} produces an estimate $\bm g(\bm u),$ such that, $\norm{\nabla J_1 (\bm u)  - \bm g(\bm u) } < \epsilon$ with complexity $O(d/\epsilon)$, which is clearly better than a direct sampling approach. However, to achieve this complexity, the objective function needs to satisfy a derivative bound.
Toward this end, we first establish an a priori bound for the derivative.

    \begin{lemma}
  \label{lemma:partial-alpha}
  Let $\bm{\alpha}  = (\alpha_1, \ldots, \alpha_k) \in [N+1]^k$ be an index sequence\footnote{For a precise definition of an index sequence, see Definition 4 of~\cite{gilyen2019optimizing}.}. The derivatives of the control function $\widetilde{J}_1$ with respect to the control variables satisfy:
\begin{equation}
      \begin{aligned}
            \norm{ \frac{\partial^{\bm{\alpha}} \widetilde{J}_1}{\partial u_{\alpha_1} u_{\alpha_2} \cdots u_{\alpha_k} }   }
             \leq (k+1)! \left(\delta t \|\mu\| \right)^k.
      \end{aligned}
\end{equation}
\label{lemma: J1 drv}
\end{lemma}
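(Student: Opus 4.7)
The plan is to combine Duhamel's principle on a Stinespring dilation of the Lindblad evolution with Leibniz's product rule. First, I would write $\ket{\rho_T} = V(T)\ket{\rho_0}$ where $V(T)$ is a dilated unitary on the system plus an ancilla whose time-dependent generator $H_{\mathrm{dil}}(t)$ contains the control-dependent piece $\sum_\beta u_\beta(t)\,\mu_\beta \otimes I$ (the jump-operator structure is absorbed into a control-independent part of $H_{\mathrm{dil}}$, since the dissipator does not involve the $u_\beta$). Substituting the piecewise representation $u_\beta(t) = \sum_j u_j B_j(t)$ yields $\partial H_{\mathrm{dil}}(t)/\partial u_j = B_j(t)\,\mu_{\beta(j)} \otimes I$, with operator norm at most $\|\mu\|\,B_j(t)$.

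Next, I would iterate Duhamel's principle $k$ times to express $\partial^k V(T)/(\partial u_{\alpha_1}\cdots\partial u_{\alpha_k})$ as a sum over permutations $\sigma \in S_k$ of time-ordered simplex integrals of the form
\[
\int_{0\le t_1\le \cdots \le t_k\le T} V(T,t_k)(\mu\otimes I)V(t_k,t_{k-1})\cdots(\mu\otimes I)V(t_1,0)\,\prod_{i=1}^k B_{\alpha_{\sigma(i)}}(t_i)\,dt.
\]
Using unitarity of each $V(\cdot,\cdot)$, the bound $\|\mu\otimes I\| \le \|\mu\|$, and the standard shape-function bound $\int_0^T B_j(t)\,dt \le \delta t$, each ordered-simplex integral is at most $(\delta t)^k$ in magnitude; summing over the $k!$ permutations gives
\[
\left\|\frac{\partial^k V(T)}{\partial u_{\alpha_1}\cdots\partial u_{\alpha_k}}\right\| \le k!\,(\|\mu\|\,\delta t)^k.
\]

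Finally, I would apply Leibniz's rule to $\widetilde{J}_1(\bm u) = \bra{\rho_T}(\mathscr{O}\otimes I)\ket{\rho_T}$: the $k$ derivatives distribute across the bra and the ket, producing $2^k$ terms indexed by subsets $S\subseteq[k]$. By Cauchy--Schwarz together with $\|\mathscr{O}\|\le 1$ and $\|\ket{\rho_0}\|=1$, the term indexed by $S$ is bounded by $|S|!\,(k-|S|)!\,(\|\mu\|\,\delta t)^k$, and the combinatorial sum
\[
\sum_{j=0}^k \binom{k}{j}\,j!\,(k-j)! = \sum_{j=0}^k k! = (k+1)!
\]
delivers exactly the claimed bound $(k+1)!\,(\|\mu\|\,\delta t)^k$.

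The main obstacle will be handling the Stinespring dilation carefully so that the control variables enter only through a term of the form $\mu_\beta\otimes I$ whose operator norm is exactly $\|\mu_\beta\|$. A naive Dyson expansion directly at the density-matrix level would replace $\mu\otimes I$ by the commutator superoperator $[\mu,\cdot]$, introducing an additional factor of $2^k$ from $\|[\mu,\cdot]\|_{1\to 1}\le 2\|\mu\|$ and degrading the bound. Setting up the dilation so that the control-Hamiltonian piece acts as left-multiplication on the ancilla-extended state is what preserves the clean factor $\|\mu\|^k$ and leaves only the combinatorial $(k+1)!$ in the final estimate.
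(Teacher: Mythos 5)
Your plan lines up with the paper's Appendix B in its overall skeleton: both apply Duhamel, bound the $k$-th derivative of the underlying evolution by $k!(\delta t\,\|\mu\|)^k$, and then close with the Leibniz rule and the identity $\sum_{j=0}^k\binom{k}{j}j!(k-j)! = (k+1)!$. Where you genuinely diverge is in \emph{which} object you differentiate. The paper never dilates: it differentiates the Lindblad equation itself, showing that $\Gamma_k = \partial^k\rho/\partial u_{\alpha_1}\cdots\partial u_{\alpha_k}$ satisfies an inhomogeneous Lindblad equation $\partial_t\Gamma_k = \mathcal L\Gamma_k - i\sum_{l=1}^k[b_{\alpha_l}\mu,\Gamma_{k-1}]$, and then bounds $\|\Gamma_k\|$ inductively via Duhamel together with the fact that $e^{t\mathcal L}$ is a contraction (CPTP). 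You instead differentiate a unitary dilation $V(T)$ of the evolution. Your route buys you exactly what you point out: no commutator, hence no latent $2^k$; the paper's bound $\int_0^t\|[b_\alpha\mu,\varrho]\|\mathrm d\tau \le \delta t\|\mu\|$ does in fact silently drop a factor of $2$ coming from the commutator, so your instinct about where the factor hides is correct and worth preserving.

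The gap is the dilation itself, which you flag but do not supply. You need a fixed ancilla space, a genuine time-dependent Hamiltonian $H_{\mathrm{dil}}(t)$ on system-plus-ancilla, and the decomposition $H_{\mathrm{dil}}(t) = H^{(0)}_{\mathrm{dil}}(t) + \sum_\beta u_\beta(t)\,\mu_\beta\otimes I$ so that $\partial H_{\mathrm{dil}}/\partial u_j = B_j(t)\,\mu_{\beta(j)}\otimes I$. For a Lindbladian with jump operators this is not a classical Schr\"odinger evolution on a finite ancilla: the rigorous continuous dilation lives in the Hudson--Parthasarathy quantum-stochastic framework (or as a limit of collision models), where the driving term is a quantum stochastic integral rather than a bounded $H_{\mathrm{dil}}(t)\,\mathrm d t$, and the ``unitarity'' and Duhamel formula you invoke have to be re-derived in that setting. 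The paper avoids all of this; its only nontrivial input is that the Lindblad propagator is a contraction, which is a one-line fact. Until you either construct the dilation in a form that makes your Duhamel iteration rigorous, or switch to the paper's density-matrix ODE argument, the proof has a hole at its foundation. A smaller issue: you bound each ordered-simplex integral by $(\delta t)^k$ and then multiply by $k!$ for the permutations, but the sum over permutations of ordered-simplex integrals of $\prod_i B_{\alpha_{\sigma(i)}}(t_i)$ is exactly the rectangle integral $\prod_i\int B_{\alpha_i}\le(\delta t)^k$ --- so your intermediate estimate $\|\partial^k V\| \le k!(\|\mu\|\delta t)^k$ overcounts by $k!$. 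This does not break the lemma (you still land at $(k+1)!(\delta t\|\mu\|)^k$ after Leibniz), but it masks the fact that the $k!$ in the paper's $\|\Gamma_k\|$ bound actually enters through the $k$ Leibniz terms in the ODE for $\Gamma_k$, not through a permutation sum.
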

This smoothness provides a basis for estimating the complexity of Jordan's algorithm \cite{gilyen2019optimizing},
\begin{lemma}[Rephrased from Theorem 23 of~\cite{gilyen2019optimizing}]
  \label{lemma:jordan}
  Suppose the access to $f\colon[-1, 1]^N\rightarrow\mathbb{R}$ is given via a phase oracle $O_f$. If $f$ is $(2m+1)$-times differentiable and for all $\bm{x}\in[-1, 1]^N$,
  \begin{align}
    |\partial_{\bm{r}}^{2m+1} f(\bm{x})| \leq B \quad \text{ for $\bm{r} = \bm{x}/\norm{\bm{x}}$},
  \end{align}
  then there exists a quantum algorithm that outputs an approximate gradient $\bm{g}$ such that $\norm{\bm{g}-\nabla f(\bm{0})}_{\infty} \leq \epsilon$ with probability at least $1-\rho$ using
  \begin{align}
    \widetilde{O}\left( \max \left\{\frac{N^{1/2}B^{1/(2m)}N^{1/(4m)}\log(N/\rho)}{\epsilon^{1+1/(2m)}}, \frac{m}{\epsilon} \right\} \right)
  \end{align}
  queries to $O_f$, and $\widetilde{O}(N)$ additional 1- and 2-qubit gates.

  In particular, when $f(\bm x)$ is a polynomial of degree no greater than $2m$, the query complexity to $O_f$ becomes, 
  \begin{align}\label{J-polynomial}
      \widetilde{O} \left( \frac{m}{\epsilon} \right).
  \end{align}
\end{lemma}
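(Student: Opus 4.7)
The plan is to follow the high-order finite-difference variant of Jordan's quantum gradient algorithm, which is exactly Theorem~23 of~\cite{gilyen2019optimizing} that this lemma restates; the task is to transcribe the argument and verify that the query counts match the present normalization. Two ingredients are needed: (i) a central finite-difference formula of order $2m+1$ that extracts a directional derivative $\partial_{\bm r}f(\bm 0)$ from $2m$ samples of $f$ along a line through the origin, with truncation error controlled by the derivative bound $B$; and (ii) Jordan's observation that when a linear functional $\bm x\mapsto\langle\bm g,\bm x\rangle$ is accessed as a phase oracle on a hypercubic grid, an inverse quantum Fourier transform applied independently to each of the $N$ registers outputs an estimate of $\bm g$ coordinate-wise in a single shot.

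First I would build the finite-difference phase oracle. Let $a_{-m},\dots,a_m$ be the standard Savitzky--Golay weights, chosen so that the odd moments $\sum_\ell a_\ell\,\ell^k$ vanish for $k=3,5,\dots,2m-1$; then $\sum_\ell a_\ell f(\ell h\bm x)=h\,\langle\nabla f(\bm 0),\bm x\rangle+R$ with $|R|=O(h^{2m+1}B\,\|\bm x\|^{2m+1})$ by Taylor expansion along $\bm r=\bm x/\|\bm x\|$, where the bound on the $(2m+1)$st directional derivative is precisely the hypothesis of the lemma. Using $2m$ controlled calls to $O_f$ at grid points $\pm\ell h\bm x$ and a register that accumulates the weighted phase, I construct a unitary $U_\Phi\colon\ket{\bm x}\mapsto e^{i\Phi(\bm x)}\ket{\bm x}$ with $\Phi(\bm x)=2\pi G\sum_\ell a_\ell f(\ell h\bm x)/\delta$, for a scaling $\delta$ to be chosen. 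Each application of $U_\Phi$ costs $O(m)$ queries to $O_f$.

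Next I would invoke the Jordan QFT extraction. Preparing a uniform superposition over $\{-G/2,\dots,G/2-1\}^N$, applying $U_\Phi$, and then applying an inverse QFT to each of the $N$ registers produces, up to finite-difference truncation and QFT leakage from the nonlinear part of $\Phi$, a state concentrated near $\bigotimes_j\ket{\widehat g_j}$ where $\widehat g_j$ is a rounded representation of $\partial_jf(\bm 0)$ in grid units. Measuring yields an $\ell_\infty$ estimate whose worst-coordinate error is $O(\delta)+O(h^{2m}B\,N^m/\delta)$, the $N^m$ factor arising because the typical $\|\bm x\|$ on the grid is $\Theta(\sqrt N)$. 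Balancing these two contributions against $\epsilon$ forces $h\lesssim(\epsilon/B)^{1/(2m)}N^{-1/2}$ and $G\lesssim 1/(h\epsilon)$, and the advertised count $\widetilde O\!\left(N^{1/2}B^{1/(2m)}N^{1/(4m)}\epsilon^{-1-1/(2m)}\right)$ emerges after $O(\log(N/\rho))$ independent repetitions combined with coordinatewise-median amplification and a union bound over the $N$ components.

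Finally, for the polynomial case of degree at most $2m$ the truncation remainder $R$ vanishes identically, so $h$ may be taken as large as the geometry of the oracle allows; only the QFT discretization contributes, demanding $G=\Theta(1/\epsilon)$ and giving $\widetilde O(m/\epsilon)$ queries, which matches~\eqref{J-polynomial}. The additional $\widetilde O(N)$ 1- and 2-qubit gate count is dominated by the $N$ parallel inverse QFTs on registers of size $\log G$. The main obstacle in the analysis is the coupled parameter optimization: one must simultaneously control the per-coordinate $\ell_\infty$ error while keeping $|\Phi|\lesssim\pi$ everywhere on the grid to avoid QFT aliasing, and it is precisely this modular constraint that injects the extra $N^{1/(4m)}$ factor instead of the naive $N^{1/2}$, and that dictates how $h$ and $G$ must be jointly tuned once $B$ is given.
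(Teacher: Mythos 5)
The paper states this lemma as a rephrasing of Theorem~23 of~\cite{gilyen2019optimizing} and does not re-derive it; the polynomial clause follows immediately from that bound by setting $B=0$ (a polynomial of degree at most $2m$ has vanishing $(2m+1)$-th derivatives), which collapses the maximum to $m/\epsilon$. Your sketch is a faithful high-level reconstruction of the cited argument---finite differences to linearize the phase, Jordan's per-register inverse QFT, and the joint tuning of $h$ and $G$---with the polynomial case handled as the paper implicitly does. One bookkeeping misstep: you write the truncation contribution as $O(h^{2m}B\,N^m/\delta)$ on the grounds that $\norm{\bm x}\approx\sqrt{N}$, but the $(2m+1)$-th directional-derivative term actually carries $\norm{\bm x}^{2m+1}\approx N^{m+1/2}$, not $N^m$; it is this extra half power of $N$ that, after the $h$-vs-$G$ balance, produces the $N^{1/(4m)}$ factor alongside $N^{1/2}$ in the advertised bound (rather than the ``modular constraint'' to which you attribute it). Since the paper cites rather than proves the result, there is no competing derivation to contrast with yours; your approach is the same as the one in~\cite{gilyen2019optimizing}.
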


After adapting this algorithm to the objective function in \cref{eq:j1}, we find that,
\begin{lemma}
  \label{lemma:grad-est-J}
  Let $\widetilde{J}_1$ be defined as in \cref{eq:j1}. Suppose we are given access to the phase oracle $\mathcal{O}_{\widetilde{J}_1}$ for $\widetilde{J}_1$. Then, there exists a quantum algorithm that outputs an approximate gradient $\bm{g}$ such that
    $\norm{\bm{g}-\nabla \widetilde{J}_1} \leq \epsilon_g$
  with probability at least $1-\gamma$ using
    $\widetilde{\CO}\left(n_{\mathrm{c}}T\log(N/\gamma)/\epsilon_g\right)$
  queries to $\mathcal{O}_{\widetilde{J}_1}$, and $\widetilde{O}(N)$ additional 1- and 2-qubit gates.
\end{lemma}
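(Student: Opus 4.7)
The plan is to invoke the quantum gradient estimation of Gilyén et al.\ (Lemma \ref{lemma:jordan}) on the phase oracle $\mathcal{O}_{\widetilde{J}_1}$ that is already in hand. Lemma \ref{lemma:jordan} requires an a priori bound $B$ on the $(2m+1)$-th directional derivative of the objective, and it returns an $\ell_\infty$ error; we derive $B$ from Lemma \ref{lemma:partial-alpha} and then convert the $\ell_\infty$ guarantee into the $\ell_2$ guarantee demanded here.

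First I would translate Lemma \ref{lemma:partial-alpha} into a directional-derivative bound. Write $d := n_{\mathrm{c}} N$ for the total number of control parameters and expand
\begin{equation*}
\partial_{\bm r}^{k}\widetilde{J}_1 = \sum_{\alpha_1,\ldots,\alpha_k} r_{\alpha_1}\cdots r_{\alpha_k}\,\partial^{k}_{u_{\alpha_1}\cdots u_{\alpha_k}}\widetilde{J}_1.
\end{equation*}
Because Lemma \ref{lemma:partial-alpha} bounds every entry uniformly by $(k+1)!\,(\delta t\,\|\mu\|)^k$, and $\|\bm r\|_1 \leq \sqrt{d}\,\|\bm r\|_2 = \sqrt{d}$ for a unit direction, the directional derivative of order $2m+1$ satisfies
\begin{equation*}
|\partial_{\bm r}^{2m+1}\widetilde{J}_1| \leq (2m+2)!\,(\delta t\,\|\mu\|\,\sqrt{d})^{2m+1} =: B.
\end{equation*}
I would then plug this $B$ into Lemma \ref{lemma:jordan} with the $\ell_\infty$ target $\epsilon = \epsilon_g/\sqrt{d}$ (so that the resulting vector has $\ell_2$ error at most $\epsilon_g$) and choose $m = \Theta(\log(d/\epsilon_g))$. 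With this logarithmic choice the factors $[(2m+2)!]^{1/(2m)}$, $d^{1/(4m)}$, and $(\delta t\,\|\mu\|\,\sqrt{d})^{1/(2m)}$ all collapse to poly-logarithmic quantities, so the dominant contribution in Lemma \ref{lemma:jordan} reduces to $\widetilde{O}(d/\epsilon_g)$ queries, with a $\log(d/\gamma)$ factor coming from the success-probability amplification. Substituting $d = n_{\mathrm{c}} N$ with $N = \Theta(T)$ (from the piecewise-linear discretization of $u_\beta$) yields the claimed $\widetilde{O}(n_{\mathrm{c}} T \log(N/\gamma)/\epsilon_g)$ query complexity, and the auxiliary 1- and 2-qubit gates are inherited directly from the gate count of Lemma \ref{lemma:jordan} applied at input dimension $d$.

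The main obstacle is the balancing in the choice of $m$. Lemma \ref{lemma:jordan} has two competing terms, one with a $(1+1/(2m))$-th-power dependence on $\epsilon$ and another scaling as $m/\epsilon$; we need $m$ large enough to drive both the factorial $(2m+2)!$ and the power $d^{(2m+1)/2}$ inside $B$ harmless under the $1/(2m)$ root, yet small enough that the $m/\epsilon$ term does not dominate. The logarithmic choice threads this needle, but the argument must keep careful track of the $\sqrt{d}$ rescaling used to upgrade the $\ell_\infty$ guarantee of Lemma \ref{lemma:jordan} into the $\ell_2$ guarantee needed by the downstream PAGD-based optimizer, so that all the extra powers of $d$ are ultimately absorbed into the poly-logarithmic slack of the $\widetilde{O}$ notation.
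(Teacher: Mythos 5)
Your approach is the same as the paper's: feed the entry-wise derivative bound from Lemma~\ref{lemma:partial-alpha} into the Gily\'en et al.\ gradient estimator (Lemma~\ref{lemma:jordan}) via a directional bound, then pick $m$. The paper's own proof is terse --- it just says ``apply Theorem~23 \dots choosing the optimal value $m$'' --- so filling in the $\ell_1$-based conversion to a directional bound and the $\ell_\infty\!\to\!\ell_2$ rescaling, as you do, is exactly what is needed.

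However there is an algebra gap in your balancing step. You write
\begin{equation*}
B \;=\; (2m+2)!\,\bigl(\delta t\,\|\mu\|\sqrt{d}\bigr)^{2m+1},
\end{equation*}
and then assert that $[(2m+2)!]^{1/(2m)}$, $d^{1/(4m)}$, and $(\delta t\,\|\mu\|\sqrt{d})^{1/(2m)}$ all collapse to polylog under $m=\Theta(\log(d/\epsilon_g))$. But the factor appearing in $B^{1/(2m)}$ is $(\delta t\,\|\mu\|\sqrt{d})^{(2m+1)/(2m)}$, whose exponent is $1+\tfrac{1}{2m}\approx 1$, not $\tfrac{1}{2m}$. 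Only the leftover $\tfrac{1}{2m}$ piece of that exponent is polylogarithmically small; the leading $(\delta t\,\|\mu\|\sqrt{d})^{1}$ factor survives in full. Tracking that factor, the first branch of Lemma~\ref{lemma:jordan} is
\begin{equation*}
\widetilde{O}\!\left(\frac{\sqrt{d}\,\cdot B^{1/(2m)}\cdot\sqrt{d}}{\epsilon_g}\right)
\;=\;
\widetilde{O}\!\left(\frac{d\cdot \delta t\,\|\mu\|\sqrt{d}}{\epsilon_g}\right)
\;=\;
\widetilde{O}\!\left(\frac{d^{3/2}\,\delta t\,\|\mu\|}{\epsilon_g}\right)
\;=\;
\widetilde{O}\!\left(\frac{n_{\mathrm{c}}^{3/2}\,N^{1/2}\,T\,\|\mu\|}{\epsilon_g}\right),
\end{equation*}
using $d=n_{\mathrm{c}}N$ and $\delta t=T/N$. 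This exceeds the stated $\widetilde{O}(n_{\mathrm{c}} T\log(N/\gamma)/\epsilon_g)$ by a polynomial factor $n_{\mathrm{c}}^{1/2}N^{1/2}$, so the bound is not reached as claimed. (If instead $\delta t\,\|\mu\|\sqrt{d}\le 1$ --- i.e.\ $N$ is chosen large enough that $T\sqrt{n_{\mathrm{c}}/N}\le 1$ --- the offending factor becomes harmless, but then the complexity is $\widetilde{O}(d/\epsilon_g)=\widetilde{O}(n_{\mathrm{c}}N/\epsilon_g)$, which still does not reduce to $n_{\mathrm{c}}T/\epsilon_g$ unless $N=\widetilde{O}(T)$, whereas the paper's own Theorem~\ref{theo:main} proof uses $N=O(T^{3/2}/\epsilon^{1/2})$.) The paper separately argues in Section~\ref{sec:optimization} that $\widetilde J_1$ is a polynomial of degree $\widetilde{O}(T)$ and invokes the polynomial case~\eqref{J-polynomial} of Lemma~\ref{lemma:jordan} to get $\widetilde{O}(T/\epsilon)$ queries; to match the claim it would be cleaner to invoke that degree argument here, and one must still be explicit about whether the stated error is $\ell_\infty$ or, as the downstream PAGD requires, $\ell_2$ --- in the latter case the $\sqrt{d}$ rescaling costs an additional factor that neither the paper's two-line proof nor your calculation accounts for.
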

\begin{proof}
    Although the derivative bound in \cref{lemma: J1 drv} does not fulfill the condition in~\cite{gilyen2019optimizing}, we can apply Theorem 23  in~\cite{gilyen2019optimizing}. By choosing the optimal value $m$, we arrive at the complexity bound.
\end{proof}

With the gradient estimated, we can now move to the optimization algorithm. The PAGD algorithm in \cite{jin2017accelerated} assumes the gradient- and Hessian-Lipschitz condition, which we will prove here for the control problem. In particular, the  
smoothness constant $\ell$ and the Hessian-Lipschitz constant $\varrho$ can be approximated by the same technique as \cref{lemma: J1 drv}.
\begin{lemma}\label{lem:upper-bound-gradient}
    Let $\bm{\alpha}  = (\alpha_1, \ldots, \alpha_k) \in [N+1]^k$ be an index sequence, then $\tilde{J_1}$ is $l$-smooth and $\rho$-Hessian Lipschitz continuous, i.e.
    \begin{equation}
    \|\grad \tilde{J}_1(\bm u)-\grad \tilde{J}_1(\bm v)\|\leq l\|\bm u-\bm v\|, \text{and} \quad   \|\grad^2 \tilde{J}_1(\bm u)-\grad^2 \tilde{J}_1(\bm v)\|\leq \varrho\|\bm u-\bm v\|.
\end{equation}
   The smoothness parameters are given by,
    \begin{align}
        l = 3!(N+1)\delta t^2\|\mu\|^2\|\mathscr{O}\|, \quad \varrho = 4!(N+1)\delta t^3 \|\mu\|^3\|\mathscr{O}\|.
    \end{align}
\end{lemma}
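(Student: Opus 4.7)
The strategy is to deduce both smoothness constants purely from the componentwise partial-derivative bounds already supplied by \cref{lemma:partial-alpha}, combined with the textbook reduction from gradient-Lipschitz (resp.\ Hessian-Lipschitz) continuity to uniform operator-norm bounds on the Hessian (resp.\ third-derivative tensor). First I would recall that if $\sup_{\bm x}\|\nabla^2\widetilde J_1(\bm x)\|_{\mathrm{op}}\le l$ then
\begin{equation*}
\nabla\widetilde J_1(\bm u)-\nabla\widetilde J_1(\bm v)=\int_0^1 \nabla^2\widetilde J_1\bigl(\bm v+t(\bm u-\bm v)\bigr)(\bm u-\bm v)\,dt
\end{equation*}
immediately yields $l$-smoothness, and analogously a uniform bound $\|D^3\widetilde J_1(\bm x)[\bm w]\|_{\mathrm{op}}\le \varrho\|\bm w\|_2$ on the third-derivative form yields $\varrho$-Hessian-Lipschitz continuity after integrating along the same segment with $\nabla^2\widetilde J_1$ in place of $\nabla\widetilde J_1$.

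Second, I would convert the entrywise partial-derivative bounds from \cref{lemma:partial-alpha} into the required operator-norm bounds. For the Hessian, setting $k=2$ gives $|\partial_{i}\partial_{j}\widetilde J_1|\le 3!(\delta t\,\|\mu\|)^2\|\mathscr O\|=:B_2$ for every pair $(i,j)\in[N+1]^2$. For any unit vector $\bm w\in\mathbb R^{N+1}$,
\begin{equation*}
|\bm w^\top\nabla^2\widetilde J_1\,\bm w|\;\le\;\sum_{i,j}|w_i||w_j|\cdot B_2\;=\;B_2\,\|\bm w\|_1^{2}\;\le\;B_2(N+1)\|\bm w\|_2^{2},
\end{equation*}
where the last inequality is Cauchy--Schwarz $\|\bm w\|_1\le\sqrt{N+1}\,\|\bm w\|_2$. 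This reproduces the claimed $l=3!(N+1)(\delta t)^2\|\mu\|^2\|\mathscr O\|$.

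Third, I would repeat the template with $k=3$ and $B_3:=4!(\delta t\,\|\mu\|)^3\|\mathscr O\|$. The matrix $D^3\widetilde J_1(\bm x)[\bm w]$ has $(i,j)$-entry $\sum_k\partial_{ijk}\widetilde J_1(\bm x)\,w_k$, and bounding the trilinear form $\bm v^\top(D^3\widetilde J_1[\bm w])\bm v$ by $B_3\|\bm v\|_1^{2}\|\bm w\|_1$ and invoking the same $\ell_1$/$\ell_2$ comparison turns this into an operator-norm bound on $\nabla^2\widetilde J_1(\bm u)-\nabla^2\widetilde J_1(\bm v)$ linear in $\|\bm u-\bm v\|_2$ with the stated constant $\varrho$. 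The main obstacle, and really the only nontrivial bookkeeping in the proof, is tracking the combinatorial prefactors $(k+1)!$ inherited from \cref{lemma:partial-alpha} and the dimension factors $(N+1)$ arising from norm conversions so that the two constants take exactly the form stated; no analytic ingredients beyond \cref{lemma:partial-alpha} and standard multivariable calculus are needed.
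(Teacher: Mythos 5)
Your argument for the smoothness constant $\ell$ is fine and essentially equivalent to the paper's: the quadratic-form bound $|\bm w^\top\nabla^2\widetilde J_1\,\bm w|\le B_2\|\bm w\|_1^2\le (N+1)B_2\|\bm w\|_2^2$ establishes $\|\nabla^2\widetilde J_1\|\le (N+1)B_2$ with $B_2=3!\,\delta t^2\|\mu\|^2\|\mathscr O\|$, matching the paper's use of the max-row-sum inequality $\|A\|\le\|A\|_\infty$ for a symmetric matrix. Both routes pay exactly one factor of $(N+1)$ and agree with the lemma.

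The third-derivative step, however, has a concrete gap: the trilinear-form argument as you describe it does not reproduce the stated $\varrho$. Starting from $|\partial_{ijk}\widetilde J_1|\le B_3$ with $B_3=4!\,\delta t^3\|\mu\|^3\|\mathscr O\|$, you bound
\begin{equation*}
|\bm v^\top(D^3\widetilde J_1[\bm w])\bm v|\ \le\ B_3\|\bm v\|_1^2\|\bm w\|_1\ \le\ B_3(N+1)\|\bm v\|_2^2\cdot\sqrt{N+1}\,\|\bm w\|_2,
\end{equation*}
so the operator norm of $D^3\widetilde J_1[\bm w]$ is controlled only by $B_3(N+1)^{3/2}\|\bm w\|_2$ and your $\varrho$ comes out as $4!\,(N+1)^{3/2}\delta t^3\|\mu\|^3\|\mathscr O\|$ --- a factor $\sqrt{N+1}$ worse than the lemma. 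The $\ell_1$-to-$\ell_2$ comparison is invoked three times (twice on $\bm v$, once on $\bm w$) and costs $(N+1)^{3/2}$, not $(N+1)$, so the phrase ``with the stated constant $\varrho$'' does not follow from the argument given. The paper proceeds differently: it parametrizes along the segment $\bm u+\lambda(\bm u-\bm v)$, writes $\nabla^2\widetilde J_1(\bm u)-\nabla^2\widetilde J_1(\bm v)=\int_0^1\partial_\lambda\nabla^2\widetilde J_1\,\mathrm d\lambda$, and bounds each entry $|\partial_\lambda\partial^2_{ij}\widetilde J_1|$ directly by treating $\partial_\lambda$ as a \emph{single} directional derivative, so the contraction with $\bm u-\bm v$ is absorbed once, at the level of the $\Gamma_k$ estimates coming from the Lindblad ODE, rather than expanded as $\sum_k\partial_{ijk}(u_k-v_k)$; the max-row-sum bound $\|A\|_\infty=\max_i\sum_j|A_{ij}|$ is then applied to the symmetric matrix $\partial_\lambda\nabla^2\widetilde J_1$, summing over a single free index and paying only one $(N+1)$. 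To recover the stated constant you either need to adopt that structure, or establish a directional bound on $\Gamma_{\bm w}$ directly from the ODE (of the form $\|\Gamma_{\bm w}\|\lesssim\delta t\|\mu\|\|\bm w\|$) before the remaining single-index sum, instead of converting all three $\ell_1$ norms to $\ell_2$ independently.
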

The proof of this lemma is deferred to \Cref{sec:upper-bound-gradient}.

\section{Simulating open quantum systems with time-dependent Lindbladian}\label{sec:simulation}

Ref.~\cite[Section 6]{li2022simulating} sketched a method for simulating open quantum systems with time-dependent Lindbladian. In this section, we present the details of this simulation algorithm.

Motivated by the time scaling idea in \cite{BCSWW20},
we define a change-of-variable function as
\begin{align}\label{eq: t-scale}
  \mathrm{var}(t) \coloneqq \int^t_0 \dd s\, \norm{\mathcal{L}(s)}_{\mathrm{be}}.
\end{align}
By simulating the Lindblad dynamics on the new time scale, the overall complexity exhibits a better dependence on the norm of the Lindbladians in time. To this end, we need the following oracle to perform the inverse change-of-variable:
\begin{align}
  \mathcal{O}_{\mathrm{var}}\ket{t}\ket{z} = \ket{t}\ket{z\oplus \mathrm{var}^{-1}(t)}.
\end{align}
In addition, we need the following oracle to obtain the normalizing constant $\alpha_0(t)$ for $H(t)$ and $\alpha_j(t)$ for $L_j(t)$: for all $j = [m]$,
\begin{align}
  \mathcal{O}_{H, \mathrm{norm}}\ket{t}\ket{z} = \ket{t}\ket{z\oplus \alpha_0(t)}, \quad \text{and} \quad
  \mathcal{O}_{L_j, \mathrm{norm}}\ket{t}\ket{z} = \ket{t}\ket{z\oplus \alpha_j(t)}.
\end{align}

As in~\cite{li2022simulating}, we define the \emph{block-encoding norm} for a Lindbladian $\mathcal{L}$, denoted by $\norm{\mathcal{L}}_{\mathrm{be}}$ for normalization purposed:
\begin{align}
  \label{eq:be-norm}
    \norm{\mathcal{L}}_{\mathrm{be}} \coloneqq \alpha_0 + \frac{1}{2}\sum_{j=1}^m\alpha_j^2.
\end{align}

The goal of this section is to prove the following theorem.
\begin{theorem}
  \label{thm:ldb-td}
  Suppose we are given an $(\alpha_0(t), a, \epsilon')$-block-encoding $U_{H(t)}$ of $H(t)$, and an $(\alpha_j(t), a, \epsilon')$-block-encoding $U_{L_j(t)}$ for each $L_j(t)$ for all $0 \leq t \leq T$. Let $\norm{\mathcal{L}}_{\mathrm{be},1}$ be defined as
  \begin{align}
    \norm{\mathcal{L}}_{\mathrm{be},1} \coloneqq \int_0^T  \dd \tau\,\norm{\mathcal{L}(\tau)}_{\mathrm{be}},
  \end{align}
  Suppose further that $\epsilon' \leq \epsilon/(2t(m+1))$.
  Then, there exists a quantum algorithm that outputs a purification of $\tilde{\rho}_T$ of $\tilde{\rho}(T)$ where $\norm{\tilde{\rho}(T) - \mathcal{T} e^{\int_0^T\dd\tau\,\mathcal{L}(\tau)}(\rho_0)}_1 \leq \epsilon$ using
  \begin{align}
    O\left(\norm{\mathcal{L}}_{\mathrm{be},1}\,\left(\frac{\log(\norm{\mathcal{L}}_{\mathrm{be},1}/\epsilon)}{\log\log(\norm{\mathcal{L}}_{\mathrm{be},1}/\epsilon)}\right)^2\right)
  \end{align}
  queries to $U_{H(t)}$, $U_{L_j(t)}$, $\mathcal{O}_{\mathrm{var}}$, $\mathcal{O}_{H, \mathrm{norm}}$, and $\mathcal{O}_{L_j, \mathrm{norm}}$, and
    $\widetilde{O}\left((m+m)\norm{\mathcal{L}}_{\mathrm{be},1}\right)$
  additional 1- and 2-qubit gates, where $n$ is the number of qubits the Lindbladian is acting on.
\end{theorem}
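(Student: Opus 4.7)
The plan is to combine three ingredients: (i) the time-rescaling trick of~\cite{BCSWW20} to trade a worst-case norm bound for an $\ell_1$-norm bound, (ii) a segmented Duhamel/Dyson expansion of the Lindblad propagator, and (iii) the Kraus-form block-encoding machinery (\Cref{lemma:block-encoding-channel} and \Cref{lemma:sum-to-be}) to realize the resulting completely positive maps on a quantum circuit.

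First I would perform the change of variable $\tau = \mathrm{var}(t)$ defined in~\eqref{eq: t-scale}. Under this substitution, the Lindblad master equation becomes $\tfrac{d}{d\tau}\rho = \tilde{\mathcal{L}}(\tau)(\rho)$ with rescaled Lindbladian $\tilde{\mathcal{L}}(\tau) = \mathcal{L}(\mathrm{var}^{-1}(\tau))/\norm{\mathcal{L}(\mathrm{var}^{-1}(\tau))}_{\mathrm{be}}$. By construction $\norm{\tilde{\mathcal{L}}(\tau)}_{\mathrm{be}} \le 1$ for all $\tau$, and the total time in the new variable is exactly $\norm{\mathcal{L}}_{\mathrm{be},1}$. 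Accessing $\tilde{\mathcal{L}}(\tau)$ requires only $\mathcal{O}_{\mathrm{var}}$ composed with $U_{H(\cdot)}$, $U_{L_j(\cdot)}$, $\mathcal{O}_{H,\mathrm{norm}}$ and $\mathcal{O}_{L_j,\mathrm{norm}}$. From this point on, the analysis reduces to simulating a time-dependent Lindbladian of block-encoding norm at most $1$ for total time $\norm{\mathcal{L}}_{\mathrm{be},1}$.

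Next I would partition the rescaled interval $[0, \norm{\mathcal{L}}_{\mathrm{be},1}]$ into $r = \Theta(\norm{\mathcal{L}}_{\mathrm{be},1})$ segments of constant length $\Delta \le \ln 2$, and on each segment expand the propagator $\mathcal{T}\exp\bigl(\int \tilde{\mathcal{L}}\bigr)$ as a truncated Duhamel series of order $K$. The $k$-th term is an iterated integral of $k$-fold products of $\tilde{\mathcal{L}}(\tau_k)\cdots\tilde{\mathcal{L}}(\tau_1)$ acting on $\rho$; since $\norm{\tilde{\mathcal{L}}}_{\mathrm{be}}\le 1$ and $\Delta$ is constant, truncating at $K = O\bigl(\log(\norm{\mathcal{L}}_{\mathrm{be},1}/\epsilon)/\log\log(\norm{\mathcal{L}}_{\mathrm{be},1}/\epsilon)\bigr)$ yields per-segment error $O(\epsilon/r)$ in diamond norm. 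Each iterated integral is then discretized by a compound Newton--Cotes style quadrature on a uniform grid of $M$ nodes per variable; the discretization error is controlled by continuity (not higher-order smoothness) of $\tilde{\mathcal{L}}$, so it suffices to take $M = \mathrm{poly}(K/\epsilon)$, whose logarithm appears only inside the $\widetilde{O}$.

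To implement each discretized term, note that each Lindbladian $\tilde{\mathcal{L}}(\tau)$ has an explicit Kraus dilation consisting of the Hamiltonian and jump contributions, whose block-encodings we obtain from the given oracles (with the rescaling factors $1/\norm{\mathcal{L}(\cdot)}_{\mathrm{be}}$ absorbed via $\mathcal{O}_{H,\mathrm{norm}}$ and $\mathcal{O}_{L_j,\mathrm{norm}}$). A $k$-fold product becomes a tensor product of block-encodings on $k$ copies of the ancilla, and the full truncated series, summed over the quadrature nodes, is realized as a single LCU via \Cref{lemma:sum-to-be}. The resulting completely positive map acting on the state register is implemented via \Cref{lemma:block-encoding-channel}, which produces a purification of the updated state after each segment. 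Chaining $r$ segments gives the purification $\tilde\rho_T$, with total query cost $r\cdot \widetilde{O}(K) = O\bigl(\norm{\mathcal{L}}_{\mathrm{be},1}(\log/\log\log)^2\bigr)$ and the stated additional gate count coming from the $m$ jump operators and the $\widetilde{O}(1)$-qubit arithmetic required by $\mathcal{O}_{\mathrm{var}}$.

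The main obstacle is the quadrature step: without derivative bounds on $\mathcal{L}(t)$, Gaussian quadrature of~\cite{li2022simulating} is unavailable, so I must argue that a compound low-order rule on a sufficiently fine grid achieves the desired precision while not inflating the query count beyond logarithmic factors. A secondary subtlety is the careful tracking of the $(m+1)$ Kraus-operator block-encoding errors $\epsilon'$ through \Cref{lemma:block-encoding-channel}; this is where the hypothesis $\epsilon' \le \epsilon/(2t(m+1))$ is consumed, propagating to the per-segment bound and then, by subadditivity of the diamond norm under composition, to the $\epsilon$ guarantee of the overall simulation.
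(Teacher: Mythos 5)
Your rescaling step and the segmented per-constant-time analysis match the paper's high-level plan, but the middle of the argument has a genuine gap. You expand the propagator into terms whose ``$k$-th term is an iterated integral of $k$-fold products of $\tilde{\mathcal{L}}(\tau_k)\cdots\tilde{\mathcal{L}}(\tau_1)$'' and then assert that ``each Lindbladian $\tilde{\mathcal{L}}(\tau)$ has an explicit Kraus dilation.'' Neither claim holds: the Lindbladian generator is not a completely positive map (it contains the negative piece $-\tfrac12\{L_j^\dag L_j,\cdot\}$ and has vanishing trace), so it has no Kraus decomposition, and the products $\tilde{\mathcal{L}}(\tau_k)\cdots\tilde{\mathcal{L}}(\tau_1)$ in a direct Dyson expansion are likewise not CP maps. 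Consequently \cref{lemma:block-encoding-channel}, which requires block-encodings of Kraus operators of a completely positive map, does not apply to these terms, and the implementation you sketch cannot be carried out. The paper instead splits $\mathcal{L}=\mathcal{L}_\mathrm{D}+\mathcal{L}_\mathrm{J}$ with $\mathcal{L}_\mathrm{D}(\rho)=J\rho+\rho J^\dag$, $J=-iH-\tfrac12\sum_j L_j^\dag L_j$, and $\mathcal{L}_\mathrm{J}(\rho)=\sum_j L_j\rho L_j^\dag$, and applies Duhamel's principle with respect to \emph{this} split. The resulting terms $\mathcal{F}_k(s_k,\ldots,s_1)=\mathcal{K}[V(s_k,t)]\mathcal{L}_\mathrm{J}(s_k)\cdots\mathcal{L}_\mathrm{J}(s_1)\mathcal{K}[V(0,s_1)]$ are compositions of CP maps with explicit Kraus operators $V(s_k,t)L_{\ell_k}(s_k)V(s_{k-1},s_k)\cdots L_{\ell_1}(s_1)V(0,s_1)$, and it is to these that \cref{lemma:block-encoding-channel} is applied. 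This in turn requires a separate inner truncated Dyson series to block-encode $V(s,t)=\mathcal{T}e^{\int_s^t J(\tau)\,\mathrm{d}\tau}$ (giving the second $\log/\log\log$ factor in the query count), and the a priori contraction bound $\norm{V(0,t)}\le 1$, which follows from the negativity of the Hermitian part of $J$, is what keeps the normalizing constants of the resulting LCU under control. Your proposal is missing the drift/jump split, the auxiliary evolution $V(s,t)$, and the contraction bound; together these constitute the technical heart of the theorem.

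A lesser point: the rectangle-rule discretization the paper uses is not controlled by continuity of $\tilde{\mathcal{L}}$ alone. Its error bound carries factors of $\dot{J}_{\max}$ and $\dot{L}_{j,\max}$, so uniform bounds on the first time-derivatives of the operators are needed to choose the grid sizes $q$ and $M$; your claim that ``continuity (not higher-order smoothness)'' suffices would have to be repaired to at least Lipschitz/$\mathrm{C}^1$ regularity, consistent with the paper's standing assumption $u_\beta\in\mathrm{C}^2$.
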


\subsection{High-level overview of the simulation algorithm}
Here we briefly outline the techniques that led to the stated complexity.
Let the Hamiltonian $H(t) = H_0 +\sum_{\beta}u_{\beta}(t)\mu_{\beta}$, we rewrite equation \eqref{eq:Lindblad} as follows
\begin{align}
\label{eq:Lindblad2}
    \frac{d}{dt}\rho &= \mathcal{L}(t)(\rho)\coloneqq -i[H(t),\rho]+\sum_{j=1}^m(L_j(t)\rho L_j^{\dag}(t)-\frac{1}{2}\{L_j(t)^{\dag}L_j(t),\rho\})    \\
\label{eq:Lindblad3}
    & = \mathcal{L}_D(t)(\rho)+\mathcal{L}_J(t)(\rho).
\end{align}
Here we have decomposed the Lindbladian into a drift term $\mathcal{L}_\mathrm{D}(t)$ and a jump term $\mathcal{L}_\mathrm{J}(t)$:
\begin{align}
  \mathcal{L}_\mathrm{D}(t)(\rho) &= -i[H(t),\rho]-\frac{1}{2}\sum_{j=1}^m\{L_j(t)^{\dag}L_j(t),\rho\} =J(t)\rho+\rho J(t)^{\dag},  \\
  \mathcal{L}_\mathrm{J}(t)(\rho) &= \sum_{j=1}^m L_j(t)\rho L_j(t)^{\dag},
\end{align}
where $J(t) \coloneqq -iH(t)-\frac{1}{2}\sum_{j=1}^mL_j(t)^{\dag}L_j(t).$

With the known initial value $\rho(0) = \rho_0$, the solution of \cref{eq:Lindblad3} can be written as the linear combination of the following equations.
\begin{align}
\label{eq: superposition}
    \begin{cases}
      \partial_t\rho &= \mathcal{L}_\mathrm{D}(t)(\rho) \\
        \rho(0) &= \rho_0
      \end{cases},\quad \text{ and } \quad
    \begin{cases}
      \partial_t\rho &= \mathcal{L}_\mathrm{D}(t)(\rho) +\mathcal{L}_\mathrm{J}(t)(\rho)\\
        \rho(0)& =0
    \end{cases}.
\end{align}
Specifically, for the first part of \cref{eq: superposition}, the density operator follows $\rho(t) = V(0,t)\rho_0V(0,t)^{\dag} = \mathcal{K}[V(0,t)](\rho_0)$, where $V(s,t) = \mathcal{T}e^{\int_s^tJ(\tau)\mathrm{d}\tau}$ is the time-ordered exponential of $J$. A brief introduction of time-ordered exponential can be found in \cref{app: timeordered}. For the second part of \cref{eq: superposition}, the density operator follows $\rho(t) = \int_0^tg(t,s)\mathrm{d}s$, where the function $g(t,s)$ satisfying
\begin{align}
  \partial_tg(t,s) &= \mathcal{L}_\mathrm{D}(t)(g(t,s)), \text{ and }
  \lim_{t\rightarrow s}g(t,s) = \mathcal{L}_\mathrm{J}(s)(\rho(s)).
\end{align}
By using time-ordered evolution operator and Duhamel's principle, the solution of \cref{eq:Lindblad2} can be expressed as
\begin{align}
\label{eq:Duhamel}
  \rho(t) = \mathcal{K}[V(0,t)](\rho_0)+\int_0^t\mathcal{K}[V(s,t)](\mathcal{L}_\mathrm{J}(s)(\rho(s)))\,\dd s.
\end{align}
The time-ordered exponential $V(0,t)$ can be approximated by the truncated Dyson series (see \cref{app: Dyson} for details),
\begin{align}
  \label{eq:v0t}
  V(0,t) =\mathcal{T}e^{\int_0^tJ(\tau)\mathrm{d}\tau} \approx \sum_{k=0}^K \frac{1}{k!} \mathcal{T} \int_0^t\, \dd \bm{\tau} J(\tau_k)\cdots J(\tau_1),
\end{align}
where $\mathcal{T}\int_0^t\,\dd \bm{\tau}(\cdot)$ denote an integration over a $k$-tuple of time variables $(\tau_1, \ldots, \tau_k)$ while keeping the time ordering: $\tau_1 \leq \tau_2 \leq \cdots \leq \tau_k$.
Thus,
\begin{align}
  \label{eq:vst-1}
    V(s,t) &= \mathcal{T}e^{\int_s^tJ(\tau)\mathrm{d}\tau} = \mathcal{T}e^{\int_0^{t-s}J(s+\tau)\mathrm{d}\tau}
    \\
    \label{eq:vst-2}
           &\approx \sum_{k=0}^K \frac{1}{k!} \int_0^{t-s} \,\dd \bm{\tau}\mathcal{T} [J(\tau_k)\cdots J(\tau_1)].
\end{align}

As in~\cite{KSB19}, we use the rectangle rule to approximate the integral in the truncated Dyson series. Note that more efficient quadratures could be potentially used as we use later approximation the integral in \cref{eq:Duhamel}, for instance, the scaled Gaussian quadrature; however such methods require upper bounds on higher-order derivatives of $J(t)$, which are not readily available.

By applying Duhamel's principle (see \cref{eq:Duhamel}) several times, we obtain the following approximation with notations  introduced in~\cite{cao2021structure}.
\begin{align}
  \label{eq:gk}
        \mathcal{G}_K(t) \coloneqq \mathcal{K}[V(0,t)]+\sum_{k=1}^K\int_{0\leq s_1\leq \cdots \leq s_k\leq t}\mathcal{F}_k(s_k,\ldots,s_1)\,\dd s_1\cdots \dd s_k,
\end{align}
where
\begin{align}
  \label{eq:fk}
  \mathcal{F}_k(s_k,\ldots,s_1) \coloneqq \mathcal{K}[V(s_k,t)]\mathcal{L}_\mathrm{J}(s_k)
  \cdots\mathcal{K}[V(s_1,s_2)]\mathcal{L}_\mathrm{J}(s_1)\mathcal{K}[V(0,s_1)].
\end{align}
This yields an approximation of the evolution superoperator $\rho(t) \approx \mathcal{G}_K(t)(\rho(0)).$ 
The key observation is that $ \mathcal{F}_k$ is a composition of CPTP maps. The second term of $\mathcal{G}_K(t)$ can be approximated by using truncated Dyson series.

\subsection{Detailed constructions}
In this subsection, we present the construction of the time-dependent Lindbladian simulation algorithm. For the sake of conciseness, we omit the convoluted details in treating the time-ordering of the truncated Dyson series. All these details are postponed to \cref{sec:simulaiton-details}.

\paragraph*{The scaled evolution time}
Recall that we introduced the rescaled time in \cref{eq: t-scale}, and let define $\hat{t}$ as
\begin{align}
  \label{eq:scaled-t}
  \hat{t}= \mathrm{var}(t) = \int^t_0 \dd s\, \norm{\mathcal{L}(s)}_{\mathrm{be}}.
\end{align}

Correspondingly, we follow the rescaled Lindblad equation, by defining $\hat{\rho}(\hat{t})= \rho((\mathrm{var}^{-1}(\hat{t})),$
which, from \cref{eq:Lindblad}, satisfies the equation
\begin{align}
  \label{eq:lindblad-scaled}
  \frac{\dd}{\dd\hat{t}} \hat{\rho}(\hat{t}) =  \widehat{\mathcal{L}}(\hat{t})  \hat{\rho}(\hat{t}),
\end{align}
where the rescaled Lindbladian is as,
\begin{align}
  \label{eq:scaled-l}
  \widehat{\mathcal{L}}(\hat{t}) = \frac{\mathcal{L}(\mathrm{var}^{-1}(\hat{t}))}{\norm{\mathcal{L}(\mathrm{var}^{-1}(\hat{t}))}_{\mathrm{be}}}.
\end{align}
This rescaling can be achieved by defining
\begin{align}
  \widehat{H}(\hat{t}) \coloneqq \frac{H(\mathrm{var}^{-1}(\hat{t}))}{\norm{\mathcal{L}(\mathrm{var}^{-1}(\hat{t}))}_{\mathrm{be}}}, \text{ and } 
  \widehat{L}(\hat{t}) \coloneqq \frac{L(\mathrm{var}^{-1}(\hat{t}))}{\sqrt{\norm{\mathcal{L}(\mathrm{var}^{-1}(\hat{t}))}_{\mathrm{be}}}}.
\end{align}
The \emph{scaled effective Hamiltonian} (not Hermitian), 
denoted by $\widehat{J}(\hat{t})$, is therefore defined as
\begin{align}
  \widehat{J}(\hat{t}) \coloneqq \frac{J(\mathrm{var}^{-1}(\hat{t}))}{\norm{\mathcal{L}(\mathrm{var}^{-1}(\hat{t}))}_{\mathrm{be}}}.
\end{align}
As a result, simulating $\widehat{\mathcal{L}}$ for time $\hat{t} = \mathrm{var}(t)$ is equivalent to simulating $\mathcal{L}$ for time $t$. Moreover, the block-encoding-norm of $\widehat{\mathcal{L}}$ is at most 1 because of \cref{eq:scaled-l}.

To simplify the notation, in the remainder of this section we assume the Lindbladian is already scaled so that we can drop the $\widehat{\cdot}$ notation for the scaled operators and evolution time.

\paragraph*{LCU construction}
Let $U_{J}(t)$ be an $(\alpha, a, \epsilon)$-block-encoding of $J(t)$. Given the oracles as in \cref{eq:oh,eq:ohval}, the unitary $\sum_t\ketbra{t}{t}\otimes U_{J}(t)$ for discretized times $t$ can be implemented. Using \cref{lemma:sum-to-be}, a block-encoding of $V(s, t)$ can also be implemented. More specifically, we use the rectangle rule as in~\cite{KSB19} to approximate the integrals in \cref{eq:vst-2}:
\begin{align}
  \label{eq:dyson-riemann}
  \widetilde{V}(s, t) = \sum_{k=0}^{K'}\frac{(t-s)^k}{M^kk!}\sum_{j_1,\ldots,j_k=0}^{M-1}\mathcal{T}J(t_{j_k})\cdots J(t_{j_1}).
\end{align}
Here the time-ordered term is defined as follows, for each tuple $t_k, t_{k-1}, \ldots, t_1$,
\[
\mathcal{T}J(t_k)\cdots J(t_1)
=
    J(t_{j_k})\cdots J(t_{j_1}), 
\]
where $t_{j_k}, \ldots, t_{j_1}$ is the permutation of $t_k, t_{k-1}, \ldots, t_1$ that is in ascending order. 

The error of the above approximation is bounded by
\begin{align}
  \norm{V(s, t) - \widetilde{V}(s, t)} \leq O\left(\frac{(t-s)^{K'+1}}{(K'+1)!} + \frac{(t-s)^2 \dot{J}_{\max}}{M}\right),
\end{align}
where $\dot{J}_{\max} \coloneqq \max_{\tau \in [0, t]} \norm{ \frac{\dd J(\tau)}{\dd \tau}}$.

Now, we need to approximate the integrals in \cref{eq:gk}. In~\cite{li2022simulating}, Gaussian quadratures were used to approximate similar integrals in the time-independent case, which yields a simpler LCU construction. Unfortunately, using  such efficient quadrature rules in the time-dependent case requires  bounding the norm of high-order derivatives of $V(s, t)$, which is not directly given. Instead, we use the simple Riemann sums for treating the integrals, where the LCU constructions follow closely from the ones in~\cite{KSB19}.

More specifically, we uniformly divide the evolution time $t$ into $q$ intervals, and let $t_j = tj/q$ for $j \in \{0, \ldots, M-1\}$. Assuming $V(s, t)$ is implemented perfectly, we consider the following superoperator,
\begin{align}
  \frac{t^k}{k!q^k}\sum_{j_1,\ldots,j_k=0}^q \mathcal{T} \mathcal{F}_k(t_{j_k}, \ldots, t_{j_1}),
\end{align}
which approximates the integrals in \cref{eq:gk}. To bound the quality of this approximation, we need to bound the derivative of $\mathcal{F}_k$. We begin by bounding $\norm{V(0, t)}$, which can be deduced from the stability of the differential equation $\frac{\dd}{\dd t}\bm y = J(t) \bm y $, which can be studied by examining the eigenvalues of the Hermitian part of $J(t)$  \cite[Lemma 1]{brauer1966perturbations}. Since the Hermitian part of $J(t)$ is semi-negative definite, one has $\norm{\bm y(t)} \leq  \norm{\bm y(0)}$, which implies that
\begin{align}
  \label{eq:norm-v0t}
  \norm{V(0, t)} \leq 1.
\end{align}

Since $\frac{\dd}{\dd t}V(0, t) = J(t)V(0, t)$, the derivative of $V(0, t)$ can be bounded by
\begin{align}
    \label{eq:dv0t}
    \norm{\frac{\dd}{\dd t}V(0, t)} \leq \dot{J}_{\max}.
\end{align}
We further consider $\frac{\dd}{\dd t}\mathcal{K}[V(0, t)]$. For any operator $A$ with $\norm{A}_1 = 1$, we have
\begin{align}
  \frac{\dd}{\dd t}\mathcal{K}[V(0, t)](A) = \left(\frac{\dd}{\dd t}V(0, t)\right) A V(0, t)^{\dag} + V(0, t)A\frac{\dd}{\dd t}V(0, t)^{\dag}.
\end{align}
We then have
\begin{align}
  \norm{\frac{\dd}{\dd t}\mathcal{K}[V(0, t)](A)}_1 \leq  2\dot{J}_{\max},
\end{align}
which follows from \cref{eq:dv0t,eq:norm-v0t} and the fact that $\norm{BAC}_1 \leq \norm{B}\norm{A}_1\norm{C}$ for matrices $A, B, C$. This bound easily extends to the diamond norm by tensoring the Kraus operator with an identity operator to extend it to a larger space. Hence, we have
\begin{align}
  \label{eq:diamondnorm-kv0t}
  \norm{\frac{\dd}{\dd t}\mathcal{K}[V(0, t)](A)}_{\diamond} \leq  2\dot{J}_{\max}.
\end{align}
Let $\dot{L}_{j,\max}$ be defined as $\dot{L}_{j,\max} \coloneqq \max_{\tau \in [0, t]} \norm{\frac{\dd}{\dd \tau} L_{j}(\tau)}.$
Then, using similar arguments, we can bound the derivative of $\mathcal{L}_{\mathrm{J}}(t)$ as
\begin{align}
  \label{eq:diamondnorm-ljt}
  \norm{\frac{\dd}{\dd t}\mathcal{L}_{\mathrm{J}}(t)}_{\diamond} \leq 2\sum_{j=1}^m\dot{L}_{j, \max},
\end{align}
where we have assumed that the Lindbladian is scaled as in \cref{eq:scaled-l}, i.e., $\norm{L_j} \leq 1$.
For the derivative of $\mathcal{F}_k$, we have
\begin{equation}
    \begin{aligned}
   &\quad \frac{\dd}{\dd t_j}\mathcal{F}_k  \\ 
  &= \mathcal{K}[V(t_k,t)]\mathcal{L}_\mathrm{J}(t_k)\cdots \frac{\dd}{\dd t_j}\left(\mathcal{K}[V(t_{j},t_{j+1})]\mathcal{L}_\mathrm{J}(t_{j})\mathcal{K}[V(t_{j-1},t_{j})]\right)\mathcal{L}_\mathrm{J}(t_{j-1})\cdots\mathcal{K}[V(0,t_1)] \\
                                   & = \mathcal{K}[V(t_k,t)]\mathcal{L}_\mathrm{J}(t_k)\cdots \frac{\dd}{\dd t_j}(\mathcal{K}[V(t_{j},t_{j+1})])\mathcal{L}_\mathrm{J}(t_{j})\mathcal{K}[V(t_{j-1},t_{j})]\mathcal{L}_\mathrm{J}(t_{j-1})\cdots\mathcal{K}[V(0,t_1)] \\
                                   & + \mathcal{K}[V(t_k,t)]\mathcal{L}_\mathrm{J}(t_k)\cdots \mathcal{K}[V(t_{j},t_{j+1})]\frac{\dd}{\dd t_j}(\mathcal{L}_\mathrm{J}(t_{j}))\mathcal{K}[V(t_{j-1},t_{j})]\mathcal{L}_\mathrm{J}(t_{j-1})\cdots\mathcal{K}[V(0,t_1)] \\
                                   & + \mathcal{K}[V(t_k,t)]\mathcal{L}_\mathrm{J}(t_k)\cdots \mathcal{K}[V(t_{j},t_{j+1})]\mathcal{L}_\mathrm{J}(t_{j})\frac{\dd}{\dd t_j}(\mathcal{K}[V(t_{j-1},t_{j})])\mathcal{L}_\mathrm{J}(t_{j-1})\cdots\mathcal{K}[V(0,t_1)].
\end{aligned}
\end{equation}

Again, assume the Lindbladian is scaled as in \cref{eq:scaled-l}, the above expression of $\frac{\dd}{\dd t_j}\mathcal{F}_k$ together with \cref{eq:diamondnorm-kv0t,eq:diamondnorm-ljt} implies that 
\begin{align}
  \norm{\frac{\dd}{\dd t_j}\mathcal{F}_k}_{\diamond} \leq 4\dot{J}_{\max}+2\sum_{j=1}^m\dot{L}_{j, \max}.
\end{align}
This implies that the error for using the Riemann sums can be bounded by
\begin{align}
  &\norm{\mathcal{G}_K - \mathcal{K}[V(0, t)] - \frac{t^k}{k!q^k}\sum_{k=1}^K\sum_{j_1,\ldots,j_k=0}^q \mathcal{T} \mathcal{F}_k(t_{j_k}, \ldots, t_{j_1})}_{\diamond} \nonumber\\
  &=
  \norm{\sum_{k=1}^K\int_{0\leq s_1\leq \cdots \leq s_k\leq t}\mathcal{F}_k(s_k,\ldots,s_1)\,\dd s_1\cdots \dd s_k - \frac{t^k}{k!q^k}\sum_{k=1}^K\sum_{j_1,\ldots,j_k=0}^q \mathcal{T} \mathcal{F}_k(t_{j_k}, \ldots, t_{j_1})}_{\diamond} \notag \\
  &\leq \sum_{k=1}^K\frac{t^2}{q}\cdot \left(4\dot{J}_{\max}+2\sum_{j=1}^m\dot{L}_{j, \max}\right) \\
  \label{eq:error-intermediate1}
  &= \frac{Kt^2}{q}\cdot \left(4\dot{J}_{\max}+2\sum_{j=1}^m\dot{L}_{j, \max}\right). 
\end{align}
In addition, it is easy to see that the error caused by using Duhamel's principle is
\begin{align}
  \label{eq:error-intermediate2}
  \norm{e^{\mathcal{T}\int_0^t\dd\tau\,\mathcal{L}(\tau)} - \mathcal{G}_K}_{\diamond} \leq \frac{(2t)^{K+1}}{(K+1)!}.
\end{align}
It follows from \cref{eq:error-intermediate1,eq:error-intermediate2} that
\begin{equation}
    \begin{aligned}
  \label{eq:error-intermediate3}
 & \norm{e^{\mathcal{T}\int_0^t\dd\tau\,\mathcal{L}(\tau)} - \mathcal{K}[V(0, t)] - \frac{t^k}{k!q^k}\sum_{j_1,\ldots,j_k=0}^q \mathcal{T} \mathcal{F}_k(t_{j_k}, \ldots, t_{j_1})}_{\diamond} \\ 
      \leq  & \frac{(2t)^{K+1}}{(K+1)!} + \frac{Kt^2}{q}\left(4\dot{J}_{\max}+2\sum_{j=1}^m\dot{L}_{j, \max}\right). 
\end{aligned}
\end{equation}

Finally, we have the following LCU form:
\begin{align}
  \label{eq:tildegk}
  \widetilde{\mathcal{G}}_K \coloneqq \mathcal{K}[\widetilde{V}(0, t)] + \sum_{k=1}^K\frac{t^k}{q^k}\sum_{j_1,\ldots,j_k=0}^q \widetilde{\mathcal{F}}_k(t_{j_k}, \ldots, t_{j_1}) ,
\end{align}
where $\widetilde{\mathcal{F}}_K$ is an approximation of \cref{eq:fk} by using $\widetilde{V}(s, t)$ instead of $V(s, t)$, i.e.,
\begin{align}
  \label{eq:tildefk}
\!\!  \widetilde{\mathcal{F}}_k(s_k,\ldots,s_1) \coloneqq \mathcal{K}[\widetilde{V}(s_k,t)]
  \mathcal{L}_J(s_k)\mathcal{K}[\widetilde{V}(s_{k-1},s_k)]
  \cdots\mathcal{K}[\widetilde{V}(s_1,s_2)]\mathcal{L}_J(s_1)\mathcal{K}[\widetilde{V}(0,s_1)].
\end{align}
We use the same compression scheme as in~\cite{KSB19} to deal with the time-ordering in \cref{eq:dyson-riemann,eq:tildegk}. Note that implementing the LCU requires additional $O(KK'm(\log M+\log q + n))$ 1- and 2-qubit gates.

\paragraph*{Complexity analysis}
We first analyze the normalizing constant for the LCU implementation. Recall that we are working with scaled operators, so the normalizing factors are at most 1. For the implement of $V(0, \hat{t})$, we can use, for example, the LCU construction involving quantum sort as in~\cite{KSB19} for implement \cref{eq:dyson-riemann}. If we further assume the implementation uses an infinite Dyson series, the normalizing constants of the block-encoding $\mathcal{K}[V(0, t)]$ is upper bounded by 
\begin{align}
  \label{eq:sos-kv0t}
  \sum_{k=0}^{\infty}\frac{t^{k}}{k!} = e^{t}.
\end{align}
As a result, the sum-of-squares of the normalizing constants of the Kraus operators of $\mathcal{F}_k(\hat{t}_k,\ldots, \hat{t}_1)$ can be bounded by
\begin{align}
  \sum_{j_1,\ldots,j_k=0}^me^{2(t-s_k)}e^{2(s_{k}-s_{k-1})}\ldots e^{2(s_{1}-0)} = e^{2t}.
\end{align}
Recall that the normalizing constant for $L_j$ is 1 since the Lindbladian is rescaled.
For the second term in \cref{eq:gk}, the sum-of-squares of the normalizing constants of the Kraus operators can be bounded by
\begin{align}
  \label{eq:sos-f}
  e^{2t}\frac{t^k}{k!q^k}q^k = e^{2t}\frac{t^k}{k!}.
\end{align}
By \cref{eq:sos-kv0t,eq:sos-f}, we have that the sum-of-squares of the normalizing constants of the Kraus operators of the LCU in \cref{eq:gk} can then be bounded by
\begin{align}
  e^{2t} + \sum_{k=1}^Ke^{2t}\frac{t^k}{k!}\leq e^{2t} + e^{3t}.
\end{align}

Therefore, it suffices to set
\begin{align}
  \label{eq:t-constant}
  t = \Omega(1)
\end{align}
to achieve constant success probability when using \cref{lemma:block-encoding-channel}. Then, we use the oblivious amplitude amplification for channels~\cite{cleve2016efficient} to boost the success probability to 1 with constant applications of \cref{lemma:block-encoding-channel}.
For the error bound in \cref{eq:error-intermediate3}, assume for now that the second error term is dominated by the first (by some choice of $q$ to be determined). It suffices to set $K = \frac{\log(1/\epsilon)}{\log\log(1/\epsilon)}$
to make the total error at most $\epsilon/2$, because of the choice of $t$ in \cref{eq:t-constant}. The choice of $q$ satisfies
\begin{align}
  q = \Theta\left(\frac{2K}{\epsilon}\left(4\dot{J}_{\max}+2\sum_{j=1}^m\dot{L}_{j, \max}\right)\right). 
\end{align}

Now, we deal with the error from truncated Dyson series and Riemann sum to implement $V(s, t)$. By \cref{eq:dyson-riemann}, we can choose $M$ large enough (determined later) so that the second error term is dominated by the first. Then, using \cite[Lemma 7]{li2022simulating}, we have
\begin{align}
  \norm{\mathcal{F}_k(s_k,\ldots,s_1) - \widetilde{\mathcal{F}}_k(s_k,\ldots,s_1)}_{\diamond} \leq \frac{8e^{t}}{(K'+1)!}2^{k+1}t^{K'+1}.
\end{align}
Further, using the analysis as in~\cite{li2022simulating}, we can bound the total approximation error (with appropriate choice of $M$ to be determined later) as
\begin{align}
  \label{eq:overall-error}
  \norm{\mathcal{T}e^{\int_0^t\mathcal{L}(\tau)\dd\bm{\tau}} - \widetilde{\mathcal{G}}_K}_{\diamond} \leq \frac{32e^{5t}t^{K'+2}}{(K'+1)!}.
\end{align}
With the choice of $t$ in \cref{eq:t-constant}, we can choose
\begin{align}
  K' = \frac{\log(1/\epsilon)}{\log\log(1/\epsilon)}
\end{align}
so that the total error is bounded by $\epsilon$. For the choice of $M$, we need to make sure the second error term in \cref{eq:dyson-riemann} is dominated by the first term. Hence we can choose
\begin{align}
  M = \Theta\left(\frac{\dot{J}_{\max}}{\epsilon}\right).
\end{align}

It remains to analyze the cost for the LCU implementation, which is the same as the analyses in~\cite{li2022simulating} and~\cite{KSB19}. Note that the dependence on $M$ is logarithmic if the compressed scheme is used in~\cite{KSB19} for implementing $\widetilde{V}(s, t)$. The total gate cost is now upper bounded by $O(KK'm(\log M + \log q + n))$. Further note that  the error $\epsilon'$ brought by the block-encoding can be eventually transferred to $\mathcal{L}$ causing an $(m+1)\epsilon'$ error on $\mathcal{L}$ in terms of the diamond norm, and the accumulated error for evolution time $t$ is then at most $t(m+1)\epsilon'$. As a result, choosing $\epsilon' \leq \epsilon/(2t(m+1))$ suffices to ensure the total error is at most $\epsilon$.

Recall that the above analysis is based on the scaled version of $\mathcal{L}$ defined in \cref{eq:scaled-l}, and the evolution time is scaled as in \cref{eq:scaled-t}. For arbitrary evolution time $\hat{t}$, we apply the above procedure $O(\hat{t})$ times with precision $\epsilon' = \epsilon/\hat{t}$. This gives the desired complexity in \cref{thm:ldb-td}. Lastly, it is important to note that the LCU circuit yields a purification of $\rho(t)$. This completes the proof of \cref{thm:ldb-td}.

Note that the above analysis easily extends to the simulation of the original Lindbladian without any scaling, where the complexity depends linearly on the product of evolution time and the maximum of the block-encoding norm of the Lindbladian. More specifically, we have the following corollary.
\begin{corollary}
  \label{cor:ldb-td-noscaling}
  Suppose we are given an $(\alpha_0(t), a, \epsilon')$-block-encoding $U_{H(t)}$ of $H(t)$, and an $(\alpha_j(t), a, \epsilon')$-block-encoding $U_{L_j(t)}$ for each $L_j(t)$ for all $t \geq 0$. Let $\norm{\mathcal{L}}_{\mathrm{be},\infty}$ be defined as
  \begin{align}
    \norm{\mathcal{L}}_{\mathrm{be},\infty} \coloneqq \max_{\tau \in [0, T]}  \norm{\mathcal{L}(\tau)}_{\mathrm{be}},
  \end{align}
  Suppose further that $\epsilon' \leq \epsilon/(2T(m+1))$.
  Then, there exists a quantum algorithm that outputs a purification of $\tilde{\rho}_T$ of $\tilde{\rho}(T)$ where $\norm{\tilde{\rho}(T) - e^{\mathcal{T}\int_0^T\dd\tau\,\mathcal{L}(\tau)}(\rho_0)}_1 \leq \epsilon$ using
  \begin{align}
    O\left(T\norm{\mathcal{L}}_{\mathrm{be},\infty}\,\left(\frac{\log(T\norm{\mathcal{L}}_{\mathrm{be},\infty}/\epsilon)}{\log\log(T\norm{\mathcal{L}}_{\mathrm{be},\infty}/\epsilon)}\right)^2\right)
  \end{align}
  queries to $U_{H(t)}$, $U_{L_j(t)}$, $\mathcal{O}_{\mathrm{var}}$, $\mathcal{O}_{H, \mathrm{norm}}$, and $\mathcal{O}_{L_j, \mathrm{norm}}$, and
  \begin{align}
    \widetilde{O}\left((m+n)T\norm{\mathcal{L}}_{\mathrm{be},\infty}\right)
  \end{align}
  additional 1- and 2-qubit gates, where $n$ is the number of qubits the Lindbladian is acting on.
\end{corollary}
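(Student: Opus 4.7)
The plan is to run the argument used to prove \cref{thm:ldb-td} almost verbatim, but to replace the time-dependent change of variable $\mathrm{var}$ by a uniform rescaling. Concretely, set $\Lambda := \norm{\mathcal{L}}_{\mathrm{be},\infty}$ and define $\hat{t} := \Lambda t$ and $\widehat{\mathcal{L}}(\hat{t}) := \mathcal{L}(\hat{t}/\Lambda)/\Lambda$; then $\norm{\widehat{\mathcal{L}}(\hat{t})}_{\mathrm{be}} \leq 1$ on the entire interval $[0, T\Lambda]$, so every bound in the proof of \cref{thm:ldb-td} that relied on the scaled Lindbladian having block-encoding norm at most $1$ continues to hold. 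Because the rescaling factor is a \emph{constant}, the var oracle degenerates to multiplication by $\Lambda$, and rescaled block encodings of $\widehat{H}(\hat{t})$ and $\widehat{L}_j(\hat{t})$ are obtained from $U_{H(t)}$ and $U_{L_j(t)}$ by folding the factor $1/\Lambda$ (respectively $1/\sqrt{\Lambda}$) into the normalization constants $\alpha_0(t)$ and $\alpha_j(t)$; no nontrivial $\mathcal{O}_{\mathrm{var}}$ is required beyond what is already available.

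Next, I would invoke the LCU construction developed in the proof of \cref{thm:ldb-td} on consecutive segments of the rescaled evolution. Since $\norm{\widehat{\mathcal{L}}}_{\mathrm{be}} \leq 1$, the success-probability analysis around \cref{eq:t-constant} shows that each segment can have rescaled length $\Theta(1)$, so the total number of segments is $O(T\Lambda)$. Within each segment I would use the same truncation parameters $K, K' = \Theta\!\left(\log(1/\epsilon_{\mathrm{seg}})/\log\log(1/\epsilon_{\mathrm{seg}})\right)$ and Riemann discretization parameters $M, q$ as in the theorem's proof, with $\epsilon_{\mathrm{seg}} = \epsilon/(T\Lambda)$ so that the per-segment diamond-norm errors compose additively over the $O(T\Lambda)$ segments by the triangle inequality (using subadditivity of the diamond norm under composition of CPTP maps). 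Oblivious amplitude amplification for channels is then applied within each segment exactly as before to boost the success probability to $1$ at constant overhead.

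The remaining pieces are bookkeeping. The per-segment query cost is $O\!\left((\log(T\Lambda/\epsilon)/\log\log(T\Lambda/\epsilon))^2\right)$ by the same LCU analysis as in \cref{thm:ldb-td}, so multiplying by $O(T\Lambda)$ yields the claimed query complexity, and the gate count scales analogously as $\widetilde{O}((m+n)T\Lambda)$. The accumulated block-encoding error of at most $T(m+1)\epsilon'$ is controlled by the hypothesis $\epsilon' \leq \epsilon/(2T(m+1))$, exactly matching the stated condition. Since no essentially new technique is introduced, there is no serious obstacle; the only care needed is to track that uniform rescaling by $\Lambda$ replaces pointwise rescaling, so the complexity becomes $T\norm{\mathcal{L}}_{\mathrm{be},\infty}$ rather than the sharper $\norm{\mathcal{L}}_{\mathrm{be},1}$, consistent with $\norm{\mathcal{L}}_{\mathrm{be},1} \leq T\norm{\mathcal{L}}_{\mathrm{be},\infty}$.
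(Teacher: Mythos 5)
Your proposal is correct and follows the same route the paper intends: the paper "proves" this corollary only by the remark that the preceding analysis extends "without any scaling," and a uniform constant rescaling by $\Lambda = \norm{\mathcal{L}}_{\mathrm{be},\infty}$ is operationally identical to that (it trivializes $\mathcal{O}_{\mathrm{var}}$ and yields $O(T\Lambda)$ constant-length segments with the same per-segment LCU cost). You have essentially filled in the bookkeeping the paper omits, including the correct per-segment error budget $\epsilon/(T\Lambda)$ and the use of subadditivity of diamond-norm error under composition of CPTP maps.
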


\section{Simulations in the Interaction Picture}\label{sec:interaction}
Many control problems involve a system Hamiltonian that contains a time-independent Hamiltonian that dominates the spectral norm $H(t)$, and thus the overall computational complexity.   Motivated by the interaction picture approach for Hamiltonian simulations \cite{low2018hamiltonian}, we devise an approach to simulate the Lindblad dynamics. To formulate the problem, we assume that the Lindbladian admits the following decomposition:
\begin{align}
    \mathcal{L}(\cdot) = -i\left[H_1 + H_2(t), \cdot\right]+\sum_j L_j(\cdot) L_j^{\dagger}-\frac12 \left\{L_j^{\dagger} L_j, \cdot\right\} ,
    \label{eq:H1H2Lj_relation}
\end{align}
where $H_1$ is a time-independent free Hamiltonian, $H_2(t)$ is the coupling Hamiltonian which contains the control variables, and the dissipative terms still come from the interaction with the environment.

One such example is the control of an ion trap system~\cite{haffner2008quantum}, 
in which the model Hamiltonian consists of the following terms, 
\begin{align}
    H_1 =\ & \hbar \sum_{i=1}^N\left(\omega_{01}|1\rangle_i\left\langle 1\left|+\omega_{0 e}\right| e\right\rangle_i\langle e|\right)+\hbar \sum_k \omega_k a_k^{\dagger} a_k \\
H_2(t) =\ & \hbar \Omega_1 \cos \left(\boldsymbol{k}_1 \cdot \boldsymbol{r}_j-\omega_1 t-\varphi_1\right)\left(|0\rangle_j\langle e|+| e\rangle_j\langle 0|\right) \\
    & +\hbar \Omega_2 \cos \left(\boldsymbol{k}_2 \cdot \boldsymbol{r}_j-\omega_2 t-\varphi_2\right)\left(|1\rangle_j\langle e|+| e\rangle_j\langle 1|\right),
\end{align}
and $L_j$s includes $\lambda_{\text{heat}} a^{\dagger}_j$, $\lambda_{\text{damp}} a_j$ and $\lambda_{\text{dephase}}n_j$. The observation in \cite{haffner2008quantum} is that
\begin{align}
     \omega_{0e} \gg \left|\Omega_1\right|,\left|\Omega_2\right|  \gg \lambda_\text{{heat}}, \lambda_\text{{damp}}, \lambda_\text{{dephase}}.
\end{align}

Motivated by such applications, we assume that in \cref{eq:H1H2Lj_relation},
\begin{align}\label{eq:interaction-assum}
    \norm{H_1} \gg \norm{H_2(t)} \gg \norm{L_j}.
\end{align}

In the interaction approach, e.g., \cite{low2018hamiltonian}, one simulates the density operator in the interaction picture, where the large magnitude of $H_1$ is absorbed into the slow Hamiltonian $H_2(t)$ and the jump operators.  
In this section, we provide detailed quantum algorithms for simulating the Lindbladian  \cref{eq:H1H2Lj_relation} in the interaction picture.

\subsection{ Lindbladian simulation in interaction picture  }
In light of \cref{eq:H1H2Lj_relation}, we first write the Lindbladian into two parts
\begin{align}
\mathcal{L}(t) = \mathcal{L}_1 + \mathcal{L}_2(t)
\label{eq:int_liou}
\end{align}
where
 $\mathcal{L}_1$ contains a time-independent Hamiltonian term and $\mathcal{L}_2(t)$ can be a general Lindbladian term
 \begin{align}
     \mathcal{L}_1(\cdot ) &= -i\left[H_1, \cdot\right] \label{eq:L1} \\
     \mathcal{L}_2(t)(\cdot) &= -i\left[H_2(t), \cdot\right] + \sum_j L_j (\cdot) L_j^{\dagger}-\left\{L_j^{\dagger} L_j, \cdot\right\} \label{eq:L2}.
 \end{align}
Then the Lindblad master equation in \cref{eq:Lindblad} is equivalent to:
\begin{align}
    \frac{\dd}{\dd t} {V}_1\dag (t_0,t) \rho {V}_1(t_0,t) = {V}_1^\dag (t_0,t)  \mathcal{L}_2(t){V}_1(t_0,t) {V}^\dag_1(t_0,t) \rho {V}_1(t_0,t),
\end{align}
where ${V}_1(t_0,t) =  e^{-i H_1 (t-t_0)} $, and $t \ge t_0$. 

We can define $\rho_{\mathrm{I}} = {V}_1^\dag(t_0,t) \rho {V}_1(t_0,t)$ as the density operator in the interaction picture, and it satisfies the Lindblad equation, $ \frac{d}{d t} \rho_{\mathrm{I}}(t) = \mathcal{L}_{2,\mathrm{I}}(t) \rho_{\mathrm{I}}(t),$
where $\mathcal{L}_{2,\mathrm{I}}(t) \coloneqq{V}_1^\dag(t_0,t)  \mathcal{L}_2(t){V}_1(t_0,t).$
Effectively, this transforms $H_2$ and $L_j(t)$ in \cref{eq:L2} into an interaction picture as well.

By simulating the time evolution in the interaction picture and transforming it back to the original picture at last, we have 
\begin{align}
    \rho(t) = \left(e^{\mathcal{L}_1\left(t-t_0\right)}\right) \left(\mathcal{T}e^{\int_{t_0}^t \mathcal{L}_{2,\mathrm{I}}(s) \mathrm{d} s}\rho(t_0) \right) ,
\end{align}
where $\left(e^{\mathcal{L}_1\left(b-a\right)}\right) (\cdot) = V_1\left(a, b\right)(\cdot)V_1^{-1}\left(a, b\right) $. We can further decompose this evolution into $N$ Trotter steps (with $\tau = (t-t_0)/N$),
\begin{align}
    \rho (t) 
    &= \prod_{i=0}^{N-1} \left(e^{ \mathcal{L}_1 \tau}\mathcal{T} e^{\int_{t_0+i\tau}^{t_0+(i+1)\tau}  \mathcal{L}_{2,\mathrm{I}}(s) \mathrm{d} s}\right) \rho (t_0). \label{eq:int_iter}
\end{align}

At a high level, \cref{eq:int_iter} summarizes our simulation strategy in the interaction picture. The total time complexity is determined by the number of time steps $N$, and the time complexity in each step, which follows from  our Lindbladian simulation algorithm in \Cref{sec:simulation}. 

\begin{theorem}[Modified from Corollary~\ref{cor:ldb-td-noscaling}]\label{theo:lind_sim}
     Suppose we are given an $\left(\alpha_0, a, \epsilon^{\prime}\right)$-block-encoding $U_H$ of $H$, and an $\left(\alpha_j, a, \epsilon^{\prime}\right)$-block-encoding $U_{L_j}$ for each $L_j$. For all $\tau, \epsilon' \geq 0$ and $t\|\mathcal{L}(\tau)\|_{\mathrm{be},\infty}=\Theta(1)$, there exists a quantum algorithm for simulating $e^{\mathcal{L} \tau}$ using
    $$
    O\left( \frac{\log \left(1 / \epsilon'\right)}{\log \log \left(1 / \epsilon'\right)}\right)
    $$
    queries to $U_H$ and $U_{L_j}$ and
    $$
    O\left(m \left(\frac{\log \left(1 / \epsilon'\right)}{\log \log \left(1 / \epsilon'\right)}\right)^2\right)
    $$
    additional 1-and 2-qubit gates.
\end{theorem}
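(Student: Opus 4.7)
The approach is to derive this statement as a direct specialization of Corollary~\ref{cor:ldb-td-noscaling} to the single-step regime $\tau\,\norm{\mathcal{L}(\tau)}_{\mathrm{be},\infty} = \Theta(1)$; no new algorithmic ingredient is needed beyond what is already built in Section~\ref{sec:simulation}. In particular I would reuse, without modification, the LCU form $\widetilde{\mathcal{G}}_K$ in \cref{eq:tildegk}, the Riemann-sum approximation $\widetilde V(s,t)$ in \cref{eq:dyson-riemann}, the oblivious amplitude amplification for channels, and the error bounds already proved there.

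First I would check that the hypotheses of Corollary~\ref{cor:ldb-td-noscaling} are in force. The $(\alpha_0,a,\epsilon')$- and $(\alpha_j,a,\epsilon')$-block-encodings $U_H$ and $U_{L_j}$ are exactly the input model the corollary requires, and the corollary's precision requirement $\epsilon' \leq \epsilon/(2\tau(m+1))$ is absorbed into a constant factor once $\tau = \Theta(1/\norm{\mathcal{L}}_{\mathrm{be},\infty})$ is fixed, so we may rename $\epsilon' \mapsto \epsilon$ and keep the same asymptotic form. I would then substitute $\tau\norm{\mathcal{L}}_{\mathrm{be},\infty} = \Theta(1)$ directly into the complexity expression of Corollary~\ref{cor:ldb-td-noscaling}: the leading prefactor $T\norm{\mathcal{L}}_{\mathrm{be},\infty}$ collapses to a constant, leaving only the logarithmic dependence on $1/\epsilon'$.

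The split between the $O(\log(1/\epsilon')/\log\log(1/\epsilon'))$ queries to each $U_H, U_{L_j}$ and the $O(m(\log(1/\epsilon')/\log\log(1/\epsilon'))^2)$ additional gates would then be read off from the internal structure of the LCU in Section~\ref{sec:simulation}. The outer Dyson truncation $K$ and the inner truncation $K'$ in \cref{eq:dyson-riemann,eq:tildegk} are each set to $K = K' = \Theta(\log(1/\epsilon')/\log\log(1/\epsilon'))$; the outer order $K$ governs the per-oracle query count, while the product $K K' m$ multiplied by the polylogarithmic overhead $\operatorname{polylog}(M,q,n)$ from the state-preparation, time-ordering compression, and select subroutines of~\cite{KSB19} contributes the auxiliary gate count.

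I do not expect a real obstacle, only careful bookkeeping: I would verify that the parameters $K$, $K'$, $M$, $q$ chosen in Section~\ref{sec:simulation} distribute as claimed between oracle calls and ancilla gates, and that the constant-round oblivious amplitude amplification used to boost the LCU success probability to $1$ adds only constant overhead to both counts. Both checks are already performed inside the proof of \Cref{thm:ldb-td}, so here the work is just to evaluate the two relevant terms once $T\norm{\mathcal{L}}_{\mathrm{be},\infty}$ has been fixed to $\Theta(1)$.
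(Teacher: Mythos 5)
Your proposal contains a genuine gap, and it's worth being precise about where it lies. You begin by saying you would specialize Corollary~\ref{cor:ldb-td-noscaling} to $\tau\norm{\mathcal{L}}_{\mathrm{be},\infty}=\Theta(1)$ by ``substituting directly into the complexity expression.'' But a direct substitution gives a query count of
\[
O\!\left(T\norm{\mathcal{L}}_{\mathrm{be},\infty}\Bigl(\tfrac{\log(T\norm{\mathcal{L}}_{\mathrm{be},\infty}/\epsilon)}{\log\log(T\norm{\mathcal{L}}_{\mathrm{be},\infty}/\epsilon)}\Bigr)^{\!2}\right)
\;\longrightarrow\;
O\!\left(\Bigl(\tfrac{\log(1/\epsilon)}{\log\log(1/\epsilon)}\Bigr)^{\!2}\right),
\]
which still carries the square, whereas \Cref{theo:lind_sim} claims a single power of $\log(1/\epsilon')/\log\log(1/\epsilon')$ for the query count. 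Your second argument — that ``the outer order $K$ governs the per-oracle query count'' — is where the square disappears, but this reading does not survive contact with the construction in \Cref{sec:simulation} and \cref{sec:simulaiton-details}. Each Kraus operator $A_{\hat j}$ in \cref{eq:kraus_A} is a product of up to $K{+}1$ factors $V(s_{i-1},s_i)$, and each factor is itself a truncated Dyson series of order $K'$ (\cref{operator:V(s,t)}) implemented by $O(K')$ calls to $U_{J(t)}$, each of which in turn queries $U_{H(t)}$ and the $U_{L_j(t)}$'s. The appendix states this plainly: ``the total query complexity is $O(KK')$.'' With $K=K'=\Theta(\log(1/\epsilon')/\log\log(1/\epsilon'))$, that is $\Theta\bigl((\log(1/\epsilon')/\log\log(1/\epsilon'))^2\bigr)$ queries per constant-time segment, matching \Cref{thm:ldb-td} and \Cref{cor:ldb-td-noscaling} but not matching \Cref{theo:lind_sim} as written.

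So the proposal is internally inconsistent: the first route (substitution) and the second route (reading off the split) give different query counts, and the second route is the wrong reading of the paper's own LCU bookkeeping. You cannot ``read off'' a linear query count from a construction whose every Kraus term nests $K$ Dyson-series blocks of order $K'$ each. If you wish to obtain a genuinely linear-in-$K$ query count per segment, you would need a different construction — for instance, merging the nested time-ordered integrals into a single flattened truncated Dyson series, so that one order-$K$ LCU handles the whole $\mathcal{G}_K$ rather than order $K$ nested inside order $K'$ — and that would be new work, not a specialization of the existing corollary. As written, what the existing machinery actually supports is $O\bigl((\log(1/\epsilon')/\log\log(1/\epsilon'))^2\bigr)$ queries and $O\bigl(m(\log(1/\epsilon')/\log\log(1/\epsilon'))^2\bigr)$ auxiliary gates per unit segment; the query bound in \Cref{theo:lind_sim} is tighter than what follows from \Cref{cor:ldb-td-noscaling}, and your proposal does not supply the missing argument that would close that gap.
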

\begin{lemma}[Error accumulation]\label{lem:err_accum}
     
    Given that $A_j=W_j$ and $B_j=\mathcal{T}\left[e^{ \int_{t_{j-1}}^{t_j} \mathcal{L}_1(s)\mathrm{d} s}\right]$ are bounded $\left\|W_j\right\| \leq 1$, $\left\| B_j \right\| \leq 1$, and error in each segment is bounded by $\delta$
    \begin{align}
        \left\|A_j-B_j\right\| \le \delta.
    \end{align}
    Then the accumulated error is
    \begin{align}
        \left\|\prod_j^L W_j-\mathcal{T}\left[e^{ \int_0^t \mathcal{L}(s) \mathrm{d} s}\right]\right\| \leq L \delta .
    \end{align}
\end{lemma}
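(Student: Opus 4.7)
The plan is to reduce the claim to a standard telescoping estimate for products of contractions. First I would invoke the composition property of time-ordered exponentials to write the exact evolution as the product of the true segment propagators,
\begin{align}
\mathcal{T}\left[e^{\int_0^t \mathcal{L}(s)\,\mathrm{d}s}\right] = \prod_{j=1}^L B_j,
\end{align}
where $B_j$ is the true evolution over $[t_{j-1},t_j]$ (this follows from uniqueness of solutions to the driving linear ODE, so concatenating the segment propagators reproduces the global propagator). This reduces the lemma to estimating $\bigl\|\prod_j W_j - \prod_j B_j\bigr\|$ in terms of the per-segment error $\|W_j - B_j\| \leq \delta$.

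Next I would use the standard telescoping identity
\begin{align}
\prod_{j=1}^L W_j - \prod_{j=1}^L B_j \;=\; \sum_{k=1}^L \left(\prod_{j=1}^{k-1} W_j\right)(W_k - B_k)\left(\prod_{j=k+1}^{L} B_j\right),
\end{align}
apply the triangle inequality, and bound each summand by submultiplicativity of the operator norm together with the given contractivity $\|W_j\|,\|B_j\|\leq 1$. Each of the $L$ terms then contributes at most $\delta$, yielding the stated bound $L\delta$.

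No step is genuinely hard here; the only matter requiring care is the correct ordering of the products in the telescoping expansion (since the $W_j$'s and $B_j$'s need not commute) and the appeal to the composition property for time-ordered exponentials over adjacent intervals. Once those are fixed, the remaining work is a direct application of the triangle inequality.
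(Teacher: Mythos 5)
Your proof is correct and follows essentially the same route as the paper: the paper likewise applies the telescoping identity and bounds each summand using submultiplicativity and the contraction bounds $\|A_j\|,\|B_j\|\le 1$. You are slightly more explicit in first invoking the composition property of time-ordered exponentials to equate $\prod_j B_j$ with the full propagator, a step the paper leaves implicit.
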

\begin{proof}
 The lemma holds by applying the triangle inequality
    \begin{align}
        \left\|\prod_{j=1}^L A_j-\prod_{j=1}^L B_j\right\| \leq \sum_{k=1}^L\left(\prod_{j=1}^{k-1}\left\|A_j\right\|\right)\left\|A_k-B_k\right\|\left(\prod_{j=k+1}^L\left\|B_j\right\|\right) .
    \end{align}
\end{proof}

These results imply the following result for Lindbladian simulation in the interaction picture:
\begin{theorem}[Query complexity of Lindbladian simulation in the interaction picture]
Let $\mathcal{L}(t)=\mathcal{L}_1(t)+\mathcal{L}_2(t)$, with $\mathcal{L}_1(t)$ and $\mathcal{L}_2(t)$ defined by \cref{eq:L1,eq:L2} respectively.  Assume the existence of a unitary oracle that implements the Hamiltonian and Lindbladian within the interaction picture, denoted $U_{H^I}$ and $U_{L_j^I}$ which implicitly depends on the time-step size $\tau \in \mathcal{O}\left(||\mathcal{L}_2||_{\text{be}}^{-1}\right)$ and number of quadrature points $q$, such that
\begin{align}
\left(\left\langle 0\right|_a \otimes \mathbf{1}_s\right) U_{H^I}\left(|0\rangle_a \otimes \mathbf{1}_s\right)&=\sum_{j_k = 1}^{q}|j_k\rangle\langle j_k| \otimes \frac{e^{i H_1 \tau \hat{x}_{(j_k)}} H_2 e^{-i H_1 \tau \hat{x}_{(j_k)}}}{\alpha_H} 
 \label{eq:H^I}\\
\left(\left\langle 0\right|_a \otimes \mathbf{1}_s\right) U_{L_j^I}\left(|0\rangle_a \otimes \mathbf{1}_s\right)&=\sum_{j_k = 1}^{q}|j_k\rangle\langle j_k| \otimes \frac{e^{i H_1 \tau \hat{x}_{(j_k)}} L_j e^{-i H_1 \tau \hat{x}_{(j_k)}}}{\alpha_{L_j}}, \label{eq:L_j^I}
\end{align}
For $t \ge ||\mathcal{L}_2(t)||_{\mathrm{be}} \tau$, the time-evolution operator $\mathcal{T} e^{\int_{0}^t \mathcal{L}_1(s) + \mathcal{L}_2(s) \mathrm{d} s}$ may be approximated to error $\epsilon$ with the following cost.

\begin{enumerate}
\item Simulations of $e^{-i H_1 \tau}: \mathcal{O}\left(t||\mathcal{L}_2(t)||_{\mathrm{be},\infty} \right)$,

\item Queries to $U_{H^I}$ and $U_{L_j^I}$: $\mathcal{O}\left(t||\mathcal{L}_2(t)||_{\text{be},\infty} \frac{\log \left(t||\mathcal{L}_2(t)||_{\text{be},\infty} / \epsilon\right)}{\log \log \left(t||\mathcal{L}_2(t)||_{\text{be},\infty} / \epsilon\right)}\right)$,

\item Primitive gates:  $\mathcal{O}\left(mt||\mathcal{L}_2(t)||_{\mathrm{be},\infty} (\frac{\log \left(t||\mathcal{L}_2(t)||_{\mathrm{be},\infty} / \epsilon\right)}{\log \log \left(t||\mathcal{L}_2(t)||_{\mathrm{be},\infty} / \epsilon\right)})^2\right)$.
\end{enumerate}
\end{theorem}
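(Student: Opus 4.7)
The plan is to combine the interaction-picture decomposition in \cref{eq:int_iter} with the per-segment simulation cost from \cref{theo:lind_sim} and the error accumulation bound in \cref{lem:err_accum}. First, I would fix a Trotter step size $\tau = \Theta(1/\|\mathcal{L}_2(t)\|_{\mathrm{be},\infty})$ and the corresponding number of segments $N = \lceil t/\tau \rceil = \Theta(t\|\mathcal{L}_2(t)\|_{\mathrm{be},\infty})$. This choice satisfies the hypothesis of \cref{theo:lind_sim} that $\tau \|\mathcal{L}_{2,\mathrm{I}}\|_{\mathrm{be},\infty}=\Theta(1)$ within each segment, since conjugating by the unitary $V_1$ does not change the block-encoding norms of $H_2$ and $L_j$. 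On each segment $[t_0+i\tau,\,t_0+(i+1)\tau]$ I would simulate $\mathcal{T}e^{\int \mathcal{L}_{2,\mathrm{I}}(s)\dd s}$ to accuracy $\epsilon/N$, and then apply a single free-Hamiltonian step $e^{-iH_1\tau}$ to convert back to the original picture, exactly as in the factorization of \cref{eq:int_iter}.

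The second step is to identify the block-encodings required by \cref{theo:lind_sim} inside the interaction picture. The oracles $U_{H^I}$ and $U_{L_j^I}$ defined in \cref{eq:H^I,eq:L_j^I} directly give controlled block-encodings of $H_2^I(s) = e^{iH_1\tau \hat{x}}H_2\, e^{-iH_1\tau \hat{x}}$ and $L_j^I(s)$ evaluated at the $q$ quadrature nodes used to discretize the truncated Dyson series in \cref{sec:simulation}. Hence each segment-level call to the algorithm of \cref{cor:ldb-td-noscaling} can be instantiated by invoking $U_{H^I}$ and $U_{L_j^I}$ the number of times prescribed there, with the fast-forwarded forward/backward evolutions $e^{\pm i H_1\tau x}$ appearing exactly once per oracle call (absorbed into the oracle promise). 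Setting the per-segment accuracy to $\epsilon' = \epsilon/N$, \cref{theo:lind_sim} yields $O(\log(N/\epsilon)/\log\log(N/\epsilon))$ oracle queries and $O(m\,[\log(N/\epsilon)/\log\log(N/\epsilon)]^2)$ primitive gates per segment.

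The third step is error accounting: by \cref{lem:err_accum}, the segment errors add linearly so the total simulation error is at most $N\cdot (\epsilon/N)=\epsilon$, as desired. Summing over the $N=\Theta(t\|\mathcal{L}_2\|_{\mathrm{be},\infty})$ segments gives the three claimed resource counts:
\begin{align*}
\#\{e^{-iH_1\tau}\text{ applications}\} &= N = O\!\left(t\|\mathcal{L}_2(t)\|_{\mathrm{be},\infty}\right), \\
\#\{U_{H^I},U_{L_j^I}\text{ queries}\} &= O\!\left(t\|\mathcal{L}_2(t)\|_{\mathrm{be},\infty}\tfrac{\log(t\|\mathcal{L}_2(t)\|_{\mathrm{be},\infty}/\epsilon)}{\log\log(t\|\mathcal{L}_2(t)\|_{\mathrm{be},\infty}/\epsilon)}\right), \\
\#\{\text{primitive gates}\} &= O\!\left(m\,t\|\mathcal{L}_2(t)\|_{\mathrm{be},\infty}\left(\tfrac{\log(t\|\mathcal{L}_2(t)\|_{\mathrm{be},\infty}/\epsilon)}{\log\log(t\|\mathcal{L}_2(t)\|_{\mathrm{be},\infty}/\epsilon)}\right)^{\!2}\right).
\end{align*}

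The main obstacle, and the step that deserves careful justification, is verifying that the interaction-picture Lindbladian $\mathcal{L}_{2,\mathrm{I}}$ really fits the input model of \cref{theo:lind_sim}. Concretely, one must show that the oracle promises in \cref{eq:H^I,eq:L_j^I} supply, via the same quadrature/LCU/compression machinery of \cref{sec:simulation}, a faithful block-encoding of the time-dependent, non-commuting effective operators $e^{iH_1\tau x}H_2 e^{-iH_1\tau x}$ and $e^{iH_1\tau x}L_j e^{-iH_1\tau x}$ at every quadrature node inside the Dyson expansion, and that the derivative bounds $\dot J_{\max}$ and $\dot L_{j,\max}$ used in \cref{eq:error-intermediate1} remain controlled after conjugation (they pick up at most $\|H_1\|$, which is absorbed by the per-segment-length condition $\tau\|\mathcal{L}_2\|_{\mathrm{be},\infty}=\Theta(1)$ and the choice of $q$ internal to \cref{theo:lind_sim}). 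Once this compatibility is established, the proof is a direct composition of \cref{theo:lind_sim}, \cref{lem:err_accum}, and the Trotter-style identity \cref{eq:int_iter}.
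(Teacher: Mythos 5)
Your proof follows the same route as the paper: split $[0,t]$ into $N=\Theta(t\|\mathcal{L}_2\|_{\mathrm{be},\infty})$ segments of length $\tau$ so that $\tau\|\mathcal{L}_2\|_{\mathrm{be},\infty}=\Theta(1)$, invoke \cref{theo:lind_sim} per segment with target accuracy $\epsilon/N$, invoke one $e^{-iH_1\tau}$ per segment, and then bound the total error via \cref{lem:err_accum}; the query and gate counts fall out by multiplying the per-segment costs by $N$. The compatibility concern you raise at the end is reasonable diligence, but in this theorem it is essentially a non-issue by construction: \cref{eq:H^I,eq:L_j^I} are part of the oracle \emph{hypothesis}, i.e., the statement already assumes you are handed block-encodings of the interaction-picture operators at the quadrature nodes, so verifying the input model reduces to observing that unitary conjugation by $V_1$ preserves the normalizing constants $\alpha_H,\alpha_{L_j}$ and hence $\|\mathcal{L}_{2,\mathrm{I}}\|_{\mathrm{be}}=\|\mathcal{L}_2\|_{\mathrm{be}}$, which you already note.
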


\begin{proof}
    Consider simulation strategy shown in \cref{eq:int_iter}, we uniformly divide the evolution time $[0,t]$ into $M=\lceil t\left\|\mathcal{L}_2(t)\right\|_{be,\infty} \rceil $, time step $\tau = t/M$. Then $\tau \left\|\mathcal{L}_2(t)\right\|_{b e, \infty} = \Theta (1)$, which satisfies the pre-condition of Theorem \ref{theo:lind_sim}. Therefore, using Theorem \ref{theo:lind_sim}, the time and gate complexity of each time interval is $O\left(\frac{\log \left(1 / \epsilon^{\prime}\right)}{\log \log \left(1 / \epsilon^{\prime}\right)}\right)$ and $O\left(m\left(\frac{\log \left(1 / \epsilon^{\prime}\right)}{\log \log \left(1 / \epsilon^{\prime}\right)}\right)^2\right)$, respectively. Furthermore, by the error accmulation in Lemma \ref{lem:err_accum}, we choose $\epsilon^{\prime} = \epsilon /t\left(\|\mathcal{L}\|_{\text{be}}\right)$ in order to bound the overall error by $\epsilon$.

    In addition, since we need to invoke $e^{-i H_1 \tau}$ once every step, the invoking number equals to $M$ and is hence bounded as claimed.
\end{proof}

\subsection{Comparison of the simulation complexity with and without interaction picture}

In this subsection, we compare the complexity with simulations of Lindblad dynamics with and without the interaction picture.
For the Lindbladian decomposition shown in Eq.~\eqref{eq:int_liou}, suppose we have access to the oracles $U_{H_1}$,$U_{H_{2}(t)}$, and $U_{L_j}$. According to \Cref{thm:ldb-td}, a direct simulation involves a time complexity
\begin{align} \label{eq:direct_simu}
    C_{\text{direct}} = O\left(t(C_1+C_2)(\alpha_{L_{1}}+\alpha_{L_{2}}) (\frac{\log \left(t(\alpha_{L_{1}}+\alpha_{L_{2}}) / \epsilon\right)}{\log \log \left(t(\alpha_{L_{1}}+\alpha_{L_{2}}) / \epsilon\right)})^2\right)
\end{align}
where $\alpha_{1} = \left\|\mathcal{L}_{1}(t)\right\|_{\mathrm{be}, 1}, \alpha_{L_{2}} = \left\|\mathcal{L}_{2}(t)\right\|_{\mathrm{be}, 1}$;  $C_1$ and $C_2$ representing the gate complexity of implement $U_{H_1}$ and the maximum gate complexity of implement $U_{H_2(t)},U_{L_j}$ respectively.

Meanwhile for the simulation algorithm in interaction picture, the time complexity is given by the following theorem.

\begin{theorem}[Gate complexity of Lindbladian simulation in the interaction picture]  \label{theo:inter_simu}
    Suppose we are given $U_{H_1}$,$U_{H_2(t)}$ and $U_{L_j}$ block encoding of $H_1, H_2(t)$ and $L_j$ respectively, such that $e^{-i H s}$ is approximated to error $\epsilon$ using $C_{e^{-i H_1 s}}[\epsilon] \in \mathcal{O}\left(|s| \log ^\gamma(s / \epsilon)\right)$ gates for some $\gamma>0$ and any $|s| \geq 0$.

For all $t>0$, the time-evolution \cref{eq:int_iter} may be approximated to error $\epsilon$ with gate complexity
\begin{align} \label{eq:inter_simu}
&C_{\mathrm{interact}} \notag \\
&= \mathcal{O}\left(\alpha_{L_{2}} t\left(C_2+C_{e^{-i A / \alpha_{L_{2}}}}\left[\frac{\epsilon}{\alpha_{L_{2}} t \log \left(\alpha_{L_{2}}\right)}\right] \log \left(\frac{t\left(\alpha_{L_{1}}+\alpha_{L_{2}}\right)}{\epsilon}\right)\right) \frac{\log \left(\alpha_{L_{2}} t / \epsilon\right)}{\log \log \left(\alpha_{L_{2}} t / \epsilon\right)}\right) \notag \\
& =\mathcal{O}\left(\alpha_{L_{2}} t\left(C_2+C_{e^{-i A / \alpha_{L_{2}}}}[\epsilon]\right) \operatorname{polylog}\left(t\left(\alpha_{L_{1}}+\alpha_{L_{2}}\right) / \epsilon\right)\right)
\end{align}
where  $\alpha_{1} = \left\|\mathcal{L}_{1}(t)\right\|_{\mathrm{be}, \infty}, \alpha_{L_{2}} = \left\|\mathcal{L}_{2}(t)\right\|_{\mathrm{be}, \infty} = \left\|\mathcal{L}_{2,I}(t)\right\|_{\mathrm{be}, \infty}.$
\end{theorem}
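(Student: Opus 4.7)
The plan is to combine three pieces that have already been established: the iterative interaction-picture representation \cref{eq:int_iter}, the per-step Lindbladian simulation result of \Cref{theo:lind_sim}, and the error-accumulation bound of \Cref{lem:err_accum}. The cost then reduces to bookkeeping of (i) how many interaction-picture steps are needed, (ii) how many oracle queries each step requires, and (iii) how many $e^{-iH_1 s}$ subroutine calls each $U_{H^I}$ or $U_{L_j^I}$ query entails, all with a carefully partitioned error budget.

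First, I would split $[0,t]$ into $M = \lceil t\,\|\mathcal{L}_2(t)\|_{\mathrm{be},\infty}\rceil$ equal segments of length $\tau = t/M$, so that $\tau\,\|\mathcal{L}_2\|_{\mathrm{be},\infty} = \Theta(1)$ on each segment, satisfying the hypothesis $t\|\mathcal{L}(\tau)\|_{\mathrm{be},\infty}=\Theta(1)$ of \Cref{theo:lind_sim}. Within each segment we implement one factor of \cref{eq:int_iter}: the conjugation $e^{\mathcal{L}_1 \tau}$ is just a Hamiltonian evolution $e^{-iH_1 \tau}$ applied on both sides, and the interaction-picture Lindbladian $\mathcal{T} e^{\int \mathcal{L}_{2,\mathrm{I}}(s)\,\mathrm{d}s}$ is simulated through the algorithm of \Cref{sec:simulation}, using the composite block-encodings $U_{H^I}$ and $U_{L_j^I}$ supplied by \cref{eq:H^I,eq:L_j^I}. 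By \Cref{theo:lind_sim}, each segment uses $O(\log(1/\epsilon')/\log\log(1/\epsilon'))$ queries to $U_{H^I}, U_{L_j^I}$ and $O(m(\log(1/\epsilon')/\log\log(1/\epsilon'))^2)$ additional primitive gates, where $\epsilon'$ is the tolerance per segment.

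Next, by \Cref{lem:err_accum}, to obtain total error $\epsilon$ after $M$ segments it suffices to choose $\epsilon' = \epsilon/M = \Theta(\epsilon/(\alpha_{L_2} t))$; since $\alpha_{L_1}+\alpha_{L_2}$ enters only through the logarithm, this contributes only a $\mathrm{polylog}\bigl(t(\alpha_{L_1}+\alpha_{L_2})/\epsilon\bigr)$ factor. Each query to $U_{H^I}$ or $U_{L_j^I}$ is implemented by (a) one call to $U_{H_2(t)}$ or $U_{L_j}$, contributing cost $C_2$, and (b) sandwiching it with $e^{\pm i H_1 \tau \hat{x}_{(j_k)}}$ at the quadrature nodes, each of which can be realized by $e^{-iH_1/\alpha_{L_2}}$-type simulations (since $\tau = \Theta(1/\alpha_{L_2})$). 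The accumulated number of $e^{-iH_1 s}$ invocations over the whole simulation is $O\bigl(\alpha_{L_2} t \cdot \log(\alpha_{L_2} t/\epsilon)/\log\log(\alpha_{L_2} t/\epsilon)\bigr)$, so setting the per-invocation error to $\widetilde{\epsilon} = \Theta\bigl(\epsilon/(\alpha_{L_2} t \log(\alpha_{L_2}))\bigr)$ and invoking the triangle inequality once more keeps the overall error within $\epsilon$.

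Putting everything together, the total gate complexity is
\[
M \cdot \frac{\log(1/\epsilon')}{\log\log(1/\epsilon')}\cdot\Bigl(C_2 + C_{e^{-iA/\alpha_{L_2}}}[\widetilde{\epsilon}]\Bigr),
\]
and substituting $M = \Theta(\alpha_{L_2} t)$ and the chosen $\epsilon', \widetilde{\epsilon}$ yields the stated bound in \cref{eq:inter_simu}; the second, compact form follows from the hypothesis $C_{e^{-iH_1 s}}[\epsilon] \in \mathcal{O}(|s|\log^\gamma(s/\epsilon))$, which lets us absorb the $\log(1/\widetilde{\epsilon})$ factor into a $\mathrm{polylog}(t(\alpha_{L_1}+\alpha_{L_2})/\epsilon)$. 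I expect the main obstacle to be the careful apportioning of the error budget: one must simultaneously tolerate block-encoding errors in $U_{H^I}$ and $U_{L_j^I}$ (which themselves propagate from Hamiltonian-simulation error $\widetilde{\epsilon}$), the truncated-Dyson/Riemann-sum error from \Cref{theo:lind_sim}, and the Trotter-like accumulation across $M$ segments, while ensuring each contribution remains $O(\epsilon)$ and the logarithms only enter the final count in a $\mathrm{polylog}$ fashion.
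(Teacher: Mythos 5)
Your route is genuinely different from the paper's. The paper proves \cref{theo:inter_simu} with a single sentence: it cites Theorem~7 of \cite{low2018hamiltonian} and substitutes the Lindbladian normalization constants $\alpha_{L_1}$, $\alpha_{L_2}$ for Low--Wiebe's $\alpha_A$, $\alpha_B$. You instead reconstruct the argument from the paper's own ingredients --- the segmentation $M = \lceil t\,\alpha_{L_2}\rceil$, \cref{theo:lind_sim} per segment, and \cref{lem:err_accum} for error accumulation --- mirroring how the paper proves the preceding (query-complexity) theorem and then multiplying each query by its gate cost. This is more self-contained and more transparent about where each contribution enters, whereas the paper's citation-based proof is terser but leans on the reader to unpack Low--Wiebe's argument and verify the substitution is legitimate in the superoperator setting.

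That said, your bookkeeping does not quite reproduce the stated bound. Your final expression is
\[
M \cdot \frac{\log(1/\epsilon')}{\log\log(1/\epsilon')}\cdot\Bigl(C_2 + C_{e^{-iA/\alpha_{L_2}}}[\widetilde{\epsilon}]\Bigr),
\]
but \cref{eq:inter_simu} has an additional factor of $\log\!\left(\frac{t(\alpha_{L_1}+\alpha_{L_2})}{\epsilon}\right)$ multiplying $C_{e^{-iA/\alpha_{L_2}}}[\cdot]$. You write that each query to $U_{H^I}$ or $U_{L_j^I}$ costs ``one call to $U_{H_2}$ or $U_{L_j}$ plus sandwiching with $e^{\pm iH_1\tau\hat x_{(j_k)}}$,'' as if the sandwiching were a single $e^{-iH_1/\alpha_{L_2}}$-type call. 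But \cref{eq:H^I,eq:L_j^I} show that $U_{H^I}$ and $U_{L_j^I}$ are \emph{select} operators controlled on a time register ranging over the quadrature nodes $j_k$, and implementing such a controlled family of free evolutions $e^{-iH_1\tau\hat{x}_{(j_k)}}$ to the required precision costs $\Theta\bigl(\log(t(\alpha_{L_1}+\alpha_{L_2})/\epsilon)\bigr)$ applications of the uncontrolled simulation (one per bit of the time register, as in Low--Wiebe's construction). Tracking this per-query overhead is precisely where that $\log$ factor comes from; without it your count understates the gate cost of each $U_{H^I}$/$U_{L_j^I}$ query. The rest of the error-budget partition ($\epsilon' = \Theta(\epsilon/(\alpha_{L_2}t))$, $\widetilde{\epsilon} = \Theta(\epsilon/(\alpha_{L_2}t\log\alpha_{L_2}))$) is consistent with the statement, so once this controlled-select cost is accounted for the argument goes through.
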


The proof follows by using $\alpha_{L_1},\alpha_{L_2,I}$ to substitute $\alpha_A$ and $\alpha_B$ in Theorem 7 in \cite{low2018hamiltonian}, respectively. 

We highlight that the assumption $C_{e^{-i H_1 s}}[\epsilon] = O\left(|s| \log ^\gamma(s / \epsilon)\right)$ imposes strong requirement on simulating the $H_1$ dynamics. With Hamiltonian simulation algorithm \cite{BCSWW20}, gate complexity should be
$C_{e^{-i H_1 s}}[\epsilon] = \tilde{O} (||H_1||s)$. But here the assumption removes the $||H_1||$ dependence. This implies that the simulation of $H_1$ is supposed to be easy, the dynamics can be fast-forwarded. Nevertheless this assumption is valid in some common settings~\cite{low2018hamiltonian}, for instance when $H_{1}$ is diagonal.
Another assumption is Eq.~\eqref{eq:interaction-assum}, which implies that $\alpha_2 \ll  \alpha_1$. By comparing Eq.~\eqref{eq:direct_simu} and Eq.~\eqref{eq:inter_simu} with this relation, we find the simulation strategy using the interacting picture  has a better gate complexity.
As long as these two assumptions hold, the simulation algorithm in the interaction picture  can serve as an alternative to reduce the simulation complexity.

\section{The Optimization Algorithm for Quantum Optimal Control}\label{sec:optimization}
In this section, we present our main results for finding first- and second-order stationary points of the optimization problem induced by the quantum optimal control problem \eqref{qoc-opt}, which in general is nonconvex. We consider the  accelerated gradient descent (AGD) method~\cite{jin2017accelerated}. A key departure from a direct implement of AGD is that the gradient has to be estimated using the quantum algorithm \cite{gilyen2019optimizing}, in which case, the gradient input is subject to noise. We believe that this result may be of general interest to the optimization community.
 \begin{theorem}  \label{theorem:opti}
Assume that the function $f(\cdot)$ is $\ell$-smooth and $\varrho$-Hessian Lipschitz. There exists an absolute constant $c_{\max }$ such that for any $\delta>0, \epsilon \leq \frac{\ell^2}{\varrho}, \Delta_f \geq f\left(\mathbf{x}_0\right)-f^{\star}$, if $\chi=\max \left\{1, \log \frac{d \ell \Delta_f}{\varrho \in \delta}\right\}$, $c \geq c_{\max }$ and such that if we run modified PAGD (Algorithm \ref{alg:cap}) with the choice of parameters in \cref{sec:hyper} using an approximate gradient $\hat{\nabla} f(x)$  with error bounded at every step: $\norm{\nabla f(x)-\hat{\nabla} f(x)}\le \epsilon_g$  with
\begin{align}\label{eq:9/8}
         \epsilon_g = \frac{\varrho^{1 / 8}}{\sqrt{2} \ell^{1 / 4} \chi^{3 / 2} c^{3 / 2}} \epsilon^{9 / 8},
     \end{align}
then with probability at least $1-\delta$,  one of the iterates $\mathbf{x}_t$ will be an $\epsilon$-first order stationary point in the following number of iterations:
\begin{align}
O\left(\frac{\ell^{1 / 2} \varrho^{1 / 4}\left(f\left(\mathbf{x}_0\right)-f^*\right)}{\epsilon^{7 / 4}} \log ^6\left(\frac{d \ell \Delta_f}{\varrho \epsilon \delta}\right)\right).
\end{align}
Furthermore, if the error bound of the gradient is chosen as,
     \begin{align}
         \epsilon_g = \frac{\delta\chi^{-11} c^{-16}}{64 \ell} \frac{\epsilon^3}{\sqrt{d}} \frac{1}{\Delta_f},
     \end{align}
     then with probability at least $1-\delta$, one of the iterates $\mathbf{x}_t$ will be an $\epsilon$-second order stationary point.
 \end{theorem}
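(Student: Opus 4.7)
The plan is to adapt the convergence analysis of PAGD from Jin et al.~\cite{jin2017accelerated} to the setting where at every iteration the gradient is replaced by a noisy surrogate $\tilde{g}_t = \nabla f(y_t) + \xi_t$ with $\|\xi_t\| \leq \epsilon_g$. The analytical workhorse is the discrete Hamiltonian
\begin{equation}
    H_t \coloneqq f(\bm x_t) + \frac{1}{2\eta}\|\bm v_t\|^2,
\end{equation}
which in the noise-free case decreases by $\Omega(\sqrt{\epsilon^3/\varrho})$ over each window of $\mathscr{T} = \Theta(\eta^{-1/2}\varrho^{-1/4}\epsilon^{-1/4})$ iterations unless the iterates are localized, in which case the NCE step triggers sufficient decrease. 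I would re-derive each step of this dichotomy under the noisy update, carefully bookkeeping the extra error terms introduced by $\xi_t$.

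For the first-order statement, I would first show a noisy Hamiltonian decrease lemma of the form $H_{t+1} \leq H_t - \tfrac{\eta}{4}\|\nabla f(y_t)\|^2 + O(\eta \epsilon_g^2)$, obtained by expanding $\|\tilde{g}_t\|^2 = \|\nabla f(y_t)\|^2 + 2\langle \nabla f(y_t),\xi_t\rangle + \|\xi_t\|^2$ and absorbing the cross term via Young's inequality. Whenever $\|\nabla f(y_t)\| \geq \epsilon$, this extra error is dominated by the descent as long as $\epsilon_g = O(\epsilon^{9/8})$, matching the scaling at which PAGD makes progress per window. Next I would verify that the NCE-triggering threshold comparing $f(\bm x_t)$ to $f(\bm x_{t-\mathscr{T}})$ remains a reliable indicator of large momentum versus stuck momentum when perturbed by $O(\mathscr{T}\eta\epsilon_g\ell)$ noise accumulation, again under the chosen $\epsilon_g$. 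Summing the decrease over $T = \widetilde{O}(\Delta_f / (\epsilon^{7/4}/(\ell^{1/2}\varrho^{1/4})))$ iterations and using $H_t \geq f^\star$ yields the claimed iteration bound, and consequently one of the iterates must have $\|\nabla f(\bm x_t)\| \leq \epsilon$.

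For the second-order statement, I would follow the coupling-sequence argument used to prove saddle escape: at any iterate with $\|\nabla f\|\leq\epsilon$, inject a uniform perturbation of radius $r = \Theta(\epsilon/\ell)$ and run PAGD for $\mathscr{T}$ steps; two trajectories differing only in the direction of the most negative Hessian eigenvector diverge exponentially, so at least one escapes. With noisy gradients, the coupling equation acquires a drift of size $\eta\epsilon_g$ per step, which must remain strictly below the exponential escape signal $\sim r\cdot (1+\eta\sqrt{\varrho\epsilon})^{\mathscr{T}}$ with high probability. Propagating this requirement through the $\mathscr{T}$-length window and requiring a union bound over the $\widetilde{O}(\Delta_f/\epsilon^{7/4})$ iterations yields the stricter condition $\epsilon_g = O(\delta \epsilon^3/(\sqrt{d}\,\Delta_f))$; the $\sqrt{d}$ factor appears because the escape direction is one-dimensional while $\xi_t$ is isotropic, and the $1/\Delta_f$ absorbs the union bound over all potential saddle visits.

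The hard part will be tracking how noise propagates through the momentum dynamics. Because AGD mixes $\bm x_t$ and $\bm v_t$ before the gradient step, a single perturbation $\xi_t$ influences iterates $\mathscr{T}$ steps into the future via the momentum parameter $(1-\theta)$, so naive bounds over-count the error by a factor $1/\theta$. I expect to address this by revisiting the ``improve or localize'' lemma, which expresses $\sum_{s\leq t}\|\bm v_s\|^2$ as a quadratic form in past gradients, and showing that replacing each gradient by its noisy version perturbs this quadratic form by at most $O(\mathscr{T}^2\epsilon_g\cdot\max_s\|\bm v_s\|)$; the localization bound then feeds back to control $\max_s\|\bm v_s\|$. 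A second delicate point is that both the NCE trigger and the perturbation trigger depend on comparing quantities to exact-gradient thresholds, and I would replace these thresholds by noise-robust versions (inflated by $O(\epsilon_g)$) and re-verify the trichotomy of cases, confirming that the constants $\chi, c$ absorb all these adjustments without changing the asymptotic iteration count.
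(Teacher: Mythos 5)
Your plan follows the same route as the paper: modify the Hamiltonian-decrease lemma (paper's Lemma~\ref{lem:improve_or_loc_1}), the improve-or-localize bound (Theorem~\ref{theo:localize_or_stay}), the large-gradient lemmas, the coupled-sequence saddle-escape argument (Lemma~\ref{aux:modi_lem_18}), and the NCE step (Lemma~\ref{aux:modi_lem_5}), carrying through a uniform $\epsilon_g$ noise bound so that Jin et al.'s Theorem~3 applies verbatim. Your identification of the trichotomy, the per-step error term $O(\eta\epsilon_g^2)$ via Young's inequality, the $\sqrt{d}$ factor (isotropic noise versus one-dimensional escape direction), and the union-bound origin of the $\delta/\Delta_f$ factor all match the paper.

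The one place you gloss over, and where the paper does the real work, is the noise accumulation through the momentum recursion. You correctly flag the $1/\theta \sim \sqrt{\kappa}$ over-counting danger, but your proposed remedy — a first-order perturbation of the quadratic form in past gradients of size $O(\mathscr{T}^2 \epsilon_g \max_s\|\bm v_s\|)$ — does not obviously reproduce the $\epsilon^{9/8}$ threshold. Note also that the per-step Hamiltonian decrease only needs $\epsilon_g = O(\epsilon)$ to keep the descent dominant; the binding $\epsilon^{9/8}$ constraint comes specifically from Lemmas~\ref{aux:modi_lem_16}--\ref{aux:modi_lem_17}. There the accumulated error is controlled by decomposing the noise along the eigendirections of the AGD iteration matrix $\mathbf{A}$ and showing (Lemmas~\ref{lem:aux1}--\ref{lem:aux3}) that the recursion weights $p_\tau^{(j)}$ and $x a_\tau^{(j)}$ satisfy $\sum_\tau (\max_j p_\tau^{(j)})^2 = O((\chi c)^2)$ rather than $O(\mathscr{T})$. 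It is this square-summability, combined with a single Cauchy--Schwarz in $\tau$, that yields an accumulated error of order $\sqrt{\mathscr{T}}\,\epsilon_g$ rather than $\mathscr{T}\epsilon_g$, and hence $\epsilon_g \lesssim \epsilon/(\chi c\sqrt{\mathscr{T}}) \sim \varrho^{1/8}\epsilon^{9/8}/(\ell^{1/4}(\chi c)^{3/2})$. Without this spectral weighting, a naive bound would over-count by a $\sqrt{\mathscr{T}} \sim \kappa^{1/4}$ factor and force a stronger condition on $\epsilon_g$. If you make that part precise, the rest of your outline is sound.
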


The proof of this theorem is deferred to Appendix \ref{app:APGD_proof}. Note that the complexity $\tilde{O}(1/\epsilon^{7/4})$ in~\cite{jin2017accelerated} is the currently best-known result for finding first- and second-order stationary points using only gradient queries, and there is not much space to improve as~\cite{carmon2021lower} proved a lower bound $\Omega(1/\epsilon^{12/7})$ for deterministic algorithms with gradient queries when the function is gradient- and Hessian-Lipschitz. Our error bound $\tilde{O}(1/\epsilon^{9/8})$ in~\eqref{eq:9/8} is optimal (up to poly-logarithmic factors) for PAGD because up to a concentration inequality, it can give an algorithm for stochastic gradient descent with complexity $\tilde{O}(1/\epsilon^{7/4}\cdot(1/\epsilon^{9/8})^{2})=\tilde{O}(1/\epsilon^{4})$, which is optimal as there is a matching lower bound $\Omega(1/\epsilon^4)$~\cite{arjevani2023lower}. In other words, if the error $\epsilon^{9/8}$ can be further improved, it implies an algorithm for finding stationary points with better convergence than~\cite{jin2017accelerated}, the current state-of-the-art work on this.

The AGD algorithm relies on an estimate of the gradient. Toward this end, we first show that the objective function \eqref{eq:j1} from the quantum control problem is essentially a polynomial. The polynomial nature of the objective function allows us to use high-order finite difference methods to compute the gradient. In particular, a centered difference scheme with $2m+1$ points will produce an exact gradient for a polynomial of degree $2m.$ 
\begin{lemma}
    Assume that the control function is expressed as a linear combination of shape function $b_j(t)$: $u(t) = \sum_{j=0}^N u_j b_j(t)$ and let $\bm u=(u_0, u_1, \ldots, u_N)$. Then the expectation in \cref{eq:j1} from the Lindblad simulation algorithms from the previous section is a polynomial with degree $d ={O}\left( T\,\mathrm{polylog} \frac{1}{\epsilon} \right)$.
\end{lemma}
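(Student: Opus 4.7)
The plan is to show that the approximate final state $\widetilde{\rho}(T)$ produced by the simulation algorithm of \Cref{sec:simulation} depends polynomially on the control parameters $\bm u$, and then count the degree by tracking the truncation orders through the construction. The quadratic form $\widetilde{J}_1(\bm u) = \bra{\rho_T}\mathscr{O}\otimes I \ket{\rho_T} = \text{tr}(\mathscr{O}\widetilde{\rho}(T))$ is linear in $\widetilde{\rho}(T)$, so it inherits the same degree.

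First, I would observe that since $u_\beta(t) = \sum_{j=0}^{N} u_{\beta,j}\, b_j(t)$ is affine in $\bm u$ for each fixed $t$, the Hamiltonian $H(t) = H_0 + \sum_\beta u_\beta(t)\mu_\beta$ and the effective non-Hermitian generator $J(t) = -iH(t) - \tfrac12\sum_{j}L_j^\dagger L_j$ are affine (degree $\le 1$) polynomials in $\bm u$. The jump operators $L_j$ are independent of $\bm u$. Thus every $J(t)$ and every $L_j$ appearing in the simulation is, entrywise, a polynomial of degree at most $1$ in $\bm u$.

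Next, I would read off the degree from the construction in \Cref{sec:simulation}. By \cref{eq:dyson-riemann}, each $\widetilde{V}(s,t)$ is a finite sum of products of at most $K'$ factors $J(t_i)$, hence a polynomial in $\bm u$ of degree $\le K'$. Each $\widetilde{\mathcal{F}}_k$ in \cref{eq:tildefk} is a composition of $k+1$ factors of $\mathcal{K}[\widetilde{V}(\cdot,\cdot)]$ interleaved with $\bm u$-independent $\mathcal{L}_J$ terms, so its Kraus operators are products of at most $(k+1)K'$ factors of $J$; summed over $k\le K$, each Kraus operator of $\widetilde{\mathcal{G}}_K$ is a polynomial in $\bm u$ of degree $\le (K+1)K'$. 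Because \cref{eq:t-constant} forces each application of $\widetilde{\mathcal{G}}_K$ to advance time by $\Theta(1)$, the full simulation for total time $T$ is a sequential composition of $O(T)$ such segments. Each segment multiplies the Kraus-operator degree by at most $(K+1)K'$, so the Kraus operators $A_i(\bm u)$ of the overall implemented channel are polynomials of degree $O\!\left(T\cdot KK'\right)$ in $\bm u$, and $\widetilde{\rho}(T) = \sum_i A_i(\bm u)\rho_0 A_i(\bm u)^\dagger$ has degree at most $2\cdot O(T\cdot KK')$ since $\bm u$ is real. Substituting the truncation orders $K,K' = \Theta(\log(1/\epsilon)/\log\log(1/\epsilon)) = O(\mathrm{polylog}(1/\epsilon))$ from \Cref{sec:simulation} yields $d = O\!\left(T\,\mathrm{polylog}(1/\epsilon)\right)$.

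The main subtlety I expect is verifying that auxiliary ingredients of the implementation — the LCU state-preparation unitaries, the oblivious amplitude amplification for channels used to boost the success probability to $1$ (\cref{lemma:block-encoding-channel}), and the compression scheme for the time-ordered Dyson factors — do not introduce non-polynomial dependence on $\bm u$ through normalizing amplitudes or inverse square roots. The key point is that oblivious amplitude amplification deterministically outputs (a purification of) the target superoperator action $\widetilde{\mathcal{G}}_K(\rho_0)$, whose Kraus-operator form is exactly the polynomial expression analyzed above; the $\bm u$-dependent normalizations appear only inside fixed LCU weights and cancel against the amplification, leaving the final reduced state a bona fide polynomial in $\bm u$ up to the simulation error $\epsilon$. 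Once this is checked, the degree count above closes the proof.
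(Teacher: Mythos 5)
Your proof takes essentially the same approach as the paper's: observe that $J(t)$ is affine in $\bm u$, track the polynomial degree through the truncated Dyson series ($K'$), the Duhamel iterates ($K$), and the $T/\delta$ segments, then substitute $K,K' = \Theta(\log(1/\epsilon)/\log\log(1/\epsilon))$. You are slightly more careful than the paper in distinguishing the Dyson truncation order $K'$ from the Duhamel order $K$ (the paper simply writes ``degree at most $K^2$''), and you flag the amplitude-amplification normalization issue that the paper's proof does not discuss; neither difference changes the substance or the conclusion.
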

\begin{proof}
    We begin by examining the time-dependent unitary $V(0,t)$ in Duhamel's representation. Specifically, from \cref{eq:v0t}, we see that the Dyson series approximation yields a polynomial of degree at most $K.$ In addition, in the Kraus form approximation in \cref{eq:gk}, the operators $\mathcal{L}_\mathrm{J}(s)$ do not involve the control variable $\bm u.$ Overall, the approximation $\mathcal{G}_K(t)$  in \cref{eq:gk} constitutes a polynomial of degree at most $K^2$. Therefore, after applying $\mathcal{G}_K(\delta)$ for $T/\delta$ times to approximate the density operator at time $T$, we obtain a polynomial of degree at most $T\text{polylog} \frac{1}{\epsilon}$. Here we have used the fact that $ K = \frac{\log(1/\epsilon)}{\log\log(1/\epsilon)}.$  Furthermore, when the gradient estimation algorithm  in \cref{lemma:jordan} is applied, the query complexity \eqref{J-polynomial} becomes $\widetilde{O}(T/\epsilon).$ 
\end{proof}

\medskip

\section{Proof of Main Theorem}
Finally, we outline the proof of our main theorem (\Cref{thm:main}). We first summarize our quantum algorithm as follows,

\begin{algorithm}
    \caption{Quantum Algorithm for Open System Quantum Control}\label{alg:main}
    \begin{algorithmic}[1]
    \State Given $k_{\max }, \epsilon_g$ as in Theorem \ref{theorem:opti}; set $u(t)=0$
    \For{t = 0,1,...,$k_{\max}$}
        \State Use Theorem \ref{thm:ldb-td} and strategy in Section \ref{sec:gradient-est} to construct the phase oracle for $\widetilde{J}_1(\boldsymbol{u})$;
        \State Use Lemma \ref{lemma:grad-est-J} to estimate $\boldsymbol{g}^{(k)} \approx \nabla J\left(\boldsymbol{u}^{(k)}\right)$ with $||\boldsymbol{g}^{(k)} - J\left(\boldsymbol{u}^{(k)}\right)|| \le \epsilon_g$;
        \State Update control variable with one step of modified PAGD, as shown in Algorithm \ref{alg:cap}
    \EndFor

    \end{algorithmic}
\end{algorithm}

Now, we restated the main theorem and give its proof:

\begin{theorem}[main theorem, restated] \label{theo:main}
   Assume there are $n_{\mathrm{c}}$ control functions  $u_\beta(t) \in \mathrm{C}^2([0,T])$. Further assume\footnote{More generally, if $\norm{H_0}, \norm{\mu} = \Theta(\Lambda)$, it is equivalent to enlarge the time duration $T$ by a factor $O(\Lambda)$.} that $\norm{H_0}, \norm{\mathscr{O}}, \norm{\mu_{\beta}}, \norm{L_j} \leq 1$, and $\alpha \geq 2/T$. 
There exists a quantum algorithm that, with probability at least $2/3$\footnote{Using standard techniques, the success probability can be boosted to a constant arbitrarily close to 1 while only introducing a logarithmic factor in the complexity.}, solves problem \eqref{qoc-opt} by:
\begin{itemize}
\item reaching a first-order stationary point $\|\nabla{f}\|<\epsilon$ with \eqref{eq:Lindblad} using
      $\widetilde{O}\left(\frac{n_{\mathrm{c}} \|\mathcal{L}\|_{\text {be }, 1}T}{\epsilon^{23/8}}\Delta_f\right)$
  queries to $\mathcal{P}_{H_0}$ and $\mathcal{P}_{\mu_\beta}$, $\beta=1,2,\ldots, n_{\mathrm{c}}$, and
    $\widetilde{O}\left(mn\frac{n_{\mathrm{c}} \|\mathcal{L}\|_{\text {be }, 1} T}{\epsilon^{23 / 8}} \Delta_f + n\frac{ T^{3/2}}{\epsilon^{9 / 4}} \Delta_f\right)$
  additional 1- and 2-qubit gates; or
  
\item reaching a second-order stationary point using
      $\widetilde{O}\left(\frac{n_{\mathrm{c}}  \|\mathcal{L}\|_{\text {be }, 1}T^{7/4}}{\epsilon^{5}}\Delta_f\right)$
  queries to $\mathcal{P}_{H_0}$ and $\mathcal{P}_{\mu_\beta}$, $\beta=1,2,\ldots, n_{\mathrm{c}}$ and
    $\widetilde{O}\left(mn\frac{n_{\mathrm{c}}  \|\mathcal{L}\|_{\text {be }, 1}T^{7/4}}{\epsilon^{5}} \Delta_f + n\frac{ T^{3/2}}{\epsilon^{9 / 4}} \Delta_f\right)$
  additional 1- and 2-qubit gates.
  \end{itemize} 
  Here $n_{\mathrm{c}}$ and $m$ are respectively the number of control variables and jump operators.
  \end{theorem}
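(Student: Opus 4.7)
The plan is to assemble the main theorem by combining the three previously-established ingredients, as outlined in Algorithm~\ref{alg:main}: the time-dependent Lindblad simulation of \Cref{thm:ldb-td}, the phase-oracle / gradient-estimation pipeline of \Cref{sec:gradient-est} culminating in \Cref{lemma:grad-est-J}, and the noisy PAGD analysis of \Cref{theorem:opti}. The proof reduces to (i) checking that the composite error stays below the tolerance that PAGD requires, (ii) substituting the Lipschitz constants from \Cref{lem:upper-bound-gradient}, and (iii) multiplying out the per-iteration, per-query, and per-simulation costs.

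First I would split $f(\bm u)$ into $\widetilde{J}_1(\bm u)$ and the quadratic regularization $\alpha\sum_\beta\int_0^T|u_\beta|^2\,\mathrm{d}t$. The latter is analytic in $\bm u$, so its gradient and Hessian are available classically for free; all quantum work goes into estimating $\nabla\widetilde{J}_1$. Using $\|\mu_\beta\|,\|\mathscr{O}\|\le 1$, \Cref{lem:upper-bound-gradient} gives that $\widetilde{J}_1$ is $\ell$-smooth and $\varrho$-Hessian-Lipschitz with $\ell,\varrho$ polynomial in $T$ after the piecewise-linear discretization $\bm u\in\mathbb{R}^{n_{\mathrm c}N}$ with $N=\widetilde{O}(T)$. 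The assumption $\alpha\ge 2/T$ guarantees that the regularization dominates and the full $f$ inherits the same smoothness scaling up to constants, so \Cref{theorem:opti} applies with $d=n_{\mathrm c}N$ and the same $\Delta_f$ up to constants.

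Next I would pick $\epsilon_g$ as dictated by \Cref{theorem:opti}: $\epsilon_g=\widetilde{O}(\epsilon^{9/8})$ to reach a first-order stationary point and $\epsilon_g=\widetilde{O}(\epsilon^3/\sqrt{d})$ to reach a second-order one. The outer iteration count in both cases is $K=\widetilde{O}(\ell^{1/2}\varrho^{1/4}\Delta_f/\epsilon^{7/4})$. Each iteration invokes \Cref{lemma:grad-est-J} once to produce a gradient with $\ell_2$-error at most $\epsilon_g$; this costs $\widetilde{O}(n_{\mathrm c}T/\epsilon_g)$ queries to the phase oracle $\mathcal{O}_{\widetilde{J}_1}$ plus $\widetilde{O}(N)$ extra 1- and 2-qubit gates. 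Each phase-oracle query is built by wrapping the Hadamard-test probability oracle of \Cref{sec:gradient-est} around the Lindblad simulation of \Cref{thm:ldb-td}, costing $\widetilde{O}(\|\mathcal{L}\|_{\mathrm{be},1})$ queries to the sparse-access procedures $\mathcal{P}_{H_0},\mathcal{P}_{\mu_\beta},\mathcal{P}_{L_j},\mathcal{P}_{\mathscr{O}}$, together with $\widetilde{O}(mn\,\|\mathcal{L}\|_{\mathrm{be},1})$ extra gates. Multiplying the three stages $K\cdot(n_{\mathrm c}T/\epsilon_g)\cdot\|\mathcal{L}\|_{\mathrm{be},1}$ with the appropriate $\epsilon_g$ yields the two oracle-query bounds in the theorem; the gate count adds the $mn\,\|\mathcal{L}\|_{\mathrm{be},1}$ factor inside the simulation (producing the first summand) and the $\widetilde{O}(N)K$ gates from the gradient-estimation circuitry itself (yielding the additive $n\,T^{3/2}/\epsilon^{9/4}$ term). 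The overall success probability $2/3$ follows by fixing the failure probabilities in \Cref{lemma:grad-est-J} and \Cref{theorem:opti} to small constants and union-bounding across the $K$ iterations, absorbed into $\widetilde{O}(\cdot)$.

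The main obstacle is error bookkeeping across the three nested layers. The block-encoding precision $\epsilon'$ required by \Cref{thm:ldb-td}, the phase-oracle precision in the probability-to-phase conversion of~\cite{gilyen2019optimizing}, and the per-component accuracy implicit in \Cref{lemma:grad-est-J} all have to be chosen so that the final $\ell_2$-error on $\nabla\widetilde{J}_1$ is at most $\epsilon_g$ within the allotted failure budget, while keeping every overhead polylogarithmic in $1/\epsilon$. Because all three ingredients already absorb these overheads into $\widetilde{O}(\cdot)$, the final exponents in $\epsilon$ and $T$ come out deterministically from the multiplication above; the only remaining subtlety is verifying that the dimension dependence $\sqrt{d}=\widetilde{O}(\sqrt{n_{\mathrm c}T})$ entering the second-order gradient accuracy is correctly propagated, which is what converts the first-order scaling $T/\epsilon^{23/8}$ into the stated second-order scaling $T^{7/4}/\epsilon^{5}$.
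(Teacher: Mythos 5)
Your proposal follows essentially the same route as the paper: wrap the Lindblad simulation of \Cref{thm:ldb-td} in the probability-to-phase pipeline, feed it to the Jordan-style gradient estimator of \Cref{lemma:grad-est-J} at tolerance $\epsilon_g$ prescribed by the noisy PAGD analysis of \Cref{theorem:opti}, and multiply the three cost layers; the paper uses $d=N=O(T^{3/2}/\epsilon^{1/2})$, so the factor you flag should be $\sqrt{d}=\widetilde{O}(T^{3/4}/\epsilon^{1/4})$ rather than $\widetilde{O}(\sqrt{n_{\mathrm c}T})$, but your subsequent arithmetic already treats it as $\sqrt{N}$, so the stated exponents $T/\epsilon^{23/8}$ and $T^{7/4}/\epsilon^{5}$ come out correctly. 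Your explicit separation of the classically-computable regularization term from $\widetilde{J}_1$ is a small, harmless refinement of the paper's argument.
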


\begin{proof}
We denote gate complexity of control $U_{\mathscr{O}}$ by $ C_{\mathrm{c}-U_{\mathscr{O}}}$, gate complexity of $U_H,U_{L_j}$ by $C_{U_H,U_{L_j}}$, and gate complexity of quantum simulation by $C_{\varrho (t)}$.

The gate complexity of preparing a state after Lindblad evolution is given by Theorem \ref{thm:ldb-td},
\begin{align}
    C_{\varrho (t)} = O\left(\|\mathcal{L}\|_{\text{be},1} \frac{\log \left(\|\mathcal{L}\|_{\text{be},1} / \epsilon\right)}{\log \log \left(\|\mathcal{L}\|_{\text {be},1} / \epsilon\right)}\right) C_{U_H,U_{L_j}}+ \tilde{O}\left(m\|\mathcal{L}\|_{\text{be},1} n\right).
    \label{eq:sim_comp}
\end{align}

With copies of states $\varrho (t)$ and access to control $U_{\mathscr{O}}$ oracle, we can construct the gradient following  Section \ref{sec:gradient-est}. According to that section, we can construct the probability oracle with $\varrho (t)$, construct the phase oracle with probability oracle, and calculate the gradient with the phase oracle. The corresponding complexity is listed below:
\begin{align}
    C_{U_{J_1}} &= C_{\varrho(t)} + O(1)  + C_{c-U_O}, \\
    C_{\mathcal{O}_{J_1}} &= O(\log 1/\epsilon) C_{U_{J_1}}, \\
    C_{\nabla J} &= \tilde{O} (n_{\mathrm{c}} T \log (N / \gamma) / \epsilon) C_{\mathcal{O}_{J_1}} + \tilde{O} (N),
\end{align}
where $1-\gamma$ is the successful probability of obtaining a gradient, $n_{\mathrm{c}}$ is the number of parameters, and $N$ is the time steps $N = O\left(t^{3 / 2} / \epsilon^{1 / 2}\right)$ as in~\cite[Corollary 2.2]{li2023efficient}. 
Here we define $\gamma = \nu / k$, where $\nu$ is a small finite number and $k$ is the iteration steps, which we will give below. Combining them together, we have 
\begin{align}
    C_{\nabla J}
    =  \tilde{O} \left(n_{\mathrm{c}}\frac{\|\mathcal{L}\|_{\text {be,1}} T \log \frac{N}{\gamma}}{\epsilon}\right) C_{U_H,U_{L_j}} &+ \tilde{O} (n_{\mathrm{c}}\frac{T\log \frac{N}{\gamma}}{\epsilon}) C_{\mathrm{c-}U_O} \notag \\  + \tilde{O} (m n n_{\mathrm{c}}\frac{\|\mathcal{L}\|_{\text {be,1}} T \log \frac{N}{\gamma}}{\epsilon}+N).  
\end{align} 
Here we reassign the gradient noise $\epsilon$ with $\epsilon_g$ to distinguish from the other errors.
\begin{align}
    C_{\nabla J}
    =  \tilde{O} \left(n_{\mathrm{c}}\frac{\|\mathcal{L}\|_{\text {be,1}} T \log \frac{N}{\gamma}}{\epsilon_g}\right) C_{U_H,U_{L_j}} &+ \tilde{O} (n_{\mathrm{c}}\frac{T\log \frac{N}{\gamma}}{\epsilon_g}) C_{\mathrm{c-}U_O} \notag \\ 
    &+ \tilde{O} (m n n_{\mathrm{c}}\frac{\|\mathcal{L}\|_{\text {be,1}} T \log \frac{N}{\gamma}}{\epsilon_g}+N).  \label{eq:grad_comp}
\end{align}

With modified PAGD method(Algorithm \ref{alg:cap}), we can find a first or second order $\epsilon$-stationary point within 
\begin{align}
    k=\tilde{O}\left(\frac{\ell^{1 / 2} \varrho^{1 / 4}\left(f\left(\mathbf{x}_0\right)-f^*\right)}{\epsilon^{7 / 4}}\right) \label{eq:steps}
\end{align}
iterations by Theorem \ref{theorem:opti}. For first $\epsilon$-order stationary point, the gradient noise tolerance is
\begin{align}
    \epsilon_g=\frac{\varrho^{1 / 8}}{\sqrt{2} \ell^{1 / 4} \chi^{3 / 2} c^{3 / 2}} \epsilon^{9 / 8}.
\end{align}
For second order $\epsilon$-order stationary point, it is 
\begin{align}
    \epsilon_g=\frac{\delta  \chi^{-11} c^{-16}}{64 \ell} \frac{\epsilon^3}{\sqrt{d}} \frac{1}{\Delta_f}.
\end{align}

In each iteration, we need to calculate $\nabla J$ once and calculate $J$ once. Noticing that $C_J = O(C(\varrho (t)))$, we have
\begin{align}
    C_{\mathrm{total}} &= k\times (C_{\nabla J} + C_{\varrho (t)}). \label{eq:total_comp}
\end{align}

Substitute \cref{eq:sim_comp,eq:grad_comp},~\eqref{eq:steps} into \cref{eq:total_comp} we finish the proof. Notice that in optimization, dimension $d = N$, and here we regard $\ell$ and $\varrho$ as constants.
\end{proof}

\section*{Acknowledgments}
TL was supported by National Natural Science Foundation of China (Grant Numbers 62372006 and 92365117), and The Fundamental Research Funds for the Central Universities, Peking University. CW acknowledges support from from National Science Foundation grant CCF-2312456.
XL was supported by National Science Foundation grant CCF-2312456 and  DMS-2111221.

\bibliographystyle{plain}
\bibliography{bibtex,applications}

\appendix
\section{Time-ordered Exponential}
\label{app: timeordered}
The ordered exponential is a math concept defined in non-commutative algebras, which is a counterpart of exponential of integral in commutative algebras.
The ordered exponential of $J$ is denoted as
\begin{equation}
    \begin{aligned}
        OE[a](t)& = \mathcal{T}\{e^{\int_0^t J(t')\mathrm{d}t'}\} = \sum_{n=0}^{\infty}\frac{1}{n!}\int_0^t\cdots\int_0^t\mathcal{T}\{J(t_1')\cdots J(t_n')\}\mathrm{d}t_1'\cdots \mathrm{d}t_n'\\
        & = \sum_{n=0}^{\infty}\int_0^t\int_0^{t_n'}\cdots\int_0^{t_2'}J(t_n')\cdots J(t_1')\mathrm{d}t_1'\cdots\mathrm{d}t_n'.
    \end{aligned}
\end{equation}
From this definition we can observe that $t_1'\leq t_2'\leq \cdots \leq t_n'\leq t$ and it shows the ordered time property.

\subsection{Dyson Series}
\label{app: Dyson}
Consider a Schr\"odinger equation with time-dependent Hamiltonian
\begin{equation}
    \partial_t\psi = J(t)\psi,\quad\psi(0) = \psi_0
\end{equation}
Let $U(t_0,t)$ be the fundamental solution satisfying
\begin{equation}
    \partial_tU(t,t_0) = J(t)U(t,t_0),\quad U(t_0,t_0) = I
\end{equation}
and $\psi(t) = U(t,t_0)\psi_0$ by the definition of fundamental solution. Using fundamental theorem of calculus, we know that
\begin{equation}
    U(t,t_0) = I+\int_{t_0}^t J(t_1)U(t_1,t_0)\mathrm{d}t_1
\end{equation}
Applying the expression again for $U(t_1,t_0)$, we obtain
\begin{equation}
    \begin{aligned}
        U(t,t_0) &= I+\int_{t_0}^t J(t_1)\mathrm{d}t_1+\int_{t_0}^{t}\int_{t_0}^{t_1}J(t_1)J(t_2)U(t_2,t_0)\mathrm{d}t_2\mathrm{d}t_1\\
        &\vdots \\
        U(t,t_0)&=I+\sum_{n=1}^{\infty}\int_{t_0}^{t}\int_{t_0}^{t_1}\cdots\int_{t_0}^{t_{n-1}}J(t_1)\cdots J(t_n)\mathrm{d}t_n\cdots\mathrm{d}t_1\\
    \end{aligned}
\end{equation}
The time-order operator $\mathcal{T}$ can change the order of product such that
\begin{equation}
    \mathcal{T}(J(t_1)J(t_2))=
    \begin{cases}
    J(t_1)J(t_2)& \text{if } t_1\geq t_2\\
    J(t_2)J(t_1)& otherwise
    \end{cases}
     \\
\end{equation}
so that
\begin{equation}
\int_{t_0}^{t}\int_{t_0}^{t_1}J(t_1)J(t_2)\mathrm{d}t_2\mathrm{d}t_1 = \frac{1}{2}\int_{t_0}^{t}\int_{t_0}^{t}\mathcal{T}[J(t_1)J(t_0)]\mathrm{d}t_2\mathrm{d}t_1
\end{equation}
In general, the time-ordered operator is defined as
\begin{equation}
    \mathcal{T}[J(t_1)J(t_2)\cdots J(t_n)] = J(t_{i_1})J(t_{i_2})\cdots J(t_{i_n})
\end{equation}
where the subscript $i_1,\cdots,i_n$ is a reordering of $1,2,\cdots,n$ such that $t_{i_1}\geq t_{i_2}\geq \cdots \geq t_{i_n}$. Then the fundamental solution can be written as
\begin{equation}
\begin{aligned}
    U(t,t_0) &= I+\sum_{n=1}^{\infty}\frac{1}{n!}\int_{t_0}^t\int_{t_0}^t\cdots\int_{t_0}^t\mathcal{T}[J(t_1)\cdots J(t_n)]\mathrm{d}t_n\cdots\mathrm{d}t_1   
     = \mathcal{T}e^{\int_{t_0}^t J(s)\mathrm{d}s}.
\end{aligned}
\end{equation}

\section{Upper bounds for the gradients}\label{sec:upper-bound-gradient}
In this section, we prove \Cref{lem:upper-bound-gradient} that gives an upper bound on the gradient of $\widetilde{J}_1(\bm{u})$ in \Cref{eq:j1}. 

Consider the Lindblad master equation \ref{eq:Lindblad}. For simplicity, we consider a single control variable, expressed as a piecewise linear function in terms of the shape (or hat) functions:  $\sum_{\beta=0}^Nu_{\beta}b_{\beta}(t)$,
\begin{equation}
    \partial_t\varrho = -i[H_0+\sum_{\beta=0}^Nu_{\beta}b_{\beta}(t)\mu,\varrho]+\sum_jL_j\varrho L_j^{\dag}-\{L_j^{\dag}L_j,\varrho\} = \mathcal{L}\varrho.
\end{equation}
Let the first order derivative $\Gamma = \frac{\partial \varrho}{\partial u_{\alpha_1}} = \Gamma^{\alpha_1}$, it follows the equation,
\begin{equation}\label{eq: Gamma}
\begin{aligned}
    \partial_t\Gamma &= -i[H_0+\sum_{\beta=0}^Nu_{\beta}b_{\beta}(t)\mu,\Gamma] +\sum_jL_j\Gamma L_j^{\dag}-\{L_j^{\dag}L_j,\Gamma\}-i[b_{\alpha_1}\mu,\varrho]\\
    & = \mathcal{L} \Gamma-i[b_{\alpha_1}\mu,\varrho],\\
    \Gamma(0) &= 0.\\
\end{aligned}
\end{equation}

We first make the observation that the operator $\mathcal{L}$ is the generator of a Markovian dynamics, and $\norm{e^{t \mathcal{L}}  }$ is a completely positive map, satisfying the property that \cite{ruskai1994beyond},
\[  \norm{e^{t \mathcal{L}_I} A  } \leq \norm{A}, \quad \forall A \in \mathbb{C}^{N\times N}.  \]
Using the Duhamel's principle, this implies that,
\begin{equation}
    \|\Gamma(t)\|\leq\int_0^t \norm{ [b_{\alpha} \mu, \varrho ] }  \mathrm{d}\tau
 \leq \delta t \norm{\mu}.
\end{equation}

Similarly, one can differentiate \cref{eq: Gamma}, and let $\Gamma_2= \frac{\partial^2 \varrho}{\partial u_{\alpha_1}\partial u_{\alpha_2}} = \Gamma_2^{(\alpha_1,\alpha_2)}$, which satisfies the equation,
\begin{equation}\label{eq: Gamma2}
    \partial_t\Gamma_2 =\mathcal{L} \Gamma_2 -i[b_{\alpha_1}\mu,\Gamma^{\alpha_2} ]-i[b_{\alpha_2}\mu,\Gamma^{\alpha_1} ].
    \end{equation}
The same argument leads to the bound,
\begin{equation}
    \|\Gamma_2(t)\|\leq 2\int_0^t \norm{ [b_{\alpha} \mu, \Gamma(\tau) ] }  \mathrm{d}\tau
 \leq  2 \delta t^2 \norm{\mu}^2.
\end{equation}
More generally, for $\Gamma_k = \frac{\partial^k\varrho}{\partial u_{\alpha_1}\cdots\partial u_{\alpha_k}}$, it satisfies
\begin{equation}
    \partial\Gamma_k =  \mathcal{L} \Gamma_k-i\sum_{l=1}^k[b_{\alpha_l}\mu,\Gamma_{k-1}]
\end{equation}
Then by induction
\begin{equation}
    \|\Gamma_k\|\leq k\int_0^t\|[b_{\alpha_l}\mu,\Gamma_{k-1}] \|\mathrm{d}\tau\leq k!\delta t^k\|\mu\|^k.
\end{equation}

Using these bounds,
the derivative of the objective function $\Tilde{J}_1(\mathbf{u})$ can be bounded as follows
\begin{equation}
\begin{aligned}
    \norm{ \frac{\partial^{\bm{\alpha}} \widetilde{J}_1}{\partial u_{\alpha_1} u_{\alpha_2} \cdots u_{\alpha_k} }   } &\leq \sum_{l=0}^k
    {k\choose l}\norm{\Gamma_l}\|O\otimes I\|\norm{\Gamma_{k-l}}\\
    &\leq\sum_{l=0}^k {k\choose l}l!\delta t^l\norm{\mu}^l(k-l)!\delta t^{k-l}\norm{\mu}^{k-l} \|O\otimes I\|\\
    &= \sum_{l=0}^k k!\delta t^k\norm{\mu}^k\|O\| = (k+1)!\delta t^k\|\mu\|^k\|O\|.\\
\end{aligned}
\label{eq: J1 drv}
\end{equation}

Using similar techniques, we can show that $\tilde{J}_1(\bm u)$ is $\ell$-smooth and $\varrho$-Hessian Lipschitz such that
\begin{equation}
    \|\grad \tilde{J}_1(\bm u)-\grad \tilde{J}_1(\bm v)\|\leq \ell\|\bm u-\bm v\|, \quad  \|\grad^2 \tilde{J}_1(\bm u)-\grad^2 \tilde{J}_1(\bm v)\|\leq \varrho\|\bm u-\bm v\|,
\end{equation}
where $\|\cdot\|$ is $l_2$-norm for vectors and spectral norm for matrices. For smoothness, with the observation that there exist second order derivatives of $\tilde{J}_1(\bm u)$, it is sufficient to show that  $\nabla^2 \tilde{J}_1(u)$ is bounded, which can be obtained directly from \cref{eq: J1 drv}.
To show that $J_1(\bm u)$ is $\varrho$-Hessian Lipschitz, we utilize the property of induced matrix norm,
\begin{equation}
    \begin{aligned}
        &\|\grad^2 \tilde{J}_1(\bm u)-\grad^2 \tilde{J}_1(\bm v)\| = \sup_{\bm w\in\mathbb{R}^{N+1}}\frac{\|(\grad^2 \tilde{J}_1(\bm u)-\grad^2 \tilde{J}_1(\bm v))\bm w\|}{\|\bm w\|}\\
        =& \sup_{\bm w\in\mathbb{R}^{N+1}}\frac{\|(\int_0^1\partial_{\lambda}\grad^2 \tilde{J}_1(\bm u+\lambda(\bm u-\bm v))\mathrm{d}\lambda)\bm w\|}{\|\bm w\|}\\
        \leq & \sup_{\bm w\in\mathbb{R}^{N+1}}\frac{\int_0^1\|\partial_{\lambda}\grad^2 \tilde{J}_1(\bm u+\lambda(\bm u-\bm v))\|\mathrm{d}\lambda \|\bm w\|}{\|\bm w\|}\leq \max_{\lambda\in[0,1]} \|\partial_{\lambda}\grad^2 \tilde{J}_1(\bm u+\lambda(\bm u-\bm v))\|.
    \end{aligned}
\end{equation}
Consider the general form \cref{eq:j1}, the derivatives can be calculated as follows
\begin{equation}
    \begin{aligned}
        \partial_i \tilde J_1(\bm u) &= \bra{\varrho_N}O\otimes I\ket{\partial_i\varrho_N}+c.c.\\
        \partial_{i,j}^2\tilde J_1(\bm u) &= \bra{\varrho_N}O\otimes I\ket{\partial_{i,j}^2\varrho_N}+\bra{\partial_j\varrho_N}O\otimes I\ket{\partial_i\varrho_N}+c.c.\\
        \partial_{\lambda}\partial_{i,j}^2\tilde J_1(\bm u+\lambda(\bm u-\bm v)) &= \bra{\partial_{\lambda}\varrho_N(\bm u-\bm v)}O\otimes I\ket{\partial_{i,j}^2\varrho_N}+\bra{\partial_{\lambda}\partial_j\varrho_N(\bm u-\bm v)}O\otimes I\ket{\partial_i\varrho_N}\\
        &+\bra{\varrho_N}O\otimes I\ket{\partial_{\lambda}\partial_{i,j}^2\varrho_N(\bm u-\bm v)}+\bra{\partial_j\varrho_N}O\otimes I\ket{\partial_{\lambda}\partial_i\varrho_N(\bm u-\bm v)}+c.c.\\
        |\partial_{\lambda}\partial_{i,j}^2\tilde J_1(\bm u+\lambda(\bm u-\bm v))|&\leq 6\|\Gamma_1\|\|O\|\|\Gamma_2\|\|\bm u-\bm v\|+2\|\varrho_N\|\|O\|\|\Gamma_3\|\|\bm u-\bm v\| \\
        &\leq 4!\delta t^3\|\mu\|^3\|O\|\|\bm u-\bm v\|.
    \end{aligned}
\end{equation}
These bounds show that
\begin{align}
    \|\partial_{\lambda}\grad^2 \tilde{J}_1(\bm u+\lambda(\bm u-\bm v))\|&\leq \|\partial_{\lambda}\grad^2 \tilde{J}_1(\bm u+\lambda(\bm u-\bm v))\|_{\infty} =\max_i\sum_j|\partial_{\lambda}\partial_{i,j}^2\tilde J_1(\bm u+\lambda(\bm u-\bm v))|\nonumber \\
    &\leq (N+1)4!\delta t^3\|\mu\|^3\|O\|\|\bm u-\bm v\|   =: \varrho \|\bm u -\bm v\|,
\end{align}
which gives \cref{eq: J1 drv} when $k = 3$.
The Hessian Lipschitz coefficients $\varrho = 4!(N+1)\delta t^3 \|\mu\|^3\|O\|$ and smooth coefficients $\ell = 3!(N+1)\delta t^2\|\mu\|^2\|O\|$.

\section{Perturbed Accelerated Gradient Descent with Noisy Gradients} \label{appendix:PAGD}
In this section, we are going to prove that Perturbed Accelerated Gradient Descent (PAGD)~\cite{jin2017accelerated} can be used in our scheme, which implies that the step for optimization can be $\widetilde{O}\left(1 / \epsilon^{7 / 4}\right)$ instead of $\widetilde{O}\left(1 / \epsilon^{2}\right)$. Because gradient obtained by Jordan's algorithm \cite{Jordan_2005} is noisy, here we present a proof that as long as the error of gradient is bounded in every step, the asymptotic complexity $\widetilde{O}\left(1 / \epsilon^{7 / 4}\right)$ still holds. To distinguish from the real gradient $\nabla f(x)$, we denote the noisy gradient calculated by Jordan's algorithm \cite{gilyen2019optimizing}  as $\hat{\nabla} f(x)$, and denote the corresponding noise as $e(x)$:
\begin{align}
    \mathbf{e}\left(\mathbf{x}\right) = \nabla f(x) - \hat{\nabla} f(x).
\end{align}
We denote the upper bound of this noise as $\epsilon_g$,
\begin{align}
    \max_\tau ||\mathbf{e}(\mathbf{x_\tau})|| \le \epsilon_g,
\end{align}
where $\tau$ iterates all time steps in the optimization process.
Substituting $\nabla f(x)$ with $\hat{\nabla} f(x)$ to accommodate our scheme, we get a modified PAGD, as shown in Algorithm \ref{alg:cap}.

The major theorem in original paper~\cite{jin2017accelerated} is its Theorem 3, which bounds the time complexity by showing that when encountering no $\epsilon$-second-order stationary point, the Hamiltonian will decrease by $\mathscr{E}$ in at most $2 \mathscr{T}$ steps. In this Appendix, we prove Theorem \ref{theorem:opti} with a similar strategy.

 The proof is divided into three cases according to gradient, Hessian, and whether the negative curvature exploitation (NCE) step is invoked:
\begin{enumerate}
    \item Large gradient scenario: gradient is larger than $\epsilon$ in $\mathscr{T}$ steps.  We are going to prove that   $E_{\mathscr{T}}-E_0 \leq-\mathscr{E}$. Gradient noise is bounded by $\epsilon_g \le \tilde{O}(\epsilon^{1.125})$.
    \item Negative curvature scenario: at some step, we stuck in an saddle point, we use perturbation to escape from it. We are going to prove that we have high probability to escape from the saddle point. Gradient noise is bounded by $\epsilon_g \le \tilde{O}(\frac{ \sqrt{\varrho} }{ \ell} \frac{\epsilon^3}{\sqrt{d}} \Delta_f)$.
    \item NCE: we are going to prove that if NCE step is invoked,  Hamiltonian decreases by $\mathscr{E}$ in the next step. Gradient noise is bounded by $\epsilon_g \le \tilde{O}(\epsilon)$.
\end{enumerate}
The choices of these parameters are given in \Cref{sec:hyper}.
We first prove that improve or localize theorem still holds with $\epsilon_g = O(\epsilon)$ in Appendix \ref{app:stay_or_loc}. This theorem serves as a key ingredient in following parts. Then we show that gradient error accumulation does not affect the performance of our algorithm in all three cases cases listed above in Appendix \ref{app:large_grad}, \ref{app:neg_curv}, \ref{app:NCE} respectively. Our main technical contribution in this section is Appendix \ref{append:error-accumulation}, which gives a detailed analysis of accumulated gradient noise in PAGD. 
Finally in Appendix \ref{app:APGD_proof}, we put  together the final complexity.

\subsection{Choices of parameters}\label{sec:hyper}
Due to the complexity behind the analysis of the global convergence, numerous  hyper parameters have to be introduced, and they are summarized as follows:
\begin{itemize}
    \item $\epsilon$: condition of sufficiently small gradient $||\nabla f || < \epsilon$
    \item $\eta$: learning rate $x_{t+1} \leftarrow y_t - \eta \nabla f(y_t)$
    \item $1-\theta$: rate of momentum: $y_t \leftarrow x_t + (1-\theta)v_t$
    \item $\gamma$: condition of large non-convex: $f\left(\mathbf{x}_t\right) \leq f\left(\mathbf{y}_t\right)+\left\langle\hat{\nabla} f\left(\mathbf{y}_t\right), \mathbf{x}_t-\mathbf{y}_t\right\rangle-\frac{\gamma}{2}\left\|\mathbf{x}_t-\mathbf{y}_t\right\|^2$
    \item $\ell$: gradient Lipschitz: $\left\|\nabla f\left(\mathbf{x}_1\right)-\nabla f\left(\mathbf{x}_2\right)\right\| \leq \ell\left\|\mathbf{x}_1-\mathbf{x}_2\right\| \quad \forall \mathbf{x}_1, \mathbf{x}_2 .$
    \item $\varrho$: Hessian Lipschitz: $\left\|\nabla^2 f\left(\mathbf{x}_1\right)-\nabla^2 f\left(\mathbf{x}_2\right)\right\| \leq \varrho\left\|\mathbf{x}_1-\mathbf{x}_2\right\| \quad \forall \mathbf{x}_1, \mathbf{x}_2$
    \item $\kappa:=\ell / \sqrt{\varrho \epsilon}$
    \item $s$: NCE step length
    \item $c,\chi$: two large enough constants
    \item $\mathscr{T}=\sqrt{\kappa} \cdot \chi c$: a period
    \item $\mathscr{E}:=\sqrt{\frac{\epsilon^3}{\varrho}} \cdot \chi^{-5} c^{-7}$: scale of accumulated error in $\mathscr{T}$ steps
    \item $\mathscr{S}:=\sqrt{\frac{2 \eta \mathscr{T} \mathscr{E}}{\theta}}=\sqrt{\frac{2 \epsilon}{\varrho}} \cdot \chi^{-2} c^{-3}$
    \item $\mathscr{M}:=\frac{\epsilon \sqrt{\kappa}}{\ell} c^{-1}$: small momentum condition $\left\|\mathbf{v}_0\right\| \leq \mathscr{M}$
    \item $\Delta_f := f(y_0) - f(y^*)$: the difference between initial function value and the minimum function value.
    \item $r = \eta \epsilon \cdot \chi^{-5} c^{-8}$: the radius for perturbation in escaping the saddle point.
\end{itemize}
Settings of hyper parameters:
\begin{itemize}
    \item $\eta=\frac{1}{4 \ell}$
    \item $\theta=\frac{1}{4 \sqrt{\kappa}}$
    \item $\gamma=\frac{\theta^2}{\eta}$
    \item $s=\frac{\gamma}{4 \varrho}$
    \item $\mathcal{S}$: the subspace with eigenvalues in $\left(\theta^2 /\left[\eta(2-\theta)^2\right], \ell\right]$
    \item $\mathcal{S}^c$: the subspace with eigenvalues in $\left[-\ell , \theta^2 /\left[\eta(2-\theta)^2\right]\right]$
\end{itemize}
It is noticed that $\theta^2 /\left[\eta(2-\theta)^2\right]=\Theta(\sqrt{\varrho \epsilon})$.

\subsection{Improve or localize theorem}[Modified from lemma 4 in \cite{jin2017accelerated}; Hamiltonian decreases monotonically]\label{app:stay_or_loc}
\begin{lemma}\label{lem:improve_or_loc_1}
\label{lem:PAGD2}
Assume that the function $f(\cdot)$ is $\ell$-smooth and set the learning rate to be $\eta \leq \frac{1}{2 \ell}, \theta \in\left[2 \eta \gamma, \frac{1}{2}\right]$ in $A G D$ (Algorithm 1). Then, for every iteration $t$ where NCE condition $f\left(\mathbf{x}_t\right) \leq f\left(\mathbf{y}_t\right)+\left\langle\hat{\nabla} f\left(\mathbf{y}_t\right), \mathbf{x}_t-\mathbf{y}_t\right\rangle-\frac{\gamma}{2}\left\|\mathbf{x}_t-\mathbf{y}_t\right\|^2$ does not hold, we have:
    \begin{align}
    E_{t+1} &\leq E_t-\frac{\theta}{2 \eta}\left||\mathbf{v}_t\right||^2-\frac{\eta}{4}\left\|\nabla f\left(\mathbf{y}_t\right)+\mathbf{e}\left(\mathbf{y}_t\right)\right\|^2+\eta\left\|\mathbf{e}\left(\mathbf{y}_t\right)\right\|^2 \\
     E_{t+1} &\leq E_t-\frac{\theta}{2 \eta}\left||\mathbf{v}_t\right||^2-\frac{\eta}{8}\left\|\nabla f\left(\mathbf{y}_t\right)\right\|^2+\frac{5}{4}\eta\left\|\mathbf{e}\left(\mathbf{y}_t\right)\right\|^2
    \end{align}
\end{lemma}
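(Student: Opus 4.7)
The plan is to follow the structure of the original improve-or-localize argument (Lemma 4 of \cite{jin2017accelerated}) but carry the noisy gradient $\hat{\nabla} f(y_t) = \nabla f(y_t) - e(y_t)$ through every step unchanged, introducing Young-type inequalities only at the very end to isolate the noise contribution. Recall that the AGD update produces $x_{t+1} = y_t - \eta \hat{\nabla} f(y_t)$ and $v_{t+1} = x_{t+1}-x_t = (1-\theta) v_t - \eta \hat{\nabla} f(y_t)$, while the Hamiltonian is $E_t := f(x_t) + \tfrac{1}{2\eta}\|v_t\|^2$.

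First, I would apply $\ell$-smoothness of $f$ to obtain $f(x_{t+1}) \le f(y_t) - \eta\langle \nabla f(y_t),\hat{\nabla} f(y_t)\rangle + \tfrac{\ell\eta^2}{2}\|\hat{\nabla} f(y_t)\|^2$, and use the negation of the stated (noisy) NCE condition together with the identity $y_t - x_t = (1-\theta) v_t$ to bound $f(y_t) < f(x_t) + (1-\theta)\langle \hat{\nabla} f(y_t), v_t\rangle + \tfrac{\gamma(1-\theta)^2}{2}\|v_t\|^2$. Next I would expand $\|v_{t+1}\|^2/(2\eta)$ from $v_{t+1} = (1-\theta)v_t - \eta\hat{\nabla} f(y_t)$; summing the three contributions, the cross terms $(1-\theta)\langle\hat{\nabla} f(y_t), v_t\rangle$ cancel exactly, leaving only $\|v_t\|^2$ and $\|\hat{\nabla} f(y_t)\|^2$ terms. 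Using $\eta \le 1/(2\ell)$ to bound $\tfrac{\ell\eta^2}{2} + \tfrac{\eta}{2}\le \tfrac{3\eta}{4}$ and $\theta \ge 2\eta\gamma$ (together with $\theta \le 1/2$) to reduce the coefficient of $\|v_t\|^2$ to at most $-\theta/(2\eta)$, I arrive at the intermediate bound
\[
E_{t+1} - E_t \;\le\; -\tfrac{\theta}{2\eta}\|v_t\|^2 - \eta\langle \nabla f(y_t),\hat{\nabla} f(y_t)\rangle + \tfrac{3\eta}{4}\|\hat{\nabla} f(y_t)\|^2,
\]
which is exactly the noiseless-looking form but written in terms of $\hat{\nabla} f(y_t)$.

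The remainder of the proof is elementary algebra on the last two terms. Writing $g := \nabla f(y_t)$ and $e := e(y_t)$ so that $\hat{\nabla} f(y_t) = g - e$, a direct expansion gives $-\eta\langle g, g-e\rangle + \tfrac{3\eta}{4}\|g-e\|^2 = -\tfrac{\eta}{4}\|g\|^2 - \tfrac{\eta}{2}\langle g,e\rangle + \tfrac{3\eta}{4}\|e\|^2$. Completing the square via $-\tfrac{\eta}{4}\|g+e\|^2 = -\tfrac{\eta}{4}\|g\|^2 - \tfrac{\eta}{2}\langle g,e\rangle - \tfrac{\eta}{4}\|e\|^2$ recovers the first stated inequality with residual $+\eta\|e\|^2$, while applying Young's inequality $-\tfrac{\eta}{2}\langle g,e\rangle \le \tfrac{\eta}{8}\|g\|^2 + \tfrac{\eta}{2}\|e\|^2$ yields $-\tfrac{\eta}{8}\|g\|^2 + \tfrac{5\eta}{4}\|e\|^2$, giving the second inequality. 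The main obstacle is purely bookkeeping: ensuring that the cross terms really cancel and that the coefficient of $\|v_t\|^2$ is nonpositive under the stated range of $\theta$; no new conceptual ingredient beyond the noiseless case is needed, because the noise enters linearly through $\hat{\nabla} f(y_t)$ and can be quarantined at the final algebraic step.
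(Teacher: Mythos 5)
Your proposal is correct and follows essentially the same route as the paper: smoothness bound on $f(x_{t+1})$, the negated NCE inequality rewritten with $y_t-x_t=(1-\theta)v_t$, expansion of $\|v_{t+1}\|^2/(2\eta)$, cancellation of the cross terms, and the parameter ranges $\eta\le 1/(2\ell)$, $\theta\in[2\eta\gamma,1/2]$ to obtain $E_{t+1}-E_t\le -\tfrac{\theta}{2\eta}\|v_t\|^2-\tfrac{\eta}{4}\|\hat\nabla f(y_t)\|^2-\eta\,e(y_t)\cdot\hat\nabla f(y_t)$, which is what the paper records before the final algebra. The only cosmetic difference is at the last step: the paper completes the square in $\hat\nabla f(y_t)$ (writing $\hat\nabla f+2e=\nabla f+e$) and then applies $\|\nabla f+e\|^2\ge\tfrac12\|\nabla f\|^2-\|e\|^2$, whereas you substitute $\hat\nabla f=g-e$ first and then either complete the square or apply Young directly on the cross term; these are algebraically equivalent.
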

\begin{proof}
    \begin{align}
        f\left(\mathbf{x}_{t+1}\right) \leq f\left(\mathbf{y}_t\right)-\eta \nabla f\left(\mathbf{y}_t\right)\cdot \hat{\nabla} f\left(\mathbf{y}_t\right)+\frac{\ell \eta^2}{2}\left\|\hat{\nabla} f\left(\mathbf{y}_t\right)\right\|^2
    \end{align}
    \begin{align}
        f\left(\mathbf{x}_t\right) \geq f\left(\mathbf{y}_t\right)+\left\langle\hat{\nabla} f\left(\mathbf{y}_t\right), \mathbf{x}_t-\mathbf{y}_t\right\rangle-\frac{\gamma}{2}\left\|\mathbf{y}_t-\mathbf{x}_t\right\|^2
    \end{align}
    \begin{align}
        \left\|\mathbf{x}_{t+1}-\mathbf{x}_t\right\|^2 & =\left\|\mathbf{y}_t-\mathbf{x}_t-\eta \hat{\nabla} f\left(\mathbf{y}_t\right)\right\|^2 \\
        & =\left[(1-\theta)^2\left\|\mathbf{x}_t-\mathbf{x}_{t-1}\right\|^2-2 \eta\left\langle\hat{\nabla} f\left(\mathbf{y}_t\right), \mathbf{y}_t-\mathbf{x}_t\right\rangle+\eta^2\left\|\hat{\nabla}f\left(\mathbf{y}_t\right)\right\|^2\right]
        \end{align}
        so we have
        \begin{align}
            &f\left(\mathbf{x}_{t+1}\right)+\frac{1}{2 \eta}\left\|\mathbf{x}_{t+1}-\mathbf{x}_t\right\|^2 \\
            \leq & f\left(\mathbf{x}_t\right)+\frac{1}{2 \eta}\left\|\mathbf{x}_t-\mathbf{x}_{t-1}\right\|^2-\frac{\theta}{2 \eta}\left\|\mathbf{v}_t\right\|^2-\frac{\eta}{4}\left\|\hat{\nabla} f\left(\mathbf{y}_t\right)\right\|^2 -\eta \mathbf{e}(\mathbf{y}_{t})\cdot \hat{\nabla}f(\mathbf{y}_{t}) \\
             = & f\left(\mathbf{x}_t\right)+\frac{1}{2 \eta}\left\|\mathbf{x}_t-\mathbf{x}_{t-1}\right\|^2-\frac{\theta}{2 \eta}\left\|\mathbf{v}_t\right\|^2-\frac{\eta}{4}\left\|\hat{\nabla} f\left(\mathbf{y}_t\right)+2\mathbf{e}(\mathbf{y}_{t})\right\|^2 + \eta ||\mathbf{e}(\mathbf{y}_{t})||^2\\
             = & f\left(\mathbf{x}_t\right)+\frac{1}{2 \eta}\left\|\mathbf{x}_t-\mathbf{x}_{t-1}\right\|^2-\frac{\theta}{2 \eta}\left\|\mathbf{v}_t\right\|^2-\frac{\eta}{4}\left\|{\nabla} f\left(\mathbf{y}_t\right)+\mathbf{e}(\mathbf{y}_{t})\right\|^2 + \eta ||\mathbf{e}(\mathbf{y}_{t})||^2.
        \end{align}
        We can further simplified the inequality with
        \begin{align}
            \left\|\nabla f\left(\mathbf{y}_t\right)+\mathbf{e}\left(\mathbf{y}_t\right)\right\|^2 \ge \frac{1}{2} \left\|\nabla f\left(\mathbf{y}_t\right)\right\|^2 - \left\|\mathbf{e}\left(\mathbf{y}_t\right)\right\|^2,
        \end{align}
        and we have
        \begin{align}
            &f\left(\mathbf{x}_{t+1}\right)+\frac{1}{2 \eta}\left\|\mathbf{x}_{t+1}-\mathbf{x}_t\right\|^2 \notag \\ 
            &\le f\left(\mathbf{x}_t\right)+\frac{1}{2 \eta}\left\|\mathbf{x}_t-\mathbf{x}_{t-1}\right\|^2-\frac{\theta}{2 \eta}\left\|\mathbf{v}_t\right\|^2-\frac{\eta}{8}\left\|\nabla f\left(\mathbf{y}_t\right)\right\|^2+\frac{5}{4}\eta\left\|\mathbf{e}\left(\mathbf{y}_t\right)\right\|^2.
        \end{align}
\end{proof}

\begin{algorithm}
    \caption{modified perturbed AGD}\label{alg:cap}
    \begin{algorithmic}[1]
    \State $\boldsymbol{v}_0 \gets 0$
    \For{t = 0,1,...}
        \If{$||\hat{\nabla} f(\mathbf{x}_t)|| \leq \epsilon $ and no perturbation in last $\mathscr{T}$steps }

         $\mathbf{x}_t \leftarrow \mathbf{x}_t+\xi_t \quad \xi_t \sim \operatorname{Unif}\left(\mathbb{B}_0(r)\right)$
        \EndIf
        \State $\mathbf{y}_t \leftarrow \mathbf{x}_t+(1-\theta) \mathbf{v}_t$
        \State $\mathbf{x}_{t+1} \leftarrow \mathbf{y}_t-\eta \hat{\nabla} f\left(\mathbf{y}_t\right)$
        \State $\mathbf{v}_{t+1} \leftarrow \mathbf{x}_{t+1}-\mathbf{x}_t$
        \If{$f\left(\mathbf{x}_t\right) \leq f\left(\mathbf{y}_t\right)+\left\langle\hat{\nabla} f\left(\mathbf{y}_t\right), \mathbf{x}_t-\mathbf{y}_t\right\rangle-\frac{\gamma}{2}\left\|\mathbf{x}_t-\mathbf{y}_t\right\|^2$}
        \State $\left(\mathbf{x}_{t+1}, \mathbf{v}_{t+1}\right) \leftarrow \text { Negative-Curvature-Exploitation }\left(\mathbf{x}_t, \mathbf{v}_t, s\right)$
        \EndIf
    \EndFor

    \end{algorithmic}
\end{algorithm}

\begin{theorem}[Modified from Corollary 6 of \cite{jin2017accelerated}]  Under the same setting as in Lemma \ref{lem:improve_or_loc_1},  we have:
   \label{theo:localize_or_stay}
\begin{align}
\sum_{\tau=t+1}^{t+T}||\mathbf{x}_\tau-\mathbf{x}_{\tau-1}||^2 \leq \frac{2 \eta}{\theta}\left(E_t-E_{t+T}\right) + \frac{2\eta^2}{\theta}\sum_{t+1}^{t+T}|| \mathbf{e}\left(\mathbf{y}_t\right) ||^2
\label{eq:localize_or_stay}
\end{align}
\end{theorem}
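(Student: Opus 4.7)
The plan is to derive the bound by telescoping the single-step Hamiltonian decrease supplied by \Cref{lem:improve_or_loc_1}. Concretely, I would apply the first inequality of that lemma at every index $\tau \in \{t, t+1, \ldots, t+T-1\}$ for which the NCE condition does \emph{not} trigger, drop the nonpositive term $-\tfrac{\eta}{4}\|\nabla f(\mathbf{y}_\tau)+\mathbf{e}(\mathbf{y}_\tau)\|^2$, and read off the one-step bound
\begin{equation*}
  E_{\tau+1} \;\leq\; E_\tau \;-\; \frac{\theta}{2\eta}\,\|\mathbf{v}_\tau\|^2 \;+\; \eta\,\|\mathbf{e}(\mathbf{y}_\tau)\|^2.
\end{equation*}

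Next I would handle the indices $\tau$ at which the NCE branch of Algorithm~\ref{alg:cap} fires. For such $\tau$ one invokes the Negative-Curvature-Exploitation routine; following the same reasoning as in the noiseless PAGD analysis, the NCE step produces a Hamiltonian drop at least as large as $\tfrac{\theta}{2\eta}\|\mathbf{v}_\tau\|^2$ (in fact more), so the inequality displayed above still holds after the update (the noise term is trivially nonnegative and hence may be retained on the right-hand side). This justifies using the one-step bound uniformly across the window.

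Summing the inequality over $\tau = t, \ldots, t+T-1$ telescopes the $E_\tau$ terms and yields
\begin{equation*}
  E_{t+T} \;\leq\; E_t \;-\; \frac{\theta}{2\eta}\sum_{\tau=t}^{t+T-1}\|\mathbf{v}_\tau\|^2 \;+\; \eta\sum_{\tau=t}^{t+T-1}\|\mathbf{e}(\mathbf{y}_\tau)\|^2.
\end{equation*}
Rearranging and using $\mathbf{v}_\tau = \mathbf{x}_\tau - \mathbf{x}_{\tau-1}$ (so $\|\mathbf{v}_\tau\|^2 = \|\mathbf{x}_\tau-\mathbf{x}_{\tau-1}\|^2$) after a shift of summation index gives exactly
\begin{equation*}
  \sum_{\tau=t+1}^{t+T}\|\mathbf{x}_\tau-\mathbf{x}_{\tau-1}\|^2 \;\leq\; \frac{2\eta}{\theta}\bigl(E_t - E_{t+T}\bigr) \;+\; \frac{2\eta^2}{\theta}\sum_{\tau=t+1}^{t+T}\|\mathbf{e}(\mathbf{y}_{\tau-1})\|^2,
\end{equation*}
which matches \eqref{eq:localize_or_stay} up to a harmless relabeling of the argument of $\mathbf{e}$.

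The only real subtlety I anticipate is the NCE case: the lemma as stated explicitly excludes those iterations, so I must argue separately that the per-step bound remains valid when NCE fires. The standard way is to appeal to the property of the NCE subroutine that it drives $\mathbf{v}_{\tau+1}\leftarrow 0$ while decreasing $f$ by at least a constant multiple of $s^3\varrho$ (independent of the noise entering that step), which is strictly larger than $\frac{\theta}{2\eta}\|\mathbf{v}_\tau\|^2$ for the chosen parameter settings. Once this is checked, the telescoping argument above is entirely mechanical, and no additional inequality on $\epsilon_g$ is required here---the noise simply appears as a source term on the right-hand side.
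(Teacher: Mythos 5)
Your core plan — telescoping the one-step Hamiltonian bound from \Cref{lem:improve_or_loc_1}, dropping the nonpositive $\|\nabla f(\mathbf{y}_\tau)+\mathbf{e}(\mathbf{y}_\tau)\|^2$ term, and rearranging — is exactly what the paper does; its proof is a one-liner (``directly obtained by Lemma \ref{lem:PAGD2}''), and the paper, like Jin et al.'s Corollary 6, tacitly restricts attention to a window in which no NCE step fires. So the telescoping portion of your argument is correct and matches the intended proof.

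The extra paragraph in which you patch the NCE case, however, contains a genuine gap, and it is worth being precise about why. You claim the NCE subroutine decreases $f$ by at least a constant multiple of $s^3\varrho$, independently of the noise, and that this dominates $\frac{\theta}{2\eta}\|\mathbf{v}_\tau\|^2$ for the chosen parameters. Neither part survives inspection. First, \Cref{aux:modi_lem_5} in fact only guarantees $E_{\tau+1}\le E_\tau-\min\{\tfrac{s^2}{2\eta},\tfrac12(\gamma-2\varrho s)s^2 - s\|\mathbf{e}(\mathbf{y}_\tau)\|\}$, so the drop is \emph{not} noise-independent; the $-s\|\mathbf{e}(\mathbf{y}_\tau)\|$ term is linear in the noise and does not match the $\eta\|\mathbf{e}(\mathbf{y}_\tau)\|^2$ slack your telescoped bound can absorb. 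Second, with $\gamma=\theta^2/\eta$ and $s=\gamma/(4\varrho)$, the noiseless NCE drop is $\tfrac12(\gamma-2\varrho s)s^2=\tfrac{\theta^2}{4\eta}s^2$, while $\tfrac{\theta}{2\eta}\|\mathbf{v}_\tau\|^2$ can be as large as $\tfrac{\theta}{2\eta}s^2$ as $\|\mathbf{v}_\tau\|\to s^-$; since $\theta\le 1/2$, you would need $\theta\ge 2$ for the claimed dominance, so it fails for $\|\mathbf{v}_\tau\|$ near $s$. (The dominance \emph{does} hold in the $\|\mathbf{v}_\tau\|\ge s$ branch, where the NCE routine simply zeros the velocity and drops $E$ by $\tfrac{1}{2\eta}\|\mathbf{v}_\tau\|^2$, but not in the $\|\mathbf{v}_\tau\|<s$ branch.) Finally, note that after an NCE step with $\|\mathbf{v}_\tau\|<s$, the routine resets $\mathbf{v}_{\tau+1}\leftarrow 0$ while $\|\mathbf{x}_{\tau+1}-\mathbf{x}_\tau\|=s$, breaking the identity $\mathbf{v}_\tau=\mathbf{x}_\tau-\mathbf{x}_{\tau-1}$ on which your telescoping relabeling relies. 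The clean fix is not to patch NCE at all: simply state (as Jin et al.\ do) that the bound holds for any time interval during which no NCE step is performed, which is exactly the regime in which the paper later invokes it.
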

\begin{proof}
    \cref{eq:localize_or_stay} can be directly obtained by Lemma \ref{lem:PAGD2}
\end{proof}
\begin{proposition}
Suppose that $\max_\tau ||\mathbf{e}\left(\mathbf{y}_\tau\right)|| < \frac{1}{2\sqrt{2}} \epsilon$. Then the following inequality holds,
    \begin{align}
        \sum_{\tau=t+1}^{t+\mathscr{T}}||\mathbf{x}_\tau-\mathbf{x}_{\tau-1}||^2 \leq \frac{2 \eta}{\theta}\left(E_t-E_{t+\mathscr{T}}\right)+\frac{\eta}{\theta}\mathscr{E}.
        \end{align}
\end{proposition}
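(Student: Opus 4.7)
The plan is a direct application of the preceding improve-or-localize bound (the modified Corollary~6 just stated) together with the hypothesis on the gradient noise. First I would instantiate that theorem at $T = \mathscr{T}$, giving
\begin{align*}
\sum_{\tau=t+1}^{t+\mathscr{T}}\|\mathbf{x}_\tau-\mathbf{x}_{\tau-1}\|^2 \leq \frac{2\eta}{\theta}\bigl(E_t - E_{t+\mathscr{T}}\bigr) + \frac{2\eta^2}{\theta}\sum_{\tau=t+1}^{t+\mathscr{T}}\|\mathbf{e}(\mathbf{y}_\tau)\|^2.
\end{align*}
Matching this against the claimed inequality reduces everything to proving the scalar estimate $2\eta \sum_{\tau=t+1}^{t+\mathscr{T}}\|\mathbf{e}(\mathbf{y}_\tau)\|^2 \leq \mathscr{E}$, i.e., controlling the noise-accumulation term by the budget $\mathscr{E}/(2\eta)$.

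Second, I would apply the uniform hypothesis $\|\mathbf{e}(\mathbf{y}_\tau)\|^2 < \epsilon^2/8$ summand by summand. The noise sum is then at most $\mathscr{T}\epsilon^2/8$, so it suffices to verify the single inequality $\eta\,\mathscr{T}\,\epsilon^2 \leq 4\mathscr{E}$, which no longer involves the iterates.

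Third, I would verify this by substituting the parameter choices from Appendix~B.1, namely $\eta = 1/(4\ell)$, $\mathscr{T} = \sqrt{\kappa}\,\chi c = \ell^{1/2}\chi c/(\varrho\epsilon)^{1/4}$, and $\mathscr{E} = \epsilon^{3/2}\varrho^{-1/2}\chi^{-5}c^{-7}$. After rearrangement, the required inequality reads $\epsilon^{1/4} \leq 16\,\ell^{1/2}\varrho^{-1/4}\chi^{-6}c^{-8}$, which is automatic under the standing assumption $\epsilon \leq \ell^2/\varrho$ of Theorem~\ref{theorem:opti} once $c$ is taken large enough (the $\chi$ factor is polylogarithmic in the problem parameters and is absorbed into the $\widetilde{O}(\cdot)$ complexity estimates elsewhere in the paper).

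I do not anticipate a real obstacle: the noisy analog of Corollary~6 has already done the work of isolating the error term, and the calibration $\mathscr{S} = \sqrt{2\eta\mathscr{T}\mathscr{E}/\theta}$ among the PAGD hyperparameters in Appendix~B.1 is designed precisely so that budget inequalities of this form close. The only bookkeeping is to confirm that the noise tolerance $\epsilon/(2\sqrt{2})$ chosen in the proposition is consistent with the global gradient-error budget $\epsilon_g$ used in subsequent sections (large-gradient, escape-saddle, and NCE cases), which is why this proposition is stated as the first building block of the convergence proof.
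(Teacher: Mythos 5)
Your first two reductions are exactly the ones the paper makes: apply the improve-or-localize bound (Theorem~\ref{theo:localize_or_stay}), then bound the error sum by $\mathscr{T}\max_\tau\|\mathbf{e}(\mathbf{y}_\tau)\|^2 \le \mathscr{T}\epsilon^2/8$, reducing the proposition to the scalar inequality $\eta\mathscr{T}\epsilon^2 \le 4\mathscr{E}$. Up to that point the argument matches the paper, which also just chains
\begin{align*}
\frac{2\eta^2}{\theta}\sum_{\tau}\|\mathbf{e}(\mathbf{y}_\tau)\|^2 \le \frac{2\eta^2}{\theta}\mathscr{T}\frac{\epsilon^2}{8} \le \frac{\eta}{\theta}\mathscr{E}
\end{align*}
and substitutes back into \cref{eq:localize_or_stay}.

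Where you go wrong is in the justification of the last inequality. You correctly rearrange it to $\epsilon^{1/4} \le 16\,\ell^{1/2}\varrho^{-1/4}\chi^{-6}c^{-8}$, but then claim this is automatic from $\epsilon \le \ell^2/\varrho$ ``once $c$ is taken large enough.'' That is the wrong direction: the right-hand side $16\,\ell^{1/2}\varrho^{-1/4}\chi^{-6}c^{-8}$ shrinks as $c$ grows, so enlarging $c$ makes the inequality harder to satisfy, not easier. Under $\epsilon \le \ell^2/\varrho$ one only has $\epsilon^{1/4} \le \ell^{1/2}\varrho^{-1/4}$, and the required bound would force $\chi^6 c^8 \le 16$, which fails for $c \ge 2$. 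Equivalently, the reduced inequality reads $\epsilon \le 16^4\,\chi^{-24}c^{-32}\,\ell^2/\varrho$, which is strictly \emph{stronger} than the standing assumption $\epsilon \le \ell^2/\varrho$ by a large constant factor. The calibration $\mathscr{S}=\sqrt{2\eta\mathscr{T}\mathscr{E}/\theta}$ you invoke is an identity relating $\mathscr{S}$ to the other hyperparameters; it does not supply the missing budget inequality. Note the paper's own proof simply asserts the final $\le \frac{\eta}{\theta}\mathscr{E}$ without substituting parameters, so your calculation actually surfaces a point the paper glosses over: the stated hypothesis $\max_\tau\|\mathbf{e}(\mathbf{y}_\tau)\| < \epsilon/(2\sqrt{2})$ is too weak for the claimed budget to close on its own, and the proposition is only applied downstream with much tighter noise bounds of order $\epsilon/(c\sqrt{\mathscr{T}})$. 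In any case, the step ``once $c$ is taken large enough'' is a genuine error and should be removed or replaced by an explicit constraint on $\epsilon$ in terms of $c,\chi$.
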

\begin{proof}
    Noticed that
    \begin{align}
        \frac{2 \eta^2}{\theta}\sum_{t+1}^{t+\mathscr{T}}\left\| \mathbf{e}\left(\mathbf{y}_t\right)\right\|^2 \le \frac{2 \eta^2}{\theta} \mathscr{T} \max _\tau\left\|\mathbf{e}\left(\mathbf{y}_\tau\right)\right\|^2 \le \frac{2 \eta^2}{\theta} \mathscr{T} \frac{\epsilon^2}{8} \le \frac{\eta}{\theta}\mathscr{E}
    \end{align}
    Substitute it into \cref{eq:localize_or_stay}
\end{proof}
In modified PAGD, we can define
\begin{align}
    \hat{\delta}_\tau = \delta_\tau + e\left(y_\tau\right)
\end{align}
    \begin{align}
        \hat{\nabla} f\left(\mathbf{y}_\tau\right)&=\nabla f(\mathbf{0})+\mathcal{H} \mathbf{y}_\tau+\delta_\tau + e\left(y_\tau\right) \\
        &=\nabla f(\mathbf{0})+\mathcal{H} \mathbf{y}_\tau+\hat{\delta}_\tau
    \end{align}

\subsection{Large gradient scenario} 
\label{app:large_grad}

In this subsection, we suppose that gradient is large,
\begin{align}
    \left\|\nabla f\left(\mathbf{x}_\tau\right)\right\| \geq \epsilon \text{  for  } \tau \in [0,\mathscr{T}]
\end{align}
\begin{lemma}[Modified from Lemma 15 in \cite{jin2017accelerated}]\label{aux:modi_lem_15}
      if $\left\|\mathbf{v}_t\right\| \geq \mathscr{M}$ or $\left\|\nabla f\left(\mathbf{x}_t\right)\right\| \geq 2 \ell \mathscr{M}$, and $e(\mathbf{y}_t) <\frac{1}{c^2}\epsilon$,and at time step $t$ only modified AGD is used without NCE or perturbation, then:
\begin{align}
E_{t+1}-E_t \leq-4 \mathscr{E} / \mathscr{T}.
\end{align}
\end{lemma}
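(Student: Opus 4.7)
The plan is to apply the noisy improve-or-localize inequality (Lemma \ref{lem:improve_or_loc_1}) at step $t$ and show that in either of the two hypothesized cases the kinetic or gradient negative term dominates both the noise budget $\tfrac{5\eta}{4}\|\mathbf{e}(\mathbf{y}_t)\|^2$ and the target decrease $4\mathscr{E}/\mathscr{T}$. Since NCE does not fire and the step is a plain modified-AGD update, Lemma \ref{lem:improve_or_loc_1} yields
\[
E_{t+1}-E_t \;\leq\; -\frac{\theta}{2\eta}\|\mathbf{v}_t\|^2 \;-\; \frac{\eta}{8}\|\nabla f(\mathbf{y}_t)\|^2 \;+\; \frac{5\eta}{4}\|\mathbf{e}(\mathbf{y}_t)\|^2.
\]
With $\eta = 1/(4\ell)$ and the hypothesis $\|\mathbf{e}(\mathbf{y}_t)\| < \epsilon/c^2$, the noise term is at most $5\epsilon^2/(16\ell c^4)$, and invoking $\epsilon \leq \ell^2/\varrho$ this is bounded by $5\epsilon^{7/4}/(16\ell^{1/2}\varrho^{1/4} c^4)$, which for large $c$ is negligible next to quantities of order $\epsilon^{7/4}/(\ell^{1/2}\varrho^{1/4} c^2)$.

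In the first case, $\|\mathbf{v}_t\| \geq \mathscr{M}$, the velocity term contributes at least $-\tfrac{\theta}{2\eta}\mathscr{M}^2$. Substituting $\theta = 1/(4\sqrt{\kappa})$, $\mathscr{M} = \epsilon\sqrt{\kappa}/(\ell c)$, and $\kappa = \ell/\sqrt{\varrho\epsilon}$, this simplifies to $-\epsilon^{7/4}/(2\ell^{1/2}\varrho^{1/4} c^2)$. In the second case, where $\|\mathbf{v}_t\| < \mathscr{M}$ but $\|\nabla f(\mathbf{x}_t)\| \geq 2\ell\mathscr{M}$, the relation $\mathbf{y}_t = \mathbf{x}_t + (1-\theta)\mathbf{v}_t$ combined with $\ell$-smoothness gives $\|\nabla f(\mathbf{y}_t)\| \geq \|\nabla f(\mathbf{x}_t)\| - \ell\|\mathbf{v}_t\| \geq \ell\mathscr{M}$. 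Thus the gradient term contributes at least $\eta\ell^2\mathscr{M}^2/8 = \epsilon^{3/2}/(32 c^2\sqrt{\varrho})$, which is again at least $\epsilon^{7/4}/(32 c^2 \ell^{1/2}\varrho^{1/4})$ by $\epsilon \leq \ell^2/\varrho$.

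Since $4\mathscr{E}/\mathscr{T} = 4\epsilon^{7/4}/(\ell^{1/2}\varrho^{1/4}\chi^6 c^8)$, in both cases choosing $c \geq c_{\max}$ large enough makes the $\Theta(1/c^2)$-scale decrease exceed the sum of the noise bound (of scale $1/c^4$) and the target bound $4\mathscr{E}/\mathscr{T}$ (of scale $1/(\chi^6 c^8)$), so that $E_{t+1}-E_t \leq -4\mathscr{E}/\mathscr{T}$ follows. The main obstacle is the careful bookkeeping of constants through the nested substitutions of $\kappa$, $\mathscr{M}$, $\mathscr{E}$, and $\mathscr{T}$, and verifying that the single pointwise noise bound $\|\mathbf{e}(\mathbf{y}_t)\| < \epsilon/c^2$ is implied by (and hence compatible with) the first-order noise budget $\epsilon_g$ set in Theorem \ref{theorem:opti}; once that is done, the argument is a direct comparison of three explicit polynomial scales in $c$.
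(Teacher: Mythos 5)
Your proposal is correct and follows essentially the same route as the paper: invoke the noisy improve-or-localize inequality (Lemma~\ref{lem:improve_or_loc_1}), split into the two cases (large velocity / large gradient at $\mathbf{x}_t$), translate each into a lower bound on the dominant negative term, and compare the resulting $\Theta(\epsilon^{7/4}\ell^{-1/2}\varrho^{-1/4}c^{-2})$ decrease against the $O(c^{-4})$-scale noise term and the $O(c^{-8})$-scale target $4\mathscr{E}/\mathscr{T}$, absorbing constants by taking $c$ large. The constant-tracking in your substitutions of $\theta$, $\eta$, $\mathscr{M}$, $\kappa$, $\mathscr{E}$, $\mathscr{T}$ checks out and matches the paper's $\Omega(\cdot)/O(\cdot)$ bookkeeping.
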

\begin{proof}
    When $\left\|\mathbf{v}_t\right\| \geq \frac{\epsilon \sqrt{\kappa}}{10 \ell}$, by Lemma \ref{lem:PAGD2}, we have:
\begin{align}
E_{t+1}-E_t &\leq-\frac{\theta}{2 \eta}\left\|\mathbf{v}_t\right\|^2 +\frac{5}{4} \eta\left\|\mathbf{e}\left(\mathbf{y}_t\right)\right\|^2 \\
&\leq-\Omega\left(\frac{\ell}{\sqrt{\kappa}} \frac{\epsilon^2 \kappa}{\ell^2} c^{-2}\right)+\frac{5}{4}\eta (\frac{\epsilon}{c^2})^2=-\Omega\left(\frac{\epsilon^2 \sqrt{\kappa}}{2 \ell} c^{-2}\right)+O(c^{-4}\epsilon^2/\ell) \\
&\leq-\Omega\left(\frac{\mathscr{E}}{\mathscr{T}} c^6\right)+O(\frac{\mathscr{E}}{\mathscr{T}} c^4 / \sqrt{\kappa}) \\
&\leq-\frac{4 \mathscr{E}}{\mathscr{T}} .
\end{align}
In last step, we choose c to be large enough . When $\left\|\mathbf{v}_t\right\| \leq \mathscr{M}$ but $\left\|\nabla f\left(\mathbf{x}_t\right)\right\| \geq 2 \ell \mathscr{M}$, by the gradient Lipschitz assumption, we have:
$$
\left\|\nabla f\left(\mathbf{y}_t\right)\right\| \geq\left\|\nabla f\left(\mathbf{x}_t\right)\right\|-(1-\theta) \ell\left\|\mathbf{v}_t\right\| \geq \ell \mathscr{M} .
$$
Similarly, by Lemma \ref{lem:PAGD2}, we have:
$$
\begin{aligned}
  &  E_{t+1}-E_t \leq-\frac{\eta}{4}\left\|\nabla f\left(\mathbf{y}_t\right)\right\|^2 +\frac{5}{4} \eta\left\|\mathbf{e}\left(\mathbf{y}_t\right)\right\|^2  \\ & \leq-\Omega\left(\frac{\epsilon^2 \kappa}{\ell} c^{-2}\right)+O\left(c^{-4} \epsilon^2 \ell^{-1}\right) \leq-\Omega\left(\frac{\mathscr{E}}{\mathscr{T}} c^6\right)+O\left(\frac{\mathscr{E}}{\mathscr{T}} c^4  \kappa^{-1}\right) \leq-\frac{4 \mathscr{E}}{\mathscr{T}}.
\end{aligned}
$$
Again we choose c to be large enough, and thus finish the proof.
\end{proof}
\begin{lemma}[Modified from Lemma 16 in \cite{jin2017accelerated}]\label{aux:modi_lem_16}
Under the setting of Theorem $\mathbf{S}$, if $\left\|\mathcal{P}_{\mathcal{S}^c} \nabla f\left(\mathbf{x}_0\right)\right\| \geq \frac{\epsilon}{2}$, $\left\|\mathbf{v}_0\right\| \leq \mathscr{M}, \mathbf{v}_0^{\top}\left[\mathcal{P}_{\mathcal{S}}^{\top} \nabla^2 f\left(\mathbf{x}_0\right) \mathcal{P}_{\mathcal{S}}\right] \mathbf{v}_0 \leq 2 \sqrt{\varrho \epsilon} \mathscr{M}^2$, $\max _\tau\left\|\mathbf{e}\left(\mathbf{y}_\tau\right)\right\|<\epsilon / \chi c \sqrt{2 \mathscr{T}} = \frac{\varrho^{1/8}}{\sqrt{2}\ell^{1/4}(\chi c)^{3/2}}\epsilon^{9/8}$, and for $t \in[0, \mathscr{T} / 4]$ only modified AGD steps are used without NCE or perturbation, then:
\begin{align}   
E_{\mathscr{T} / 4}-E_0 \leq-\mathscr{E} .
\end{align}
\end{lemma}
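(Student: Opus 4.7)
The plan is to argue by contradiction, mirroring the structure of Lemma 16 in \cite{jin2017accelerated} but tracking the accumulated gradient noise $\mathbf{e}(\mathbf{y}_\tau)$ at each AGD step. Assume $E_{\mathscr{T}/4}-E_0 > -\mathscr{E}$. By the improve-or-localize bound (Theorem~\ref{theo:localize_or_stay}) together with the noise bound $\max_\tau\|\mathbf{e}(\mathbf{y}_\tau)\|<\epsilon/(\chi c\sqrt{2\mathscr{T}})$, we obtain a quantitative localization statement: $\sum_{\tau=1}^{\mathscr{T}/4}\|\mathbf{x}_\tau-\mathbf{x}_{\tau-1}\|^2 \leq \tfrac{2\eta}{\theta}\mathscr{E}+O(\eta\mathscr{E}/\theta)$, so every iterate $\mathbf{y}_\tau$ lies within distance $\mathscr{S}$ of $\mathbf{x}_0$. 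This will let us linearize $\nabla f$ around $\mathbf{x}_0$ with Hessian-Lipschitz remainder.

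The next step is to project the (noisy) AGD recursion onto the non-convex subspace $\mathcal{S}^c$, where by hypothesis $\|\mathcal{P}_{\mathcal{S}^c}\nabla f(\mathbf{x}_0)\|\geq \epsilon/2$ and the Hessian has eigenvalues in $[-\ell,\theta^2/(\eta(2-\theta)^2)]$. Writing $\hat{\nabla} f(\mathbf{y}_\tau)=\nabla f(\mathbf{x}_0)+\mathcal{H}(\mathbf{y}_\tau-\mathbf{x}_0)+\hat\delta_\tau$ with $\hat\delta_\tau=\delta_\tau+\mathbf{e}(\mathbf{y}_\tau)$, the AGD update in the $\mathcal{S}^c$-block becomes a two-term linear recursion in $(\mathbf{x}_\tau,\mathbf{v}_\tau)$ driven by a constant forcing $\mathcal{P}_{\mathcal{S}^c}\nabla f(\mathbf{x}_0)$ and a perturbation $\mathcal{P}_{\mathcal{S}^c}\hat\delta_\tau$. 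The unperturbed recursion has a characteristic root that is real and strictly larger than $1$ on $\mathcal{S}^c$ (this is the standard computation used in \cite{jin2017accelerated} for the "momentum-driven divergence"), so the impulse response from the constant forcing grows at least like $(1+\Omega(\sqrt{\varrho\epsilon}\eta))^\tau$, producing an iterate displacement whose norm at time $\mathscr{T}/4$ is $\omega(\mathscr{S})$ — contradicting localization once one shows the noise contribution cannot cancel this growth.

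The main obstacle, and the part that is genuinely new relative to \cite{jin2017accelerated}, is the noise-accumulation bound: one must prove that the cumulative effect of $\{\mathcal{P}_{\mathcal{S}^c}\mathbf{e}(\mathbf{y}_\tau)\}_{\tau\leq\mathscr{T}/4}$ under the same expanding linear system is dominated by the signal coming from $\mathcal{P}_{\mathcal{S}^c}\nabla f(\mathbf{x}_0)$. By Duhamel-summation of the linear recursion, the noise contribution to $\mathbf{x}_{\mathscr{T}/4}-\mathbf{x}_0$ in $\mathcal{S}^c$ is bounded by $\eta\sum_{\tau}(1+\Omega(\sqrt{\varrho\epsilon}\eta))^{\mathscr{T}/4-\tau}\|\mathbf{e}(\mathbf{y}_\tau)\|$, which by Cauchy-Schwarz is at most $\eta\sqrt{\mathscr{T}/4}\cdot\|\mathbf{e}\|_{\max}\cdot C$ with the same geometric factor $C$ as the signal. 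The calibration $\|\mathbf{e}\|_{\max}<\epsilon/(\chi c\sqrt{2\mathscr{T}})$ is precisely what forces this noise displacement to be at most a constant fraction of the signal displacement $\eta\cdot(\epsilon/2)\cdot C$, giving a net lower bound of order $\mathscr{S}\cdot \chi c$ on $\|\mathbf{x}_{\mathscr{T}/4}-\mathbf{x}_0\|$ and the desired contradiction with localization. Using the analogous Duhamel bound to control $\frac{5}{4}\eta\sum\|\mathbf{e}(\mathbf{y}_\tau)\|^2\leq \tfrac{5}{4}\eta\mathscr{T}\|\mathbf{e}\|_{\max}^2=O(\mathscr{E})$ in the Hamiltonian-decrease inequality closes the argument.
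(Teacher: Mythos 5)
Your sketch captures the high-level template — re-run Jin et al.'s Lemma~16 with the remainder $\delta_\tau$ replaced by $\hat\delta_\tau=\delta_\tau+\mathbf{e}(\mathbf{y}_\tau)$ and show the noise cannot swamp the signal $\mathcal{P}_{\mathcal{S}^c}\nabla f(\mathbf{x}_0)$ — but the mechanism you invoke is not the one the proof actually runs on, and it contains a factual error. You claim that on $\mathcal{S}^c$ the AGD characteristic roots are ``real and strictly larger than $1$.'' That is false over most of $\mathcal{S}^c$: this block includes eigenvalues in $[0,\theta^2/(\eta(2-\theta)^2)]$, and the paper's Lemma~\ref{lem:aux1} only establishes $\mu_1\ge 1-2\theta$ (roots close to but possibly below $1$). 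There is no uniform geometric ``momentum-driven divergence'' to exploit on $\mathcal{S}^c$, and the story of forcing an iterate displacement $\omega(\mathscr{S})$ to contradict localization does not go through as stated.

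The quantity the proof actually controls is the weighted average $\tilde\delta^{(j)}=\sum_\tau p_\tau^{(j)}\hat\delta_\tau^{(j)}$, where $p_\tau^{(j)}=a_{\mathscr{T}-1-\tau}^{(j)}\big/\sum_{\tau}a_\tau^{(j)}$ are \emph{normalized} weights summing to one, not a raw Duhamel sum. The genuinely new ingredient — absent from your sketch — is Lemma~\ref{lem:aux1}, which bounds $\sum_{j\in\mathcal{S}^c}\big|\sum_\tau p_\tau^{(j)}e^{(j)}(\mathbf{y}_\tau)\big|^2\le\epsilon^2$ by proving the decay $p_\tau^{(j)}\le\chi c/(\tau+1)$, whence $\sum_\tau(\max_j p_\tau^{(j)})^2\le 2(\chi c)^2$. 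Combined with the noise calibration $\|\mathbf{e}\|_{\max}<\epsilon/(\chi c\sqrt{2\mathscr{T}})$ this yields exactly $\epsilon^2$. Your Cauchy--Schwarz estimate $\eta\sqrt{\mathscr{T}}\,\|\mathbf{e}\|_{\max}C$ lands on the right $\sqrt{\mathscr{T}}$ scale, but without identifying the normalized weights and establishing their $1/(\tau+1)$ decay the bound cannot be closed, so the core of the argument is missing.
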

\begin{proof}
    The proof is almost the same as in \cite{jin2017accelerated}, the only difference is that every $\delta_\tau$ should be replaced by $\hat{\delta}_\tau=\delta_\tau+e\left(y_\tau\right)$, i.e. the last equation in page 22 and the first two equations in page 23 should be re-evaluate:
    \begin{align}
        \begin{aligned}
\left|\hat{\tilde{\delta}}^{(j)}\right| & =\left|\sum_{\tau=0}^{t-1} p_\tau^{(j)} \hat{\delta}_\tau^{(j)}\right| \\
& \leq \left|\sum_{\tau=0}^{t-1} p_\tau^{(j)} \delta_\tau^{(j)}\right| +|\sum_{\tau=0}^{t-1} p_\tau^{(j)}e^{(j)}\left(y_\tau\right) |\\
&\leq \sum_{\tau=0}^{t-1} p_\tau^{(j)}\left(\left|\delta_0^{(j)}\right|+\left|\delta_\tau^{(j)}-\hat{\delta}_0^{(j)}\right|\right)+|\sum_{\tau=0}^{t-1} p_\tau^{(j)}e^{(j)}\left(y_\tau\right) | \\
&\le \sum_{\tau=0}^{t-1} p_\tau^{(j)}\left(\left|\hat{\delta}_0^{(j)}\right|+\left|\hat{\delta}_\tau^{(j)}-\hat{\delta}_0^{(j)}\right|\right) + |\sum_{\tau=0}^{t-1} p_\tau^{(j)}e^{(j)}\left(y_\tau\right) |\\
& \leq\left[\sum_{\tau=0}^{t-1} p_\tau^{(j)}\right]\left(\left|\delta_0^{(j)}\right|+\sum_{\tau=1}^{t-1}\left|\delta_\tau^{(j)}-\delta_{\tau-1}^{(j)}\right|\right) + |\sum_{\tau=0}^{t-1} p_\tau^{(j)}e^{(j)}\left(y_\tau\right) | \\
&\leq \left|\delta_0^{(j)}\right|+\sum_{\tau=1}^{t-1}\left|\delta_\tau^{(j)}-\delta_{\tau-1}^{(j)}\right|  + |\sum_{\tau=0}^{t-1} p_\tau^{(j)}e^{(j)}\left(y_\tau\right) |
\end{aligned}
    \end{align}
    Similar to Lemma 16 in \cite{jin2017accelerated}, we know
    \begin{align}
        &\left\|\mathcal{P}_{\mathcal{S}^c} \tilde{\delta}\right\|^2 \\
        =&\sum_{j \in \mathcal{S}^c}\left|\hat{\tilde{\delta}}^{(j)}\right|^2 \\
        \le & \sum_{j \in \mathcal{S}^c}\left(\left|\delta_0^{(j)}\right|+\sum_{\tau=1}^{t-1}\left|\delta_\tau^{(j)}-\delta_{\tau-1}^{(j)}\right|+\left|\sum_{\tau=0}^{t-1} p_\tau^{(j)} e^{(j)}\left(y_\tau\right)\right|\right)^2 \\
        \leq & 2\sum_{j \in \mathcal{S}^c}\left(\left|\delta_0^{(j)}\right|+\sum_{\tau=1}^{t-1}\left|\delta_\tau^{(j)}-\delta_{\tau-1}^{(j)}\right|\right)^2+\left|\sum_{\tau=0}^{t-1} p_\tau^{(j)} e^{(j)}\left(y_\tau\right)\right|^2 \\
        \leq & 4\left\|\delta_0\right\|^2+4 t \sum_{\tau=1}^{t-1}\left\|\delta_\tau-\delta_{\tau-1}\right\|^2 + 2\sum_j\left|\sum_{\tau=0}^{t-1} p_\tau^{(j)} e^{(j)}\left(y_\tau\right)\right|^2 \label{eq:PAGD_c11}
    \end{align}
    and
    \begin{align}
        \sqrt{4\left\|\delta_0\right\|^2+4 t \sum_{\tau=1}^{t-1}\left\|\delta_\tau-\delta_{\tau-1}\right\|^2} \le O\left(\varrho \mathscr{S}^2\right) \leq O\left(\epsilon \cdot c^{-6}\right) \leq \epsilon / 10
    \end{align}
    By Lemma \ref{lem:aux1}, the third term is also bounded by $\epsilon^2$, so we have
    \begin{align}
        \left\|\mathcal{P}_{\mathcal{S}^c} \tilde{\delta}\right\|^2 \le \epsilon^2.
    \end{align}

\end{proof}
\begin{lemma}[Modified from Lemma 17 in \cite{jin2017accelerated}]\label{aux:modi_lem_17}
Under the setting of Theorem G, suppose $\left\|\mathbf{v}_0\right\| \leq \mathscr{M}$ and $\left\|\nabla f\left(\mathbf{x}_0\right)\right\| \leq 2 \ell \mathscr{M}, E_{\mathscr{T} / 2}-E_0 \geq-\mathscr{E}$, $\max _\tau\left\|\mathbf{e}\left(\mathbf{y}_\tau\right)\right\|<\epsilon /  c \sqrt{2 \mathscr{T}}=\frac{\rho^{1 / 8}}{\sqrt{2} \ell^{1 / 4}\chi^{1/2} c^{3 / 2}} \epsilon^{9 / 8}$, and for $t \in[0, \mathscr{T} / 2]$ only modified AGD steps are used, without NCE or perturbation. Then $\forall t \in[\mathscr{T} / 4, \mathscr{T} / 2]$:
$$
\left\|\mathcal{P}_{\mathcal{S}} \nabla f\left(\mathbf{x}_t\right)\right\| \leq \frac{\epsilon}{2} \quad \text { and } \quad \mathbf{v}_t^{\top}\left[\mathcal{P}_{\mathcal{S}}^{\top} \nabla^2 f\left(\mathbf{x}_0\right) \mathcal{P}_{\mathcal{S}}\right] \mathbf{v}_t \leq \sqrt{\varrho \epsilon} \mathscr{M}^2.
$$
\end{lemma}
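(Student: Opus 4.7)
\textbf{Proof plan for Lemma~\ref{aux:modi_lem_17}.}
The plan is to mirror the structure of the original proof of Lemma~17 in~\cite{jin2017accelerated}, but carry through the extra perturbation $e(\mathbf{y}_\tau)$ exactly as was done in the proof of Lemma~\ref{aux:modi_lem_16} (the modified version of Lemma~16). The key point is that in the subspace $\mathcal{S}$, which contains eigendirections with eigenvalues in $(\theta^2/[\eta(2-\theta)^2],\ell]$, modified AGD behaves like AGD on a strongly convex quadratic, so the projection of the momentum and gradient onto $\mathcal{S}$ contracts; after $\mathscr{T}/4$ steps this contraction is strong enough to kill the initial velocity/gradient up to the accumulated ``error'' contributed by the Hessian-Lipschitz remainder $\delta_\tau$ and the gradient noise $e(\mathbf{y}_\tau)$.

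First, I would invoke the improve-or-localize bound (\Cref{theo:localize_or_stay}) together with the hypothesis $E_{\mathscr{T}/2}-E_0\ge -\mathscr{E}$ and the noise bound $\max_\tau\|e(\mathbf{y}_\tau)\|\le \epsilon/(c\sqrt{2\mathscr{T}})$ to conclude $\sum_{\tau=1}^{\mathscr{T}/2}\|\mathbf{x}_\tau-\mathbf{x}_{\tau-1}\|^2\le O(\eta\mathscr{E}/\theta)$, hence $\|\mathbf{x}_t-\mathbf{x}_0\|\le O(\mathscr{S})$ for all $t\le \mathscr{T}/2$. This controls the Hessian-Lipschitz remainder by $\|\delta_\tau\|\le \tfrac{\varrho}{2}\|\mathbf{x}_\tau-\mathbf{x}_0\|^2\le O(\varrho\mathscr{S}^2)$ and $\|\delta_\tau-\delta_{\tau-1}\|\le \varrho\|\mathbf{x}_\tau-\mathbf{x}_0\|\cdot\|\mathbf{x}_\tau-\mathbf{x}_{\tau-1}\|$, exactly as in the proof of Lemma~\ref{aux:modi_lem_16}.

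Next, write the modified AGD iteration linearized at $\mathbf{x}_0$: with $\mathcal{H}=\nabla^2 f(\mathbf{x}_0)$ and $\hat{\delta}_\tau=\delta_\tau+e(\mathbf{y}_\tau)$, we have
\begin{align}
\begin{pmatrix}\mathbf{x}_{t+1}\\ \mathbf{x}_t\end{pmatrix}
=\mathbf{A}_{\mathcal{H}}\begin{pmatrix}\mathbf{x}_t\\ \mathbf{x}_{t-1}\end{pmatrix}
-\eta\begin{pmatrix}\nabla f(\mathbf{0})+\hat{\delta}_t\\ \mathbf{0}\end{pmatrix},
\end{align}
where $\mathbf{A}_{\mathcal{H}}$ is the same $2\times 2$ block matrix used in~\cite{jin2017accelerated}, and solve the recurrence in the eigenbasis of $\mathcal{H}$. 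Projecting onto $\mathcal{S}$ gives, for each coordinate $j\in\mathcal{S}$, an expression of the form $\mathbf{v}_t^{(j)}=\alpha_t^{(j)}\mathbf{v}_0^{(j)}+\beta_t^{(j)}\nabla f(\mathbf{x}_0)^{(j)}+\eta\sum_{\tau=0}^{t-1}q_{t,\tau}^{(j)}\hat{\delta}_\tau^{(j)}$, and analogously for $\nabla f(\mathbf{x}_t)^{(j)}=\nabla f(\mathbf{x}_0)^{(j)}+\mathcal{H}^{(j)}(\mathbf{x}_t-\mathbf{x}_0)^{(j)}+\delta_t^{(j)}$. In the regime of $\mathcal{S}$ the scalars $\alpha_t^{(j)},\beta_t^{(j)}$ contract at rate $(1-\Theta(\theta))^{t}$, so after $t\ge \mathscr{T}/4$ iterations they are exponentially small.

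The main obstacle, as in Lemma~\ref{aux:modi_lem_16}, is bounding the accumulated perturbation $\eta\sum_{\tau}q_{t,\tau}^{(j)}\hat{\delta}_\tau^{(j)}$. I would split it into a Hessian-Lipschitz part $\sum_\tau q_{t,\tau}^{(j)}\delta_\tau^{(j)}$, handled exactly as in the original Lemma~17 via summation by parts using the telescoping identity $\delta_\tau=\delta_0+\sum_{s\le\tau}(\delta_s-\delta_{s-1})$ and the displacement bound from step~1, and a noise part $\sum_\tau q_{t,\tau}^{(j)}e^{(j)}(\mathbf{y}_\tau)$, for which I would bound $\sum_j|\sum_\tau q_{t,\tau}^{(j)}e^{(j)}(\mathbf{y}_\tau)|^2$ by $\|q\|_2^2\cdot\mathscr{T}\max_\tau\|e(\mathbf{y}_\tau)\|^2$ via Cauchy-Schwarz (and Lemma~\ref{lem:aux1} to control $\|q\|_2$, using the contraction in $\mathcal{S}$ to get $\|q\|_2=O(1/\theta)$). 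Combined with the assumed $\max_\tau\|e(\mathbf{y}_\tau)\|\le\epsilon/(c\sqrt{2\mathscr{T}})$, this contributes $O(\epsilon/c)$ to $\|\mathcal{P}_{\mathcal{S}}\nabla f(\mathbf{x}_t)\|$, which by choosing $c$ large enough is absorbed into $\epsilon/2$. An analogous computation for the momentum using the Hessian-weighted inner product yields $\mathbf{v}_t^\top[\mathcal{P}_{\mathcal{S}}^\top \mathcal{H}\mathcal{P}_{\mathcal{S}}]\mathbf{v}_t\le \sqrt{\varrho\epsilon}\mathscr{M}^2$. The delicate book-keeping is to verify that the extra gradient-noise terms introduced here are strictly smaller (by $c$-powers) than those already present in Lemma~\ref{aux:modi_lem_16}, so that the choice $\epsilon_g=\epsilon/(c\sqrt{2\mathscr{T}})$ used in the large-gradient scenario suffices without further tightening.
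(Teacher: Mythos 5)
Your high-level plan matches the paper's: write the iterates in the eigenbasis of $\mathcal{H}$, carry the gradient noise through the linearized recursion as an extra perturbation term alongside the Hessian-Lipschitz remainder $\delta_\tau$, and bound the accumulated noise via Cauchy--Schwarz over the $\mathscr{T}$ time steps. The paper does exactly this, writing $\nabla f(\mathbf{x}_t)=\boldsymbol{g}_1+\cdots+\boldsymbol{g}_4+\boldsymbol{g}_5$ and $\mathbf{v}_t=\boldsymbol{m}_1+\cdots+\boldsymbol{m}_3+\boldsymbol{m}_4$, with $\boldsymbol{g}_5,\boldsymbol{m}_4$ the new noise contributions.

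However, there is a genuine gap in how you propose to bound the noise sum. You invoke Lemma~\ref{lem:aux1} to control the relevant coefficient $\ell_2$-norm and claim $\|q\|_2=O(1/\theta)$. Lemma~\ref{lem:aux1} is stated for the \emph{complementary} subspace $\mathcal{S}^c$ (eigenvalues in $[-\ell,\theta^2/[\eta(2-\theta)^2]]$), where the roots of the characteristic polynomial are real; it does not apply to $\mathcal{S}$, where the roots are complex and $a_\tau^{(j)}$ oscillates. The bounds the paper actually needs are Lemma~\ref{lem:aux2} (for $\|\mathcal{P}_{\mathcal{S}}\boldsymbol{g}_5\|^2<\epsilon^2/64$, with coefficients $\eta\lambda_j a_\tau^{(j)}$) and Lemma~\ref{lem:aux3} (for $\|[\mathcal{P}_\mathcal{S}^\top\mathcal{H}\mathcal{P}_\mathcal{S}]^{1/2}\boldsymbol{m}_4\|^2 < \tfrac13\sqrt{\varrho\epsilon}\mathscr{M}^2$, with coefficients $\eta\lambda_j^{1/2}(a_\tau^{(j)}-a_{\tau-1}^{(j)})$). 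These are two distinct calculations, and your write-up collapses them into one ``analogous'' step without identifying the different weight factors ($\lambda_j$ versus $\lambda_j^{1/2}$, and $a_\tau$ versus $a_\tau - a_{\tau-1}$). More substantively, your bound $\|q\|_2=O(1/\theta)$ is too large: the cancellation coming from multiplying by $\eta\lambda_j$ (or $\eta\sqrt{\lambda_j}$) is essential. In Lemma~\ref{lem:aux2} the paper shows $\sum_\tau(\max_j \eta\lambda_j a_\tau^{(j)})^2=O(\theta)$ (so the effective $\|q\|_2=O(\sqrt{\theta})$), by splitting the eigenvalue range into a boundary regime $x\in(\theta^2/(2-\theta)^2,2\theta^2/(2-\theta)^2]$ where $xa_\tau^{(j)}=O(\theta^2\tau(1-\theta)^{\tau/2})$ and an interior regime where $xa_\tau^{(j)}=O(\sqrt{x}r^\tau)$. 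Without this, the estimate $\|q\|_2^2\cdot\mathscr{T}\cdot\max\|e\|^2=O(\epsilon^2\kappa/c^2)$ you would obtain scales with $\kappa=\ell/\sqrt{\varrho\epsilon}$, which cannot be absorbed by a fixed constant $c$ and is inconsistent with the claimed noise tolerance $\epsilon_g=\epsilon/(c\sqrt{2\mathscr{T}})$.
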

\begin{proof}
    In our case, the expression of $\nabla f\left(\mathbf{x}_t\right)$ and $\mathbf{v}_t$ should each have one more term representing the effect of gradient error.
    \begin{align}
        \nabla f\left(\mathbf{x}_t\right) &= \boldsymbol{g}_1+\boldsymbol{g}_2+\boldsymbol{g}_3+\boldsymbol{g}_4+\boldsymbol{g}_5, \\
        \boldsymbol{g}_5 &= -\eta \mathcal{H}\left(\begin{array}{ll}
\mathbf{I} & 0
\end{array}\right) \sum_{\tau=0}^{t-1} \mathbf{A}^{t-1-\tau}\left(\begin{array}{c}
 \mathbf{e}\left(\mathbf{y}_\tau\right) \\
0
\end{array}\right),
    \end{align}
where $\boldsymbol{g}_1$,$\boldsymbol{g}_2$,$\boldsymbol{g}_3$, and $\boldsymbol{g}_4$ are defined as in Lemma 17 in \cite{jin2017accelerated}.
    \begin{align}
    \mathbf{v}_t &= \boldsymbol{m}_1 + \boldsymbol{m}_2 + \boldsymbol{m}_3 + \boldsymbol{m}_4, \\
    \boldsymbol{m}_4 &= -\eta\left(\begin{array}{ll}
    1 & -1
    \end{array}\right) \sum_{\tau=0}^{t-1} \mathbf{A}^{t-1-\tau}\left(\begin{array}{c}
    \mathbf{e}\left(\mathbf{y}_\tau\right) \\
    0
    \end{array}\right).
    \end{align}
where $\boldsymbol{m}_1$,$ \boldsymbol{m}_2$, $ \boldsymbol{m}_3$, and $\boldsymbol{m}_3$ are defined as in Lemma 17 in \cite{jin2017accelerated}.

We will prove that
\begin{align}
    \left\|\mathcal{P}_{\mathcal{S}} \boldsymbol{g}_5\right\|^2 < \epsilon^2/64,
\end{align}
and 
\begin{align}
    \left\|\left[\mathcal{P}_{\mathcal{S}}^{\top} \nabla^2 f\left(\mathbf{x}_0\right) \mathcal{P}_{\mathcal{S}}\right]^{\frac{1}{2}} \boldsymbol{m}_4\right\|^2 < \frac{1}{3} \sqrt{\varrho \epsilon} \mathscr{M}^2 = \frac{1}{3} \frac{\epsilon^2 \ell}{c^2},
\end{align}
so that the proof still holds.
The proof is in Lemma \ref{lem:aux2} and \ref{lem:aux3}
\end{proof}
\subsection{Negative curvature scenario} \label{app:neg_curv}

\begin{lemma}[Modified from Lemma 18 of \cite{jin2017accelerated}]\label{aux:modi_lem_18}
Under the same setting as Theorem \$, suppose $\|\nabla f(\tilde{\mathbf{x}})\| \leq \epsilon$ , $\lambda_{\min }\left(\nabla^2 f(\tilde{\mathbf{x}})\right) \leq-\sqrt{\varrho \epsilon}$ and $\max_\tau \left\|\mathbf{e}\left(\mathbf{y}_\tau\right)\right\| \leq \frac{\theta }{8 \eta \mathscr{T}}\frac{\delta \mathscr{E}}{2 \Delta_f}  \frac{r}{\sqrt{d}} = \frac{\delta \sqrt{\rho}\chi^{-11}c^{-16}}{64l}\frac{\epsilon^3}{\sqrt{d}}\Delta_fx$. Let $\mathbf{x}_0$ and $\mathbf{x}_0^{\prime}$ be at distance at most $r$ from $\tilde{\mathbf{x}}$. Let $\mathbf{x}_0-\mathbf{x}_0^{\prime}=r_0 \cdot \mathbf{e}_1$ and let $\mathbf{v}_0=\mathbf{v}_0^{\prime}=\tilde{\mathbf{v}}$ where $\mathbf{e}_1$ is the minimum eigen-direction of $\nabla^2 f(\tilde{\mathbf{x}})$. Let $r_0 \geq \frac{\delta \mathscr{E}}{2 \Delta_f} \cdot \frac{r}{\sqrt{d}}$. Then, running PAGD starting at $\left(\mathbf{x}_0, \mathbf{v}_0\right)$ and $\left(\mathbf{x}_0^{\prime}, \mathbf{v}_0^{\prime}\right)$ respectively, we have:
$$
\min \left\{E_{\mathscr{T}}-\widetilde{E}, E_{\mathscr{T}}^{\prime}-\widetilde{E}\right\} \leq-\mathscr{E},
$$
where $\widetilde{E}, E_{\mathscr{T}}$ and $E_{\mathscr{T}}^{\prime}$ are the Hamiltonians at $(\tilde{\mathbf{x}}, \tilde{\mathbf{v}}),\left(\mathbf{x}_{\mathscr{T}}, \mathbf{v}_{\mathscr{T}}\right)$ and $\left(\mathbf{x}_{\mathscr{T}}^{\prime}, \mathbf{v}_{\mathscr{T}}^{\prime}\right)$ respectively.
\end{lemma}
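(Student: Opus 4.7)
The plan is to adapt the coupling/contradiction argument of Lemma 18 in \cite{jin2017accelerated} to the noisy-gradient setting. I would start by supposing for contradiction that both $E_\mathscr{T} - \widetilde{E} > -\mathscr{E}$ and $E_\mathscr{T}' - \widetilde{E} > -\mathscr{E}$, then apply the modified improve-or-localize bound (Theorem \ref{theo:localize_or_stay}) to each trajectory separately. Together with the assumed bound on $\max_\tau \|\mathbf{e}(\mathbf{y}_\tau)\|$, this forces both sequences to remain within $\mathcal{O}(\mathscr{S})$ of $\tilde{\mathbf{x}}$, so that $\nabla^2 f$ along either trajectory equals $\mathcal{H} := \nabla^2 f(\tilde{\mathbf{x}})$ up to a Hessian-Lipschitz error of size $\mathcal{O}(\varrho\mathscr{S})$.

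Next I would introduce the difference sequences $\mathbf{w}_\tau := \mathbf{x}_\tau - \mathbf{x}_\tau'$ and $\mathbf{z}_\tau := \mathbf{y}_\tau - \mathbf{y}_\tau'$. Subtracting the two modified-PAGD updates and Taylor-expanding $\nabla f(\mathbf{y}_\tau) - \nabla f(\mathbf{y}_\tau') = \mathcal{H}\mathbf{z}_\tau + \Delta_\tau^{\mathrm{Hess}}$ produces exactly the momentum recursion that appears in the noiseless analysis, but driven by an extra term $-\eta(\mathbf{e}(\mathbf{y}_\tau) - \mathbf{e}(\mathbf{y}_\tau'))$. Projecting the recursion onto the minimum eigen-direction $\mathbf{e}_1$ of $\mathcal{H}$ yields a scalar recursion whose $2\times 2$ transfer matrix, under the negative-curvature hypothesis, has spectral radius at least $1 + \Omega(\sqrt{\eta\sqrt{\varrho\epsilon}}) = 1 + \Omega(1/\sqrt{\kappa})$. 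The homogeneous solution starting from $\mathbf{w}_0 = r_0\mathbf{e}_1$ therefore grows to magnitude at least $r_0 \cdot 2^{\Omega(\chi c)}$ after $\mathscr{T} = \sqrt{\kappa}\chi c$ iterations, which for $c$ sufficiently large exceeds $\mathscr{S}$ and contradicts localization of whichever of the two trajectories drifts farther from $\tilde{\mathbf{x}}$.

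The hard part will be controlling the two inhomogeneous driving terms, since I need both to stay strictly smaller than this exponential signal. The Hessian-Lipschitz remainder can be handled exactly as in \cite{jin2017accelerated}, using localization to argue it contributes at most a constant fraction of the signal; the new term, the accumulated gradient noise, satisfies
\[
\left\|\eta \sum_{\tau=0}^{\mathscr{T}-1} \mathbf{A}^{\mathscr{T}-1-\tau}\bigl(\mathbf{e}(\mathbf{y}_\tau) - \mathbf{e}(\mathbf{y}_\tau')\bigr)\right\| \leq \mathcal{O}\bigl(\eta\,\mathscr{T}\, 2^{\mathcal{O}(\chi c)}\,\epsilon_g\bigr),
\]
where $\mathbf{A}$ is the transfer matrix above. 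Comparing this with the signal $r_0 \cdot 2^{\Omega(\chi c)} \geq (\delta\mathscr{E}/2\Delta_f)(r/\sqrt{d}) \cdot 2^{\Omega(\chi c)}$, I expect the hypothesized bound $\epsilon_g \leq \tfrac{\theta}{8\eta\mathscr{T}}\tfrac{\delta\mathscr{E}}{2\Delta_f}\tfrac{r}{\sqrt{d}}$ to be exactly what is needed so that the noise is dominated by the signal by a constant factor; substituting the values $\eta = 1/(4\ell)$, $\theta = 1/(4\sqrt{\kappa})$, $\mathscr{T} = \sqrt{\kappa}\chi c$, and $\mathscr{E} = \sqrt{\epsilon^3/\varrho}\,\chi^{-5}c^{-7}$ should recover the stated closed form in the lemma and close the contradiction.
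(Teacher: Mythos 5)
Your proposal is essentially the same approach as the paper. You identify the right structure: the contradiction/coupling argument, the localization step via the modified improve-or-localize bound (\cref{theo:localize_or_stay}), the difference recursion $\mathbf{w}_\tau$ with an extra driving term $-\eta(\mathbf{e}(\mathbf{y}_\tau)-\mathbf{e}(\mathbf{y}_\tau'))$, and the need to show that the accumulated noise remains a constant fraction of the exponentially growing signal. Your bound $\mathcal{O}(\eta\,\mathscr{T}\,2^{\mathcal{O}(\chi c)}\,\epsilon_g)$ on the noise term matches the paper's estimate $2\eta\epsilon_g\sum_{\tau<\mathscr{T}} a_\tau^{(1)}\le 2\eta\epsilon_g\,\mathscr{T}\,a_\mathscr{T}^{(1)}$ up to constants, and the hypothesis $\epsilon_g\le\frac{\theta}{8\eta\mathscr{T}}\frac{\delta\mathscr{E}}{2\Delta_f}\frac{r}{\sqrt{d}}\le\frac{\theta}{8\eta\mathscr{T}}r_0$ indeed makes the noise a $\theta/4$ fraction of the signal. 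Where you say ``I expect'' and leave the constants open, the paper's proof is slightly more careful: it uses (Lemma 32 of \cite{jin2017accelerated}) that the projected transfer matrix $(\mathbf{I},0)\mathbf{A}^{k}\binom{\mathbf{I}}{0}$ attains its spectral norm along the $\mathbf{e}_1$ direction and that $a_\tau^{(1)}$ is monotone in $\tau$, yielding the clean chain $2\eta\epsilon_g\sum_\tau a_\tau^{(1)}\le\frac{\theta r_0}{4\mathscr{T}}\sum_\tau a_\tau^{(1)}\le\frac{\theta r_0}{4}a_\mathscr{T}^{(1)}$ and the net lower bound $((1-\theta/4)a_\mathscr{T}^{(1)}-b_\mathscr{T}^{(1)})r_0=\Omega(\frac{\theta}{4}(1+\Omega(\theta))^{\mathscr{T}}r_0)$, which is precisely the quantity needed to carry through the rest of the original Lemma 18 argument. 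This is a refinement, not a different route; your proposal would close the same way once the constants are tracked.
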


\begin{proof}
    The only difference from original proof is the treatment of
    \begin{align}
        \min \left\{E_{\mathscr{T}}-E_0, E_{\mathscr{T}}^{\prime}-E_0^{\prime}\right\},
    \end{align}
    in our case, because we have a modified improve or localize theorem \ref{theo:localize_or_stay},
    We can show that with $e_{\nabla f}< \epsilon$ and $A_1$ smaller enough, we have
\begin{align}
    \max \left\{\left\|\mathbf{x}_t-\tilde{\mathbf{x}}\right\|,\left\|\mathbf{x}_t^{\prime}-\tilde{\mathbf{x}}\right\|\right\} \leq r+\max \left\{\left\|\mathbf{x}_t-\mathbf{x}_0\right\|,\left\|\mathbf{x}_t^{\prime}-\mathbf{x}_0^{\prime}\right\|\right\} \leq r+\sqrt{9 \eta \mathscr{T} \mathscr{E} / \theta} \leq 3 \mathscr{S}.
\end{align}
And  in our case, the expression of $\mathbf{w}_t$ should each have one more term representing the effect of gradient error
\begin{align*}
    \mathbf{w}_t=\left(\begin{array}{ll}
\mathbf{I} & 0
\end{array}\right) \mathbf{A}^t\left(\begin{array}{c}
\mathbf{w}_0 \\
\mathbf{w}_0
\end{array}\right)-\eta\left(\begin{array}{ll}
\mathbf{I} & 0
\end{array}\right) \sum_{\tau=0}^{t-1} \mathbf{A}^{t-1-\tau}\left(\begin{array}{c}
\delta_\tau \\
0
\end{array}\right) \\
-\eta\left(\begin{array}{ll}
\mathbf{I} & 0
\end{array}\right) \sum_{\tau=0}^{t-1} \mathbf{A}^{t-1-\tau}\left(\begin{array}{c}
\mathbf{e}\left(\mathbf{y}_\tau\right) - \mathbf{e}\left(\mathbf{y}_{\tau-1}\right) \\
0
\end{array}\right).
\end{align*}
 This holds if we can prove that
\begin{align}
\left\|\left(\begin{array}{ll}
\mathbf{I} & 0
\end{array}\right) \mathbf{A}^t\left(\begin{array}{l}
\mathbf{w}_0 \\
\mathbf{w}_0
\end{array}\right)\right\|
-
    \left\|\eta\left(\begin{array}{ll}
\mathbf{I} & 0
\end{array}\right) \sum_{\tau=0}^{t-1} \mathbf{A}^{t-1-\tau}\left(\begin{array}{c}
\mathbf{e}\left(\mathbf{y}_\tau\right)-\mathbf{e}\left(\mathbf{y}_{\tau}'\right) \\
0
\end{array}\right)\right\| \ge \Omega (\frac{\theta}{4}(1+\Omega(\theta))^t r_0).
\end{align}

    Notice that $\mathbf{w}_0 = r_0 \cdot \mathbf{e}_1$, where subscript 1 represents the minimum eigenvalue direction, we have
    \begin{align}
        \left\|\left(\begin{array}{ll}
\mathbf{I} & 0
\end{array}\right) \mathbf{A}^t\left(\begin{array}{l}
\mathbf{w}_0 \\
\mathbf{w}_0
\end{array}\right)\right\| =  \left(a_\tau^{(1)}-b_\tau^{(1)}\right)r_0.
    \end{align}
For the second term, we use Cauchy–Schwarz inequality,
\begin{align*}
    \left\|\eta\left(\begin{array}{ll}
\mathbf{I} & 0
\end{array}\right) \sum_{\tau=0}^{t-1} \mathbf{A}^{t-1-\tau}\left(\begin{array}{c}
\mathbf{e}\left(\mathbf{y}_\tau\right)-\mathbf{e}\left(\mathbf{y}_\tau^{\prime}\right) \\
0
\end{array}\right)\right\|
&\le
\eta \sum_{\tau=0}^{t-1}\left\|\left(\begin{array}{ll}
\mathbf{I} & 0
\end{array}\right)  \mathbf{A}^{t-1-\tau}\left(\begin{array}{c}
\mathbf{e}\left(\mathbf{y}_\tau\right)-\mathbf{e}\left(\mathbf{y}_\tau^{\prime}\right) \\
0
\end{array}\right)\right\| \\
& \le \eta \sum_{\tau=0}^{t-1}\left\|\left(\begin{array}{ll}
\mathbf{I} & 0
\end{array}\right)  \mathbf{A}^{t-1-\tau}\left(\begin{array}{c}
\mathbf{I} \\
0
\end{array}\right)\right\| \left\| \mathbf{e}\left(\mathbf{y}_\tau\right)-\mathbf{e}\left(\mathbf{y}_\tau^{\prime}\right) \right\|.
\end{align*}
According to Lemma 18 and Lemma 32 in \cite{jin2017accelerated}, $(\mathbf{I}, 0) \mathbf{A}^{t-\tau}\left(\begin{array}{l}
\mathbf{I} \\
0
\end{array}\right)$ achieves its spectral norm along the first coordinate which corresponds to the eigenvalue $\lambda_{\min }(\mathcal{H})$, so
\begin{align*}
    \left\|\eta\left(\begin{array}{ll}
\mathbf{I} & 0
\end{array}\right) \sum_{\tau=0}^{t-1} \mathbf{A}^{t-1-\tau}\left(\begin{array}{c}
\mathbf{e}\left(\mathbf{y}_\tau\right)-\mathbf{e}\left(\mathbf{y}_\tau^{\prime}\right) \\
0
\end{array}\right)\right\|
&\le \eta \sum_{\tau=0}^{t-1}a_{t-1-\tau}^{(1)}\left\|\mathbf{e}\left(\mathbf{y}_\tau\right)-\mathbf{e}\left(\mathbf{y}_\tau^{\prime}\right)\right\| \\
&\le 2\eta \max_\tau \left\|\mathbf{e}\left(\mathbf{y}_\tau\right)\right\| \sum_{\tau=0}^{t-1} a_{\tau}^{(1)} \\
&\le \frac{\theta r_0}{4\mathscr{T}} \sum_{\tau=0}^{t-1} a_\tau^{(1)} \\
& \le \frac{\theta r_0}{4}a_\mathscr{T}^{(1)}.
\end{align*}

In last equality, we use $a_\tau^{(1)} \le a_{\tau+1}^{(1)}$. Combining these results together, we have
\begin{align}
    &\left\|\left(\begin{array}{ll}
\mathbf{I} & 0
\end{array}\right) \mathbf{A}^t\left(\begin{array}{l}
\mathbf{w}_0 \\
\mathbf{w}_0
\end{array}\right)\right\|-\left\|\eta\left(\begin{array}{ll}
\mathbf{I} & 0
\end{array}\right) \sum_{\tau=0}^{t-1} \mathbf{A}^{t-1-\tau}\left(\begin{array}{c}
\mathbf{e}\left(\mathbf{y}_\tau\right)-\mathbf{e}\left(\mathbf{y}_\tau^{\prime}\right) \\
0
\end{array}\right)\right\| \\
\ge &
\left((1-\frac{\theta}{4})a_\mathscr{T}^{(1)}-b_\mathscr{T}^{(1)}\right) r_0.
\end{align}
Following Lemma 18 and Lemma 32 in \cite{jin2017accelerated}, we have finished the proof.
\end{proof}

\subsection{Negative curvature exploitation}\label{app:NCE}
\begin{lemma}[Modified from Lemma 5 in \cite{jin2017accelerated}]\label{aux:modi_lem_5}
     Assume that $f(\cdot)$ is $\ell$-smooth and $\varrho$-Hessian Lipschitz. For every iteration $t$ of Algorithm where NCE condition $f\left(\mathbf{x}_t\right) \leq f\left(\mathbf{y}_t\right)+\left\langle\hat{\nabla} f\left(\mathbf{y}_t\right), \mathbf{x}_t-\mathbf{y}_t\right\rangle-\frac{\gamma}{2}\left\|\mathbf{x}_t-\mathbf{y}_t\right\|^2$ holds (thus running NCE), we have:
$$
E_{t+1} \leq E_t-\min \left\{\frac{s^2}{2 \eta}, \frac{1}{2}(\gamma-2 \varrho s) s^2-s ||\mathbf{e}\left(\mathbf{y}_t\right)||\right\}.
$$
\end{lemma}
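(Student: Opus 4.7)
My proposal is to mirror the two-case structure of Lemma 5 in~\cite{jin2017accelerated}, modifying only the NCE analysis to thread the gradient error $\mathbf{e}(\mathbf{y}_t)=\nabla f(\mathbf{y}_t)-\hat\nabla f(\mathbf{y}_t)$ through where it enters. I would split on $\|\mathbf{v}_t\|$.

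In the first case $\|\mathbf{v}_t\|\geq s$, the NCE branch sets $\mathbf{x}_{t+1}=\mathbf{x}_t$ and $\mathbf{v}_{t+1}=\mathbf{0}$, which consults no gradient---true or estimated. Hence the original analysis transfers verbatim: the kinetic-energy contribution alone gives $E_{t+1}-E_t=-\|\mathbf{v}_t\|^2/(2\eta)\leq -s^2/(2\eta)$, matching the first argument of the $\min$. No noise term arises in this branch.

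In the second case $\|\mathbf{v}_t\|< s$, set $\mathbf{u}=\mathbf{v}_t/\|\mathbf{v}_t\|$, $\delta=s\mathbf{u}$, and $\mathbf{d}=\mathbf{x}_t-\mathbf{y}_t=-(1-\theta)\mathbf{v}_t$, so that $\mathbf{d}$ is parallel to $\mathbf{u}$ and $\|\mathbf{d}\|\leq s$. The plan is to execute two substeps. First, a third-order Taylor expansion of $f$ at $\mathbf{x}_t$ along $\pm\delta$ whose average cancels the first-order term and whose remainder is controlled by Hessian-Lipschitzness, yielding
\begin{equation*}
    \min\{f(\mathbf{x}_t+\delta),f(\mathbf{x}_t-\delta)\} \leq f(\mathbf{x}_t) + \tfrac{s^2}{2}\mathbf{u}^\top \nabla^2 f(\mathbf{x}_t)\mathbf{u} + \tfrac{\varrho s^3}{6}.
\end{equation*}
Second, a noise-aware conversion of the NCE condition into a negative-curvature inequality: substituting $\hat\nabla f(\mathbf{y}_t)=\nabla f(\mathbf{y}_t)-\mathbf{e}(\mathbf{y}_t)$ into the triggering NCE condition and using the exact second-order Taylor expansion of $f$ between $\mathbf{y}_t$ and $\mathbf{x}_t$ gives, for some $\zeta$ on the segment,
\begin{equation*}
    \tfrac{1}{2}\mathbf{d}^\top\nabla^2 f(\zeta)\mathbf{d} \leq -\tfrac{\gamma}{2}\|\mathbf{d}\|^2 - \langle\mathbf{e}(\mathbf{y}_t),\mathbf{d}\rangle.
\end{equation*}
A Hessian-Lipschitz transfer from $\zeta$ to $\mathbf{x}_t$ (extra error $\leq\varrho\|\mathbf{d}\|\leq\varrho s$) combined with Cauchy-Schwarz on the noise inner product ($|\langle\mathbf{e}(\mathbf{y}_t),\mathbf{d}\rangle|\leq\|\mathbf{e}(\mathbf{y}_t)\|\cdot\|\mathbf{d}\|\leq s\|\mathbf{e}(\mathbf{y}_t)\|$, using $\|\mathbf{d}\|\leq s$) then folds the two bounds into $f(\mathbf{x}_{t+1})-f(\mathbf{x}_t)\leq -\tfrac{1}{2}(\gamma-2\varrho s)s^2 + s\|\mathbf{e}(\mathbf{y}_t)\|$. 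Adding the non-positive kinetic-energy drop (since $\mathbf{v}_{t+1}=\mathbf{0}$) gives the second argument of the $\min$.

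The delicate point is the noise bookkeeping in the second substep. The naive route---dividing the Taylor-plus-NCE inequality by $\|\mathbf{d}\|^2$ to extract $\mathbf{u}^\top\nabla^2 f(\zeta)\mathbf{u}$ and only then applying Cauchy-Schwarz---produces a penalty scaling as $s^2\|\mathbf{e}(\mathbf{y}_t)\|/\|\mathbf{d}\|$, which degrades as $\mathbf{v}_t\to\mathbf{0}$. To recover the linear-in-$s$ penalty stated in the lemma, it is important to keep the inner product $\langle\mathbf{e}(\mathbf{y}_t),\mathbf{d}\rangle$ un-normalized through as many steps as possible and to apply Cauchy-Schwarz only at the end, using the Case 2 bound $\|\mathbf{d}\|\leq s$ to collapse $\|\mathbf{e}(\mathbf{y}_t)\|\cdot\|\mathbf{d}\|$ into $s\|\mathbf{e}(\mathbf{y}_t)\|$. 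Combined with the first case, the minimum of the two penalties yields the stated inequality.
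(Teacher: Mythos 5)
Your two-case split, the Case 1 kinetic-energy argument, the noise-aware NCE inequality $\langle\mathbf{e}(\mathbf{y}_t),\mathbf{x}_t-\mathbf{y}_t\rangle+\tfrac{1}{2}(\mathbf{x}_t-\mathbf{y}_t)^{\top}\nabla^2 f(\zeta_t)(\mathbf{x}_t-\mathbf{y}_t)\leq-\tfrac{\gamma}{2}\|\mathbf{x}_t-\mathbf{y}_t\|^2$, and the final Cauchy--Schwarz step $|\langle\mathbf{e}(\mathbf{y}_t),\mathbf{d}\rangle|\leq s\|\mathbf{e}(\mathbf{y}_t)\|$ all match the paper's proof essentially line for line.

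However, the ``delicate point'' you flag is a real gap, and your writeup does not close it. The averaged third-order Taylor bound needs the curvature along the NCE step $\delta=s\mathbf{u}$, i.e.\ $\tfrac{1}{2}\delta^{\top}\nabla^2 f(\mathbf{x}_t)\delta$, whereas the NCE condition only controls $\tfrac{1}{2}\mathbf{d}^{\top}\nabla^2 f(\zeta_t)\mathbf{d}$ with $\mathbf{d}=\mathbf{x}_t-\mathbf{y}_t$. Passing from the latter to the former necessarily introduces the ratio $s^2/\|\mathbf{d}\|^2$ because $\delta$ and $\mathbf{d}$ are parallel but of different lengths, so the noise contribution becomes $\tfrac{s^2}{\|\mathbf{d}\|^2}\langle\mathbf{e}(\mathbf{y}_t),\mathbf{d}\rangle$, which after Cauchy--Schwarz is $s^2\|\mathbf{e}(\mathbf{y}_t)\|/\|\mathbf{d}\|\geq s\|\mathbf{e}(\mathbf{y}_t)\|$ in Case~2 where $\|\mathbf{d}\|<s$. ``Keeping the inner product un-normalized'' does not avoid this: the rescaling is forced by the quadratic form, not by a premature use of Cauchy--Schwarz, so the penalty you obtain is $s^2\|\mathbf{e}(\mathbf{y}_t)\|/\|\mathbf{d}\|$, which degrades without bound as $\mathbf{v}_t\to\mathbf{0}$ and is strictly weaker than the claimed $s\|\mathbf{e}(\mathbf{y}_t)\|$. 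You would either need a lower bound on $\|\mathbf{d}\|$ when NCE fires (none is available; the NCE test gives only an upper bound of the form $\langle\mathbf{e}(\mathbf{y}_t),\mathbf{d}\rangle\leq\tfrac{\ell-\gamma}{2}\|\mathbf{d}\|^2$, which is vacuous), or a genuinely different argument that avoids extracting $\mathbf{u}^{\top}\nabla^2 f(\mathbf{x}_t)\mathbf{u}$ from a displacement of length $\|\mathbf{d}\|\ll s$. It is worth noting that the paper's own proof also writes down the intermediate bound $E_{t+1}\leq E_t-\tfrac{1}{2}(\gamma-2\varrho s)s^2-\langle\mathbf{e}(\mathbf{y}_t),\mathbf{x}_t-\mathbf{y}_t\rangle$ without a derivation that addresses this rescaling, so you have faithfully reproduced the paper's reasoning and correctly diagnosed its weakest step --- but flagging the problem is not the same as solving it.
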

\begin{proof}
    When $\left\|\mathbf{v}_t\right\| \geq s$, the same; when $\left\|\mathbf{v}_t\right\| < s$, we have
    \begin{align}
        \left\langle\mathbf{e}\left(\mathbf{y}_t\right), \mathbf{x}_t-\mathbf{y}_t\right\rangle+\frac{1}{2}\left(\mathbf{x}_t-\mathbf{y}_t\right)^{\top} \nabla^2 f\left(\zeta_t\right)\left(\mathbf{x}_t-\mathbf{y}_t\right) \leq-\frac{\gamma}{2}\left\|\mathbf{x}_t-\mathbf{y}_t\right\|^2
    \end{align}
    \begin{align}
        E_{t+1} &\le E_t - \frac{1}{2}(\gamma-2 \varrho s) s^2 - \left\langle\mathbf{e}\left(\mathbf{y}_t\right), \mathbf{x}_t-\mathbf{y}_t\right\rangle \\
        & \le E_t - \frac{1}{2}(\gamma-2 \varrho s) s^2 + s ||\mathbf{e}\left(\mathbf{y}_t\right)||
    \end{align}

    Then we evaluate the $\min$ function value. Note that,
    \begin{equation}
\frac{1}{2}(\gamma-2 \varrho s) s^2
= \frac{1}{2}\theta^2
\frac{s^2}{2 \eta}.
\end{equation}
\end{proof}
Thus we have
\begin{equation}
    \min \left\{\frac{s^2}{2 \eta}, \frac{1}{2}(\gamma-2 \varrho s) s^2-s\left\|\mathbf{e}\left(\mathbf{y}_t\right)\right\|\right\} = \frac{1}{2}(\gamma-2 \varrho s) s^2-s\left\|\mathbf{e}\left(\mathbf{y}_t\right)\right\| = O(\sqrt{\frac{\epsilon^3}{\varrho}} - \sqrt{\frac{\epsilon}{\varrho}}\left\|\mathbf{e}\left(\mathbf{y}_t\right)\right\|)
\end{equation}
so we require $\left\|\mathbf{e}\left(\mathbf{y}_t\right)\right\| = A^{-1}\epsilon$ where $A$ is large enough. Then the following arguments in \cite{jin2017accelerated} still holds.

\subsection{Detailed analysis of accumulated gradient noise}\label{append:error-accumulation}

In this subsection, we provide the proof of some lemmas we used above.
\begin{lemma} \label{lem:aux1}
    Suppose that $\max_\tau \left\|\mathbf{e}\left(\mathbf{y}_\tau\right)\right\| < \epsilon /\chi c \sqrt{2 \mathscr{T}}$, $1<\theta \le 1/4$, we have
    \begin{align}
        \sum_{j\in \mathcal{S}^c}\left|\sum_{\tau=0}^{\mathscr{T}-1} p_\tau^{(j)} e^{(j)}\left(y_\tau\right)\right|^2 \le \epsilon^2,
    \end{align}
    where index $j$ sums over all  eigen-direction of $\mathcal{H}=\nabla^2 f(\mathbf{0})$ satisfying $\lambda_j \in \left[-\ell, \theta^2 /\left[\eta(2-\theta)^2\right]\right]$.
\end{lemma}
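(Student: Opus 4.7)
The plan is to bound each inner sum by Cauchy--Schwarz, interchange the order of summation, and then reduce the bound to a uniform $\ell^2$-estimate on the Green's-function coefficients $p_\tau^{(j)}$ restricted to the slow subspace $\mathcal{S}^c$. More concretely, for each $j \in \mathcal{S}^c$, Cauchy--Schwarz gives
\begin{equation*}
  \left|\sum_{\tau=0}^{\mathscr{T}-1} p_\tau^{(j)} e^{(j)}(y_\tau)\right|^2
  \;\le\; \Bigl(\sum_{\tau=0}^{\mathscr{T}-1} (p_\tau^{(j)})^2\Bigr)\Bigl(\sum_{\tau=0}^{\mathscr{T}-1} (e^{(j)}(y_\tau))^2\Bigr).
\end{equation*}
Summing over $j\in\mathcal{S}^c$, pulling the uniform bound $P := \max_{j\in\mathcal{S}^c}\sum_{\tau}(p_\tau^{(j)})^2$ outside, and swapping the two remaining sums yields
\begin{equation*}
  \sum_{j\in\mathcal{S}^c}\left|\sum_{\tau}p_\tau^{(j)}e^{(j)}(y_\tau)\right|^2 \;\le\; P \sum_\tau \|\mathbf{e}(y_\tau)\|^2 \;\le\; P\,\mathscr{T}\max_\tau\|\mathbf{e}(y_\tau)\|^2.
\end{equation*}

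The core technical step is to establish that $P = O(1)$ uniformly in $j$ on the slow range $\lambda_j \in [-\ell,\theta^2/(\eta(2-\theta)^2)]$. The coefficients $p_\tau^{(j)}$ come from the diagonalization of the AGD iteration matrix in the $j$-th eigen-direction, i.e.\ they are the scalar Green's-function entries of the $2\times 2$ block associated with $\lambda_j$, evaluated at the $(t-1-\tau)$-th power. For $j\in\mathcal{S}^c$ these roots are real and lie in magnitude at most $1+O(\theta)$, so one can obtain an $\ell^2$-bound either by a direct closed-form expansion of the two real roots (following the standard AGD analysis in~\cite{jin2017accelerated}) or, for the oscillation-free regime, by a discrete Parseval/energy identity. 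The scale-factor $1+O(\theta)$ raised to $\mathscr{T} = \sqrt{\kappa}\chi c$ iterations is absorbed by the constants $\chi, c$, which are free to be chosen large, mirroring how~\cite{jin2017accelerated} handles the same Green's-function bound in their proof of Lemma~16.

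Once $P = O(1)$ is in hand, substituting the hypothesis $\max_\tau\|\mathbf{e}(y_\tau)\| < \epsilon/(\chi c\sqrt{2\mathscr{T}})$ gives
\begin{equation*}
  \sum_{j\in\mathcal{S}^c}\left|\sum_\tau p_\tau^{(j)} e^{(j)}(y_\tau)\right|^2
  \;\le\; P\,\mathscr{T} \cdot \frac{\epsilon^2}{2\mathscr{T}\,\chi^2 c^2}
  \;=\; \frac{P\,\epsilon^2}{2\chi^2 c^2}\;\le\;\epsilon^2,
\end{equation*}
for $\chi c$ taken sufficiently large (which is already a free parameter of the algorithm), completing the argument.

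The main obstacle is precisely the characterization of $\sum_\tau (p_\tau^{(j)})^2$ in $\mathcal{S}^c$: unlike the fast subspace $\mathcal{S}$, where complex characteristic roots of modulus $<1$ produce automatic $\ell^2$ decay, on the slow subspace the two real characteristic roots can exceed $1$ (when $\lambda_j<0$) so the $\ell^2$-norm is not manifestly summable without using the structural cancellation between the two roots inside $p_\tau^{(j)}$. The $1/\sqrt{\mathscr{T}}$ factor in the noise hypothesis is tight in the sense that it exactly cancels the $\mathscr{T}$ generated by Cauchy--Schwarz, which is why the proof cannot afford a sloppy bound on $P$. Any looser analysis of the Green's function would force a stricter assumption on $\|\mathbf{e}(y_\tau)\|$ and therefore tighten the required gradient-estimation precision $\epsilon_g$ in \Cref{theorem:opti}.
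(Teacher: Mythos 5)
Your approach is a genuinely different split of Cauchy--Schwarz than the paper's: you apply $|\sum_\tau p_\tau^{(j)} e^{(j)}(y_\tau)|^2 \le (\sum_\tau (p_\tau^{(j)})^2)(\sum_\tau (e^{(j)})^2)$ per $j$ and then peel off $P = \max_{j\in\mathcal{S}^c}\sum_\tau(p_\tau^{(j)})^2$, whereas the paper applies $|\sum_\tau a_\tau|^2 \le \mathscr{T}\sum_\tau a_\tau^2$ and then peels off $\sum_\tau (\max_j p_\tau^{(j)})^2$. These are different quantities: the paper shows $\sum_\tau (\max_j p_\tau^{(j)})^2 \le 2(\chi c)^2$ via the tail estimate $p_\tau^{(j)} \le \chi c/(\tau+1)$, and that $(\chi c)^2$ factor is needed to cancel against the $1/(\chi c)^2$ in the noise hypothesis; your route only needs $P = O(1)$, which is a weaker requirement and makes the $\chi c$ factors in the hypothesis appear to give slack. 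So your split is cleaner if it can be justified.

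The gap is in the justification of $P = O(1)$. You argue that the characteristic roots have magnitude at most $1+O(\theta)$ and that the growth $(1+O(\theta))^{\mathscr{T}}$ with $\mathscr{T} = \sqrt{\kappa}\chi c = \Theta(\chi c/\theta)$ is ``absorbed by the constants $\chi, c$.'' That cannot work: $(1+O(\theta))^{\Theta(\chi c/\theta)} = e^{\Theta(\chi c)}$, which is exponentially large in exactly the constants you are proposing to use as absorbers, and the $\ell^2$-sum of an exponentially growing sequence over $\mathscr{T}$ terms is not $O(1)$ by any pre-factor gymnastics. The actual structural fact that makes $P$ bounded is the normalization built into the definition $p_\tau^{(j)} = a^{(j)}_{\mathscr{T}-1-\tau}/\sum_{\tau'} a^{(j)}_{\tau'}$: for $j\in\mathcal{S}^c$ the two characteristic roots $\mu_1 \ge \mu_2$ are real with $\mu_1\mu_2 = (1-\theta)(1-x) > 0$ and $\mu_1+\mu_2 = (2-\theta)(1-x) > 0$, hence both positive, hence $a_\tau^{(j)} = (\mu_1^{\tau+1}-\mu_2^{\tau+1})/(\mu_1-\mu_2) \ge 0$, hence $p_\tau^{(j)} \ge 0$ and $\sum_\tau p_\tau^{(j)} = 1$, giving $\sum_\tau(p_\tau^{(j)})^2 \le \max_\tau p_\tau^{(j)} \le 1$. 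The exponential growth in the numerator and the denominator cancel by construction; nothing is ``absorbed'' by $\chi c$. If you replace your hand-wavy root-magnitude argument with this normalization argument, your proof goes through and is in fact tighter and simpler than the paper's, which never exploits $\sum_\tau p_\tau^{(j)} = 1$ and instead pays for the coarser interchange of $\max_j$ and $\sum_\tau$ with a polynomial in $\chi c$.
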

\begin{proof}
    \begin{align}
        \sum_{j\in\mathcal{S}^c}\left|\sum_{\tau=0}^{\mathscr{T}-1} p_\tau^{(j)} e^{(j)}\left(y_\tau\right)\right|^2 &\le \mathscr{T} \sum_{j\in\mathcal{S}^c} \sum_{\tau=0}^{\mathscr{T}-1}\left(p_\tau^{(j)} e^{(j)}\left(y_\tau\right)\right)^2 \\
        &\le \mathscr{T}\sum_{\tau=0}^{\mathscr{T}-1} \sum_{j\in\mathcal{S}^c} (\max_j p_\tau^{(j)})^2(e^{(j)}\left(y_\tau\right))^2 \\
        &= \mathscr{T}\sum_{\tau=0}^{\mathscr{T}-1} (\max_j p_\tau^{(j)})^2\sum_{j\in\mathcal{S}^c} (e^{(j)}\left(y_\tau\right))^2  \\
        &\le \mathscr{T}\left(\sum_{\tau=0}^{\mathscr{T}-1} (\max_j p_\tau^{(j)})^2 \right) \max_\tau \left\|\mathbf{e}\left(\mathbf{y}_\tau\right)\right\|^2 \\
        &\le \frac{2\epsilon^2 }{\chi^2 c^2}\left(\sum_{\tau=0}^{\mathscr{T}-1} (\max_j p_\tau^{(j)})^2 \right).
    \end{align}
Now we try to bound the summation $\sum_{\tau=0}^{\mathscr{T}-1} (\max_j p_\tau^{(j)})^2 $. By definition, we have
\begin{align}
    p_\tau^{(j)} &= \frac{a_{\mathscr{T}-1-\tau}^{(j)}}{\sum_{\tau=0}^{\mathscr{T}-1} a_\tau^{(j)}} \label{eq:p_tau_j},\\
    a_\tau^{(j)} &= \left(\begin{array}{ll}
1 & 0
\end{array}\right) \mathbf{A}_j^\tau \left(\begin{array}{l}
1 \\
0
\end{array}\right), \\
&= \frac{\mu_1^{\tau+1}-\mu_2^{\tau+1}}{\mu_1-\mu_2}. \label{eq:a_t_j}
\end{align}
where $\mu_1$ and $\mu_2$($
    \mu_1 \ge \mu_2$) are two roots satisfying
\begin{align}
   & \mu^2-(2-\theta)(1-x) \mu+(1-\theta)(1-x)=0, \\
   & x = \eta \lambda_j.
\end{align}
It is noticed that $\mu_1$ and $\mu_2$ are dependent of eigen-direction $j$, and they are real numbers when  $j\in \mathcal{S}^c$.

Combining \cref{eq:p_tau_j} and \cref{eq:a_t_j}, we have
\begin{align}
    p_\tau^{(j)}&=\frac{\mu_1^{\mathscr{T}-\tau} - \mu_2^{\mathscr{T}-\tau}}{\sum_{\tau_0=1}^{\mathscr{T}} \mu_1^{\tau_0} - \mu_2^{\tau_0}} \\
    &\le \frac{\mu_1^{\mathscr{T}-\tau} - \mu_2^{\mathscr{T}-\tau}}{\sum_{\tau_0=\mathscr{T}-\tau}^{\mathscr{T}} \mu_1^{\tau_0} - \mu_2^{\tau_0}} \\
    & \le \frac{\mu_1^{\mathscr{T}-\tau} }{\sum_{\tau_0=\mathscr{T}-\tau}^{\mathscr{T}} \mu_1^{\tau_0} } \\
    & = \frac{1}{\sum_{\tau_0=0}^{\tau} \mu_1^{\tau_0}}\label{eq:p_tau_j_2}.
\end{align}
Noticed that the reciprocal sum of $\mu_1$ and $\mu_2$ is independent of the eigen-direction $j$,
\begin{align}
    \frac{1}{\mu_1}+\frac{1}{\mu_2} = \frac{2-\theta}{1-\theta},
\end{align}
 and $\mu_1 \ge \mu_2$, we have
\begin{align}
    \mu_1 \ge 1-\frac{\theta}{1-\theta} \ge 1-2\theta. \label{eq:mu1_bound}
\end{align}
Combining \cref{eq:p_tau_j_2} and \cref{eq:mu1_bound}, we have
\begin{align}
    p_\tau^{(j)} &\le \frac{1}{\sum_{\tau_0=0}^{\tau}(1-2\theta)^{\tau_0}} \\
    &= \frac{2\theta}{1-(1-2\theta)^{\tau+1}} \\
    &\le \frac{2\theta}{1-\frac{1}{1+2(\tau+1)\theta}} \\
    &= \frac{1 + 2(\tau+1)\theta}{\tau+1 } \\
    &\le \frac{1 + 2(\mathscr{T}+1)\theta}{\tau+1 } \\
    &\le \frac{\chi c}{\tau + 1}.
\end{align}

Therefore, we have
\begin{align}
    \left(\sum_{\tau=0}^{\mathscr{T}-1}\left(\max _j p_\tau^{(j)}\right)^2\right) \le (\chi c)^2 \sum_\tau \frac{1}{(1+\tau)^2} \le 2 (\chi c)^2.
\end{align}
Then we get $\sum_{j \in \mathcal{S}^c}\left|\sum_{\tau=0}^{\mathscr{T}-1} p_\tau^{(j)} e^{(j)}\left(y_\tau\right)\right|^2 \leq \epsilon^2$.
\end{proof}
\begin{lemma}\label{lem:aux2}
Suppose that $\max_\tau \left\|\mathbf{e}\left(\mathbf{y}_\tau\right)\right\|<\epsilon / \sqrt{2 \mathscr{T}}$, we have
\begin{align}
    \sum_{j\in \mathcal{S}}\left|\eta \lambda_j \sum_{\tau=0}^{t-1} a_\tau^{(j)} e^{(j)}\left(y_{t-1-\tau}\right)\right|^2<\epsilon^2 / 64,
\end{align}
where index $j$ sums over all eigen-direction of $\mathcal{H}=\nabla^2 f(\mathbf{0})$ satisfying $\lambda_j \in\left( \theta^2 /\left[\eta(2-\theta)^2\right],\ell\right]$.
\end{lemma}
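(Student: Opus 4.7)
The structure will mirror the proof of Lemma~\ref{lem:aux1}, but the analysis is more delicate because for $\lambda_j\in\mathcal{S}$ the roots $\mu_{1,2}$ of the AGD characteristic polynomial $\mu^2-(2-\theta)(1-x_j)\mu+(1-\theta)(1-x_j)=0$ (with $x_j=\eta\lambda_j$) form a complex-conjugate pair $\mu_{1,2}=re^{\pm i\phi}$ rather than positive reals. First I would apply Cauchy--Schwarz in the inner $\tau$-sum and then pull the maximum over $j$ outside:
\begin{equation*}
\sum_{j\in\mathcal{S}}\left|\eta\lambda_j\sum_{\tau=0}^{t-1}a_\tau^{(j)}e^{(j)}(y_{t-1-\tau})\right|^2
\le \max_{j\in\mathcal{S}}\!\Bigl[(\eta\lambda_j)^2\sum_{\tau=0}^{t-1}(a_\tau^{(j)})^2\Bigr]\cdot\sum_{\tau=0}^{t-1}\|\mathbf{e}(y_{t-1-\tau})\|^2 ,
\end{equation*}
and the second factor is immediately bounded by $\mathscr{T}\max_\tau\|\mathbf{e}(y_\tau)\|^2\le\epsilon^2/2$ via the hypothesis and $t\le\mathscr{T}$.

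The remaining task is to control the prefactor $\max_{j\in\mathcal{S}}(\eta\lambda_j)^2\sum_\tau(a_\tau^{(j)})^2$ by a small absolute constant. The plan is to use the closed form $a_\tau^{(j)}=r^\tau\sin((\tau+1)\phi)/\sin\phi$ (valid for complex roots) together with the Parseval/geometric-series identity
\begin{equation*}
\sum_{\tau=0}^{\infty}r^{2\tau}\sin^2((\tau+1)\phi)=\frac{(1+r^2)\sin^2\phi}{(1-r^2)[(1-r^2)^2+4r^2\sin^2\phi]},
\end{equation*}
which yields the explicit expression
\begin{equation*}
(\eta\lambda_j)^2\sum_\tau (a_\tau^{(j)})^2=\frac{x_j^2(1+r^2)}{(1-r^2)\bigl[(1-r^2)^2+4r^2\sin^2\phi\bigr]} .
\end{equation*}
Substituting the explicit relations $1-r^2=\theta+x_j(1-\theta)$ and $4r^2\sin^2\phi=(1-x_j)[(2-\theta)^2 x_j-\theta^2]$ (the latter is positive exactly on $\mathcal{S}$) reduces the problem to bounding a concrete rational function of $(x_j,\theta)$ uniformly on the range $x_j\in(\theta^2/(2-\theta)^2,\,\eta\ell]$.

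The hard part will be the near-boundary regime $x_j\to\theta^2/(2-\theta)^2$, where $\sin\phi\to 0$ and the naive amplitude bound $|a_\tau^{(j)}|\le r^\tau/\sin\phi$ blows up. Here I will exploit two cancellations: first, the numerator $x_j^2$ is simultaneously of order $\theta^4$, so the ratio is only $O(\theta)$; second, the step-response identity $\sum_{\tau\ge 0}\eta\lambda_j\,a_\tau^{(j)}=x_j/[(1-\mu_1)(1-\mu_2)]=p(1)/p(1)=1$ together with the geometric decay $r^\tau$ keeps the $\ell^2$ energy of the impulse response uniformly controlled. Splitting into the regime $x_j\le 2\theta^2/(2-\theta)^2$ (where the expansion near the boundary gives an $O(\theta)$ bound) and $x_j>2\theta^2/(2-\theta)^2$ (where $\sin\phi$ is bounded away from $0$ and direct substitution in the closed form yields an absolute constant) produces a uniform constant $C_0=O(1)$. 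Combining with $\sum_\tau\|\mathbf{e}\|^2\le\epsilon^2/2$ and noting that the constants $c,\chi$ in the parameter set of \Cref{sec:hyper} can be chosen large enough to absorb $C_0$ (the hypothesis is effectively applied with an extra $c^{-2}$ factor inherited from upstream lemmas like \Cref{aux:modi_lem_17}) yields the claimed $\epsilon^2/64$.
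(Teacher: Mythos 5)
Your proposal is correct and follows essentially the same route as the paper: bound the sum via Cauchy–Schwarz, use the closed form $a_\tau^{(j)}=r^\tau\sin[(\tau+1)\phi]/\sin\phi$ for the complex-root regime, and split the analysis at $x_j=2\theta^2/(2-\theta)^2$ into a near-boundary case (where $\sin\phi\to 0$ but $x_j^2=\Theta(\theta^4)$ cancels it, giving $O(\theta)$) and a bulk case (where $\sin^2\phi=\Theta(x_j)$ is bounded away from zero, giving an $O(1)$ constant). The only substantive difference is cosmetic: you apply Cauchy–Schwarz per-index $j$ and take $\max_j$ of the $\ell^2$-energy $(\eta\lambda_j)^2\sum_\tau(a_\tau^{(j)})^2$ (which your Parseval-type closed form evaluates exactly to $\frac{x_j^2(1+r^2)}{(1-r^2)[(1-r^2)^2+4r^2\sin^2\phi]}$), whereas the paper first extracts a factor of $\mathscr{T}$, pulls $\max_j$ inside the $\tau$-sum, and bounds $\sum_\tau(\max_j x a_\tau^{(j)})^2$ term by term. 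Your version is slightly tighter (max outside vs.\ inside) and trades the paper's asymptotic estimates $O(\theta^4\tau^2(1-\theta)^\tau)$ and $O((1-\theta)^\tau/\tau)$ for a single explicit rational function of $(x_j,r,\phi)$, which makes the case analysis cleaner to verify. Both proofs deliver only $O(\epsilon^2)$ rather than the stated $\epsilon^2/64$ with the literal hypothesis $\|\mathbf{e}\|<\epsilon/\sqrt{2\mathscr{T}}$; you correctly observe that the missing factor is supplied by the $c^{-2}$ in the hypothesis actually used in the parent Lemma~\ref{aux:modi_lem_17} — this is the same implicit constant-bookkeeping the paper's proof relies on. One small caveat: the ``step-response identity'' $\sum_\tau \eta\lambda_j a_\tau^{(j)}=1$ you invoke as motivation controls the $\ell^1$-sum, not the $\ell^2$-energy, and plays no direct role in the bound; the closed form is what does the work. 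To turn your plan into a complete proof you would still need to carry out the case-by-case substitution you sketch, but the outline, closed form, and regime split are all sound and match the paper's strategy.
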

\begin{proof}
\begin{align}
\sum_{j \in \mathcal{S}}\left|\eta \lambda_j\sum_{\tau=0}^{\mathscr{T}-1}  a_\tau^{(j)} e^{(j)}\left(y_{t-1-\tau}\right)\right|^2 & \leq \mathscr{T} \sum_{j \in \mathcal{S}} \sum_{\tau=0}^{\mathscr{T}-1}\left(x a_\tau^{(j)} e^{(j)}\left(y_{t-1-\tau}\right)\right)^2 \\
& \leq \mathscr{T} \sum_{\tau=0}^{\mathscr{T}-1} \sum_{j \in \mathcal{S}}\left(\max _j x a_\tau^{(j)}\right)^2\left(e^{(j)}\left(y_{t-1-\tau}\right)\right)^2 \\
& =\mathscr{T} \sum_{\tau=0}^{\mathscr{T}-1}\left(\max _j x a_\tau^{(j)}\right)^2 \sum_{j \in \mathcal{S}}\left(e^{(j)}\left(y_{t-1-\tau}\right)\right)^2 \\
& \leq \mathscr{T}\left(\sum_{\tau=0}^{\mathscr{T}-1}\left(\max _j x a_\tau^{(j)}\right)^2\right) \max_\tau \left\|\mathbf{e}\left(\mathbf{y}_\tau\right)\right\|^2 \\
& \leq \epsilon^2\left(\sum_{\tau=0}^{\mathscr{T}-1}\left(\max _j x a_\tau^{(j)}\right)^2\right).
\end{align}
Thus we have
\begin{align}
    (x a_\tau^{(j)})^2 = (x\frac{r^{\tau} \sin [(\tau+1) \phi]}{ \sin [\phi]})^2,
\end{align}
where
\begin{align}
    r=\sqrt{(1-\theta)(1-x)}, \quad \sin \phi=\sqrt{\left((2-\theta)^2 x-\theta^2\right)(1-x)} / 2 r.
\end{align}

To proceed, we  split $x \in\left(\frac{\theta^2}{(2-\theta)^2}, \frac{1}{4}\right]$ into two cases

Case 1:$x \in\left(\frac{\theta^2}{(2-\theta)^2}, \frac{2 \theta^2}{(2-\theta)^2}\right]$.Using $|\frac{\sin [(\tau+1) \phi]}{\sin [\phi]}|\le \tau$
\begin{align}
    \left(x a_\tau^{(j)}\right)^2 = (O(x(1-\theta)^{\tau/2} \tau))^2 = O(x^2\tau^2(1-\theta)^\tau) = O(\theta^4\tau^2(1-\theta)^\tau).
\end{align}

Case 2: $x \in\left(\frac{2 \theta^2}{(2-\theta)^2}, \frac{1}{4}\right]$. We have $|\sin [(\tau+1) \phi]|\le 1$ and $x=\Theta\left(\sin ^2 \phi\right)$, then we have
\begin{align}
    \left(x a_\tau^{(j)}\right)^2=O(\left(x \frac{r^\tau \sin [(\tau+1) \phi]}{\sin [\phi]}\right)^2) = O((\sqrt{x}r^\tau)^2)=O((1-\theta)^\tau x(1-x)^\tau)<O((1-\theta)^\tau / \tau),
\end{align}
where we make use of $x(1-x)^\tau < O(1/\tau)$.

From these cases we deduce that
\begin{align}
    \left(x a_\tau^{(j)}\right)^2 \le \max \{O\left(\theta^4 \tau^2(1-\theta)^\tau\right),O\left((1-\theta)^\tau / \tau\right)\} \le O\left(\theta^4 \tau^2(1-\theta)^\tau\right)+O\left((1-\theta)^\tau / \tau\right),
\end{align}
implying that
\begin{align}
    \sum_{\tau=0}^{\mathscr{T}-1}\left(\max _j x a_\tau^{(j)}\right)^2 \le \sum_{\tau=0}^{\infty}\left(\max _j x a_\tau^{(j)}\right)^2 \le \sum_{\tau=0}^{\infty} O\left(\theta^4 \tau^2(1-\theta)^\tau\right)+O\left((1-\theta)^\tau / \tau\right) = O(\theta) < O(1),
\end{align}
where we make use of $\sum_{\tau = 1}^{\infty}(1-\theta)^\tau / \tau = \log (1-\theta)$, and $\sum_{\tau = 1}^{\infty}\tau^2(1-\theta)^\tau = O(1/\theta^3)$.

Gathering these bounds, we find that, 
\begin{align}
    \sum_{j \in \mathcal{S}}\left|\eta \lambda_j \sum_{\tau=0}^{\mathscr{T}-1} x a_\tau^{(j)} e^{(j)}\left(y_{t-1-\tau}\right)\right|^2 < O(\epsilon^2),
\end{align}
\end{proof}
\begin{lemma} \label{lem:aux3}
Suppose that $\max_\tau \left\|\mathbf{e}\left(\mathbf{y}_\tau\right)\right\|<\epsilon /c \sqrt{ \mathscr{T}}$, we have
    \begin{align}
        \sum_{j \in \mathcal{S}}\left|\eta \lambda_j^{\frac{1}{2}} \sum_{\tau=0}^{t-1}\left(a_\tau-a_{\tau-1}\right) e^{(j)}\left(y_{t-1-\tau}\right)\right|^2<\frac{1}{3} \frac{\epsilon^2 \ell}{c^2},
    \end{align}
    where index $j$ sums over all eigen-direction of $\mathcal{H}=\nabla^2 f(\mathbf{0})$ satisfying $\lambda_j \in\left(\theta^2 /\left[\eta(2-\theta)^2\right], \ell\right]$.
\end{lemma}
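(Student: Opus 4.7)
The plan is to mirror the structure of \Cref{lem:aux2} closely, since the left-hand side here differs only by replacing $\eta\lambda_j$ with $\eta\lambda_j^{1/2}$ and $a_\tau^{(j)}$ with the first difference $a_\tau^{(j)} - a_{\tau-1}^{(j)}$. I would first apply Cauchy--Schwarz to the inner sum over $\tau$ and swap the order of summation, obtaining
\begin{equation*}
    \sum_{j \in \mathcal{S}}\left|\eta \lambda_j^{1/2} \sum_{\tau=0}^{t-1}(a_\tau^{(j)}-a_{\tau-1}^{(j)}) e^{(j)}(y_{t-1-\tau})\right|^2 \leq \mathscr{T}\left(\sum_{\tau=0}^{\mathscr{T}-1}\max_j \eta\lambda_j (a_\tau^{(j)}-a_{\tau-1}^{(j)})^2\right)\max_\tau \|\mathbf{e}(\mathbf{y}_\tau)\|^2,
\end{equation*}
so that the hypothesis $\max_\tau\|\mathbf{e}(\mathbf{y}_\tau)\|^2 < \epsilon^2/(c^2\mathscr{T})$ reduces the claim to showing $\sum_{\tau=0}^{\mathscr{T}-1}\max_j\eta\lambda_j(a_\tau^{(j)} - a_{\tau-1}^{(j)})^2 \leq \ell/3$ (using $\eta = 1/(4\ell)$ and absorbing constants).

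Next, I would write $a_\tau^{(j)} = r^\tau\sin[(\tau+1)\phi]/\sin\phi$ as in the proof of \Cref{lem:aux2}, with $x := \eta\lambda_j$, $r = \sqrt{(1-\theta)(1-x)}$, and $\sin\phi = \sqrt{((2-\theta)^2 x - \theta^2)(1-x)}/(2r)$. A short trigonometric manipulation gives
\begin{equation*}
    a_\tau^{(j)} - a_{\tau-1}^{(j)} = \frac{r^\tau\sin[(\tau+1)\phi] - r^{\tau-1}\sin[\tau\phi]}{\sin\phi},
\end{equation*}
which I would bound by the elementary estimates $|\sin[(\tau+1)\phi] - \sin[\tau\phi]| \leq |\phi|$, $|1-r| = O(\theta + x)$, and $|\sin\phi| = \Theta(\sqrt{x})$ in the regime $x \in (\theta^2/(2-\theta)^2, 1/4]$. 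This should yield $|a_\tau^{(j)} - a_{\tau-1}^{(j)}| = O(r^{\tau-1}\sqrt{x})$ in the ``large $x$'' regime and $O(\tau r^{\tau-1}(\theta+x))$ in the ``small $x$'' regime near the boundary, so that $\eta\lambda_j (a_\tau^{(j)} - a_{\tau-1}^{(j)})^2$ acquires an extra factor of $x$ (hence an extra factor of $\eta\ell$) compared to the estimate $(x a_\tau^{(j)})^2$ appearing in \Cref{lem:aux2}.

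I would then split $x \in (\theta^2/(2-\theta)^2, 1/4]$ into the same two sub-cases as in \Cref{lem:aux2}. In Case~1 ($x \in (\theta^2/(2-\theta)^2, 2\theta^2/(2-\theta)^2]$), I bound $\eta\lambda_j(a_\tau - a_{\tau-1})^2 \leq O(\eta\ell\cdot \theta^2\tau^2(1-\theta)^\tau)$, and in Case~2 ($x \in (2\theta^2/(2-\theta)^2, 1/4]$) I bound it by $O(\eta\ell\cdot(1-\theta)^\tau/\tau)$. Summing over $\tau$ and using $\sum_\tau \tau^2(1-\theta)^\tau = O(1/\theta^3)$ and $\sum_\tau (1-\theta)^\tau/\tau = O(\log(1/\theta))$ gives a total that is $O(\eta\ell) = O(1)$, and collecting constants produces the bound $\ell/(3c^2)\cdot\epsilon^2$ after multiplication by $\max_\tau\|\mathbf{e}(\mathbf{y}_\tau)\|^2$.

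The main obstacle I anticipate is the algebraic verification that the telescoping structure $a_\tau^{(j)} - a_{\tau-1}^{(j)}$ really gains the extra factor of $\sqrt{x}$ uniformly across both sub-cases, since the naive estimate from $|a_\tau^{(j)}|$ alone would only give $O(\sqrt{\eta\ell})$ rather than $O(\ell)$ in the final bound. This requires careful use of the identity $\sin[(\tau+1)\phi] - \sin[\tau\phi] = 2\cos[(2\tau+1)\phi/2]\sin[\phi/2]$ together with the cancellation between $r^\tau\sin[(\tau+1)\phi]$ and $r^{\tau-1}\sin[\tau\phi]$, both of which hinge on the relations between $r$, $\phi$, and $x$ that hold only in the stated regime of $\mathcal{S}$.
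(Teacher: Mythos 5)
Your overall strategy mirrors the paper's proof of this lemma closely: Cauchy--Schwarz on the $\tau$-sum, express $a_\tau^{(j)}$ trigonometrically via $r$ and $\phi$, and split $x = \eta\lambda_j$ into the same two sub-intervals. Your algebraic rewriting
$a_\tau - a_{\tau-1} = r^\tau(\sin[(\tau+1)\phi] - \sin[\tau\phi])/\sin\phi + r^{\tau-1}(r-1)\sin[\tau\phi]/\sin\phi$
is a valid variant of the paper's identity
$a_\tau - a_{\tau-1} = \tfrac{r\cos\phi - 1}{r\sin\phi}\,r^\tau\sin[\tau\phi] + r^\tau\cos[\tau\phi]$,
and the two routes ultimately produce the same case-by-case bounds. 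So the framework is right.

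However, the intermediate estimates you state do not close. First, the claim $|a_\tau - a_{\tau-1}| = O(r^{\tau-1}\sqrt{x})$ in the large-$x$ regime is too strong: your sine-difference term equals $r^\tau\cos[(2\tau+1)\phi/2]/\cos[\phi/2]$, which is $O(r^\tau)$ and carries no factor of $\sqrt{x}$; the correct bound is $|a_\tau - a_{\tau-1}| = O(r^{\tau-1})$, so the ``telescoping gain'' you flag as a possible obstacle indeed fails. The paper works around this by not trying to shrink $|a_\tau - a_{\tau-1}|$ itself: it keeps the $\cos[\tau\phi]$-type term separate and bounds $\sqrt{x}\,r^\tau\cos[\tau\phi]$ uniformly across both cases by $\sqrt{(1-\theta)^\tau/(1+\tau)}$ using $x(1-x)^\tau \le 1/(1+\tau)$, then treats the $\sin[\tau\phi]$-type term by case, further splitting $\theta + x$ in Case~2 into a $\theta$-part and an $x$-part. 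Second, your Case~1 bound $\eta\lambda_j(a_\tau - a_{\tau-1})^2 \le O(\eta\ell\,\theta^2\tau^2(1-\theta)^\tau)$ sums to $O(\eta\ell/\theta)$, not $O(\eta\ell)$: there is a missing factor of $\theta^2$ (the paper gets $O(\theta^4\tau^2(1-\theta)^{2\tau})$ for the $\sin[\tau\phi]$-part, which sums to $O(\theta)$) and the uniform $O((1-\theta)^\tau/(1+\tau))$ contribution is omitted entirely. Third, the display after your Cauchy--Schwarz step has $\eta\lambda_j$ where $(\eta\lambda_j^{1/2})^2 = \eta^2\lambda_j$ should appear; although $\eta \le 1$ makes the written inequality technically true, that extra factor of $\eta = 1/(4\ell)$ is exactly what converts the $O(1)$ sum over $\tau$ into the claimed $\ell/3$, so it cannot be dropped as ``absorbing constants.'' With the corrected decomposition estimates (both terms kept, the $\cos$-type term bounded uniformly, and the $\sin$-type term with the $(\theta + x)$ split), your approach does give the lemma, but as written the estimates do not add up.
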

\begin{proof}
\begin{align*}
\sum_{j \in \mathcal{S}}\left|\eta \lambda_j^{\frac{1}{2}} \sum_{\tau=0}^{t-1}\left(a_\tau-a_{\tau-1}\right) e^{(j)}\left(y_{t-1-\tau}\right)\right|^2 & \leq \mathscr{T} \sum_{j \in \mathcal{S}} \sum_{\tau=0}^{\mathscr{T}-1}\left(\eta \lambda_j^{\frac{1}{2}}\left(a_\tau-a_{\tau-1}\right) e^{(j)}\left(y_{t-1-\tau}\right)\right)^2 \\
& \leq \mathscr{T} \sum_{\tau=0}^{\mathscr{T}-1} \sum_{j \in \mathcal{S}}\left(\max _j \eta \lambda_j^{\frac{1}{2}}\left(a_\tau-a_{\tau-1}\right)\right)^2\left(e^{(j)}\left(y_{t-1-\tau}\right)\right)^2 \\
& =\mathscr{T} \sum_{\tau=0}^{\mathscr{T}-1}\left(\max _j \eta \lambda_j^{\frac{1}{2}}\left(a_\tau-a_{\tau-1}\right)\right)^2 \sum_{j \in \mathcal{S}}\left(e^{(j)}\left(y_{t-1-\tau}\right)\right)^2 \\
& \leq \mathscr{T}\left(\sum_{\tau=0}^{\mathscr{T}-1}\left(\max _j \eta \lambda_j^{\frac{1}{2}}\left(a_\tau-a_{\tau-1}\right)\right)^2\right) \max_\tau \left\|\mathbf{e}\left(\mathbf{y}_\tau\right)\right\|^2 \\
& \leq \eta\epsilon^2/c^2\left(\sum_{\tau=0}^{\mathscr{T}-1}\left(\max _j \sqrt{x}\left(a_\tau-a_{\tau-1}\right)\right)^2\right).
\end{align*}
We notice that
\begin{align}
    \sqrt{x}\left(a_\tau-a_{\tau-1}\right) = \sqrt{x}(\frac{r \cos \phi-1}{r \sin \phi} \cdot r^\tau \sin [\tau \phi]+r^\tau \cos [\tau \phi]),
\end{align}
and the second term can be bounded as follows
\begin{align}
    |\sqrt{x}r^\tau \cos [\tau \phi]| \le \sqrt{x}r^{\tau} \le \sqrt{(1-\theta)^\tau}\sqrt{x(1-x)^\tau} \le \sqrt{\frac{(1-\theta)^\tau}{1+\tau}},
\end{align}
where we make use of $x(1-x)^\tau \le \frac{1}{1+\tau}(1-\frac{1}{1+\tau})^\tau \le \frac{1}{1+\tau}$.

For the first term, we also split $x \in\left(\frac{\theta^2}{(2-\theta)^2}, \frac{1}{4}\right]$ into two cases

Case 1: $x \in\left(\frac{\theta^2}{(2-\theta)^2}, \frac{2 \theta^2}{(2-\theta)^2}\right]$.
\begin{align}
    |\sqrt{x}\frac{r \cos \phi-1}{r \sin \phi} \cdot r^\tau \sin [\tau \phi]| \le O(\sqrt{x}(\theta+x)  r^{\tau -1}\tau) \le O(\theta^2 \tau (1-\theta)^{\tau-1}),
\end{align}
where we use $|\sin [\tau \phi]/\sin \phi| \le \tau$,$r \le O(1-\theta)$ and $|r \cos \phi-1|\le O(\theta+x)$.

Case 2: $x \in\left(\frac{2 \theta^2}{(2-\theta)^2}, \frac{1}{4}\right]$.
\begin{align}
    \left|\sqrt{x} \frac{r \cos \phi-1}{r \sin \phi} \cdot r^\tau \sin [\tau \phi]\right| &\le O((\theta+x)r^{\tau-1}) \le O((1-\theta)^{\frac{\tau-1}{2}}(\theta+x)(1-x)^{\frac{\tau-1}{2}}) \\
    &\le O((1-\theta)^{\frac{\tau-1}{2}}\theta) + O(\frac{(1-\theta)^{\tau/2}}{1+\tau}),
\end{align}
where we used $|\sin [\tau \phi]|\le 1$, $x=\Theta\left(\sin ^2 \phi\right)$ , $|r \cos \phi-1| \leq O(\theta+x)$, $(1-x)^{\frac{\tau-1}{2}}<1$ and $x(1-x)^{\frac{\tau-1}{2}} < O(\frac{1}{\tau + 1})$.

Thus we have the following bound,
\begin{align*}
    &\left(\max _j \sqrt{x}\left(a_\tau-a_{\tau-1}\right)\right)^2 \\
    &\le 2 (\left|\sqrt{x} r^\tau \cos [\tau \phi]\right|^2 + \mid \sqrt{x} \frac{r \cos \phi-1}{r \sin \phi} \cdot r^\tau \sin [\tau \phi]|^2) \\
    & \le 2(\frac{(1-\theta)^\tau}{1+\tau} + \max\{|O\left(\theta^2 \tau(1-\theta)^{\tau-1}\right)|^2,|O\left((1-\theta)^{\frac{\tau-1}{2}} \theta\right)+O\left(\frac{(1-\theta)^{\tau / 2}}{1+\tau}\right)|^2\}) \\
    &\le 2(\frac{(1-\theta)^\tau}{1+\tau} + \left|O\left(\theta^2 \tau(1-\theta)^{\tau-1}\right)\right|^2 + 2(|O\left((1-\theta)^{\frac{\tau-1}{2}} \theta\right)|^2+|O\left(\frac{(1-\theta)^{\tau / 2}}{1+\tau}\right)|^2)).
\end{align*}

Similar to the proof of Lemma \ref{lem:aux2}, we obtain
\begin{align}
    \left(\sum_{\tau=0}^{\mathscr{T}-1}\left(\max _j \sqrt{x}\left(a_\tau-a_{\tau-1}\right)\right)^2\right) < O(\theta) < O(1).
\end{align}

\end{proof}

\subsection{Proof of Theorem \ref{theorem:opti}} \label{app:APGD_proof}
\begin{proof}
    Combining \cref{aux:modi_lem_15}, \cref{aux:modi_lem_16}, \cref{aux:modi_lem_17}, \cref{aux:modi_lem_18}, and \cref{aux:modi_lem_5}, we know that when gradient noise satisfies
    \begin{align}
        \max_\tau \left\|\mathbf{e}\left(\mathbf{y}_\tau\right)\right\| &\le \min \{ \frac{1}{c^2} \epsilon,\frac{\rho^{1 / 8}}{\sqrt{2} \ell^{1 / 4} \chi^{1 / 2} c^{3 / 2}} \epsilon^{9 / 8},\frac{\rho^{1 / 8}}{\sqrt{2} \ell^{1 / 4} \chi^{3 / 2} c^{3 / 2}} \epsilon^{9 / 8},\frac{\delta \chi^{-11} c^{-16}}{64 \ell} \frac{\epsilon^3}{\sqrt{d}} \frac{1}{\Delta_f} ,\epsilon \} \\
        &= \frac{\delta  \chi^{-11} c^{-16}}{64 \ell} \frac{\epsilon^3}{\sqrt{d}} \Delta_f,
    \end{align}
Lemma 15, Lemma 16, Lemma 17, Lemma 18, Lemma 5 in \cite{jin2017accelerated} still hold. And thus the proof of theorem 3 in \cite{jin2017accelerated} applies here. 

    If we want to search an $\epsilon$-first order stationary point instead, only large gradient scenario needs to be taken into consideration, then
    \begin{align}
        \max _\tau\left\|\mathbf{e}\left(\mathbf{y}_\tau\right)\right\| \le
        \min \{ \frac{1}{c^2} \epsilon,\frac{\rho^{1 / 8}}{\sqrt{2} \ell^{1 / 4} \chi^{1 / 2} c^{3 / 2}} \epsilon^{9 / 8},\frac{\rho^{1 / 8}}{\sqrt{2} \ell^{1 / 4} \chi^{3 / 2} c^{3 / 2}} \epsilon^{9 / 8} \} = \frac{\rho^{1 / 8}}{\sqrt{2} \ell^{1 / 4} \chi^{3 / 2} c^{3 / 2}} \epsilon^{9 / 8}.
    \end{align}
\end{proof}

\section{Detailed constructions of time-dependent Lindbladian simulation}
\label{sec:simulaiton-details}
In this section, we present the details of the simulation algorithm for simulating time-dependent Lindblad evolution. Recall in \cref{sec:simulation} that we aim to implement the superoperator defined by \cref{eq:gk}. If we use time ordering operator for this, it becomes
\begin{align}
    \mathcal{G}_{\mathcal{K}}(t) = \mathcal{K}[V(0,t)] + \sum_{k=1}^K \frac{1}{k!}\int_{0\leq s_1\leq t} \int_{0\leq s_2\leq t} \cdots\int_{0\leq s_k\leq t}\mathcal{T}[\mathcal{F}_k(s_k,\dots,s_1)]\dd s_1 \cdots \dd s_k,
\end{align}
where $\mathcal{T}[\mathcal{F}_k(s_k,\dots,s_1)]:= \mathcal{F}_k(\tau_k,\dots,\tau_1)$, such that $\tau_k\leq \dots\leq \tau_1$.
Then applying the Riemann sum, we have
\begin{align} 
\label{eq:Gk2}
\mathcal{G}_{\mathcal{K}}(t) &= \mathcal{K}[V(0,t)] + \sum_{k=1}^K \frac{t^k}{k!q^k}\sum_{j_1,\dots,j_k=0}^q\mathcal{T}\mathcal{F}_k(t_{j_k},\dots,t_{j_1}) \\
    \label{eq:Gk3}
                             &= \mathcal{K}[V(0,t)] + \sum_{k=1}^K \frac{t^k}{q^k}\sum_{0\leq j_1\leq ,\dots,\leq j_k\leq q}\mathcal{F}_k(t_{j_k},\dots,t_{j_1}).
\end{align}
Recall \cref{eq:fk}. The superoperator $\mathcal{F}_k$ is defined as
\begin{align}
    \mathcal{F}_k(s_k,\dots. s_1) = \mathcal{K}[V(s_k,t)] \mathcal{L}_J(s_k)\mathcal{K}[V(s_{k-1},s_k)] \mathcal{L}_J(s_{k-1}) \cdots \mathcal{K}[V(s_1,s_2)]\mathcal{L}_J(s_1)\mathcal{K}[V(0,s_1)]. 
\end{align}
Applying $\mathcal{F}_k$ to $\rho$ yields 
\begin{align}
    \mathcal{F}_k(s_k,\dots. s_1)(\rho) = \sum_{\ell_1,\dots,\ell_k=1}^m [V(s_k,t)L_{\ell_k}(s_k)V(s_{k-1},s_k)L_{\ell_k}(s_{k-1}) \cdots L_{\ell_1}(s_1)V(0,s_1)] \rho \\ [V(s_k,t)L_{\ell_k}(s_k)V(s_{k-1},s_k)L_{\ell_k}(s_{k-1}) \cdots L_{\ell_1}(s_1)V(0,s_1)]^{\dagger}.
\end{align}
To simplify the notation, we define 
\begin{align}
  \label{eq:kraus_A}
    A_{\ell_k,\dots, \ell_1}(s_k,\dots,s_1):= V(s_k,t)L_{\ell_k}(s_k)V(s_{k-1},s_k)L_{\ell_k}(s_{k-1}) \cdots L_{\ell_1}(s_1)V(0,s_1).
\end{align} 
Then $\mathcal{F}_k$ can be rewritten as
\begin{align}
    \mathcal{F}_k(s_k,\dots. s_1)(\rho) = \sum_{\ell_1,\dots,\ell_k=1}^m A_{\ell_k,\dots, \ell_1}(s_k,\dots,s_1) \rho A^{\dagger}_{\ell_k,\dots, \ell_1}(s_k,\dots,s_1).
\end{align}

Now, rewriting \cref{eq:Gk3} using \cref{eq:kraus_A}, we have
\begin{equation}
    \begin{aligned}
        \label{eq:gk1}\mathcal{G}_{\mathcal{K}}(t)(\rho) = V(0,t)\rho V^{\dagger}(0,t) + \sum_{k=1}^K \sum_{0\leq j_1\leq \cdots \leq j_k\leq q} \sum_{\ell_1,\ell_2, \dots, \ell_k=1}^m \frac{t^k}{q^k} A_{\ell_k,\dots, \ell_1}(t_{j_k},\dots,t_{j_1}) \\ \rho   A^{\dagger}_{\ell_k,\dots, \ell_1}(t_{j_k},\dots,t_{j_1}).
    \end{aligned}
\end{equation}
We define the index sets $\mathscr{J}$ as
\begin{align}
    \mathscr{J}:= \{k,\ell_1,\dots,\ell_k,j_1, \dots,j_k: k\in[K], \ell_1,\dots,\ell_k \in [m],j_1,\dots,j_k \in[q]\}.
\end{align}
With $A_{\hat{0}}:= V(0,t)$, and for all $\hat{j} \in \mathscr{J}$, define
\begin{align}
    A_{\hat{j}} := A_{\ell_k,\dots,\ell_1}(j_k t/q,\ldots,j_1 t/q).
\end{align}
Then, \cref{eq:gk1} can be further rewritten as
\begin{align}
    \mathcal{G}_{\mathcal{K}}(t)(\rho) = A_0(t)\rho A_0^{\dagger}(t) + \sum_{\hat{j}\in \mathscr{J}}\frac{t^k}{q^k} A_{\hat{j}}(t)\rho A_{\hat{j}}^{\dagger}(t).
\end{align}
To implement the above completely positive map, we need to first construct the block-encoding $U_{A_{\hat{j}}}$ for each $A_{\hat{j}}$'s, namely,
\begin{align}
    \sum_{\hat{j}}|\hat{j}\rangle
    \langle \hat{j}| \otimes U_{A_{\hat{j}}},
\end{align}
where 
\begin{align}
    U_{A_{\hat{j}}} = \begin{bmatrix}
       A_{\hat{j}}/s_{\hat{j}} & \cdot  \;         \\[0.3em]
       \cdot           & \cdot \;
     \end{bmatrix}.
\end{align}
To apply \cref{lemma:block-encoding-channel}, we also need the following state.
\begin{align}
  \label{eq:mu}
    |\mu\rangle = \sum_{\hat{j}} s_{\hat{j}}|\hat{j}\rangle = \frac{1}{\mathcal{N}}\sum_{k=1}^K \sum_{\ell_1,\dots,\ell_k = 1}^m \sum_{0\leq j_1<\cdots<j_k\leq q} s_{\hat{j}}\sqrt{\frac{t^k}{q^k} }|k\rangle |j_1,\dots,j_k\rangle |\ell_1,\dots, \ell_k\rangle,
\end{align}
where $s_{\ell_k},\ldots,s_{\ell_1}$ are normalization constant for block-encodings of $L_{\ell_k}, \dots, L_{\ell_1}$. It is straightforward that the block-encoding $U_{A_{\hat{j}}}$ can be implemented by the block-encodings of $A_{\ell_k,\dots,\ell_1}\left( j_k t/q,\ldots,j_1 t/q \right)$ what was defined in \cref{eq:kraus_A}.

By the time-dependent Hamiltonian simulation algorithm~\cite{KSB19}, we can simulate $V(s,t)$ by truncated Dyson series. Note that
\begin{align}
    V(s,t):= \mathcal{T} e^{\int_s^t J(\tau)\dd \tau},
\end{align}
where
\begin{align}
    J(t) := -iH(t) - \frac{1}{2} \sum_{j=1}^m L_j^{\dagger}(t) L_j(t).
\end{align}
Assume that we are given an $(\alpha_0(t), a, \epsilon')$-block-encoding $U_{H(t)}$ of $H(t)$, and an $(\alpha_j(t), a, \epsilon')$-block-encoding $U_{L_j(t)}$ for each $L_j(t)$ for all $t \geq 0$. 
That is, we have
\begin{align}
    U_{H(t)} = \sum_t |t\rangle\langle t| \otimes \begin{bmatrix}
       H(t)/\alpha_0(t) & \cdot  \;         \\[0.3em]
       \cdot           & \cdot \;
     \end{bmatrix},
\end{align}
and 
\begin{align}
     U_{L_j(t)} = \sum_t |t\rangle\langle t| \otimes \begin{bmatrix}
       L_j(t)/\alpha_j(t) & \cdot  \;         \\[0.3em]
       \cdot           & \cdot \;
     \end{bmatrix}.
\end{align}
Then by \cref{lemma:sum-to-be}, we can get a $(\|\mathcal{L}(t)\|_{\mathrm{be}},a,2\|\mathcal{L}(t)\|_{\mathrm{be}}\epsilon')$-block-encoding of $J(t)$. Where $\|\mathcal{L}(t)\|_{\mathrm{be}}$ is defined in \cref{eq:be-norm}. We denote this block-encoding by $U_{J(t)}$.

More specifically, we first multiply $U_{L_j^{\dagger}}$ with $U_{L_j}$ and get an $(\alpha_j^2(t),a,2\epsilon')$-block-encoding of each $L_j(t)^{\dagger}L_j(t)$, 
this requires $O(1)$ implementations to each $U_{L_j}(t)$'s and additional $O(m)$ elementary gates. Then we apply \cref{lemma:sum-to-be} with each $L_j(t)^{\dagger}L_j(t)$'s and $H(t)$ to get an $(\alpha_0(t) + \frac{1}{2} \sum_{j=1}^m \alpha_j^2(t),a,(\alpha_0 + \sum_j \alpha_j^2)\epsilon')$-block-encoding of $J(t)$. This requires $O(1)$ implementations of $U_{H(t)}$ and additional $O(1)$ elementary gates. So the total gate complexity for implementing $U_{J(t)}$ is $O(m)$. For any input state $|\psi\rangle$, the effect of $U_{J(t)}$ is
\begin{align}
    U_{J(t)}|0\rangle |t\rangle |\psi\rangle = |0\rangle |t\rangle \frac{J(t)}{\alpha_0(t) + \frac{1}{2} \sum_{j=1}^m \alpha_j^2(t)} |\psi\rangle + |0^{\perp}\rangle,  
\end{align}
or equivalently, we can write it as
\begin{align}
    U_{J(t)} = \sum_t |t\rangle\langle t| \otimes \begin{bmatrix}
       \frac{J(t)}{\alpha_0(t) + \frac{1}{2} \sum_{j=1}^m \alpha_j^2(t)} & \cdot  \;         \\[0.3em]
       \cdot           & \cdot \;
     \end{bmatrix}.
\end{align}
Regarding the normalizing constant and the approximation error, it is a $(\|\mathcal{L}(t)\|_{\mathrm{be}},a,2\|\mathcal{L}(t)\|_{\mathrm{be}}\epsilon')$-block-encoding of $J(t)$.

Noticing that the evolution time has been rescaled, we can assume our $\|\mathcal{L}(t)\|_{\mathrm{be}}$ at each time $t$ to be
\begin{align}
    \|\mathcal{L}(t)\|_{\mathrm{be}} = 1,
\end{align}
and
\begin{align}
    \alpha_0(t) + \frac{1}{2} \sum_{j=1}^m \alpha_j^2(t) \leq 1.
\end{align}
For each $V(s, t)$, consider its Dyson series expansion: 
\begin{align}
\label{operator:V(s,t)}
    \Tilde{V}(s,t) = \sum_{k=1}^{K'} \frac{(t-s)^k}{M^k k!} \sum_{j_1,\dots,j_k=0}^{M-1} \mathcal{T} J(t_{j_k})\cdots J(t_{j_1}).
\end{align}
Then we use the LCU method as in \cite{KSB19} to simulate the above operator, by $O(K')$ invocations of $U_{J(t)}$,
where each invocation requires $O(m)$ gates.
We also need additional $O(K'(\log M + \log K'))$ gates for LCU state preparation. So our total gate complexity for preparing each $\Tilde{V}(s,t)$ is 
\begin{align}
    O(K'(\log M + \log K' + m)).
\end{align}
So the gate complexity for implementing each Kraus operator in \cref{eq:gk1} is 
\begin{align}
    O(KK'(\log M + \log K' + m)).
\end{align}
The normalizing constant for $\Tilde{V}(s,t)$ is then
\begin{align}
    \left(\sum_{k=0}^{K'} \frac{(t-s)^k \|\mathcal{L}(t)\|^k_{\mathrm{be}}}{k!}\right)^2
     \leq  e^{2(t-s)\|\mathcal{L}(t)\|_{\mathrm{be}}} \leq e^{2(t-s)}.
\end{align}

Next, by multiplying the block-encodings of each operator in \cref{eq:kraus_A}, we can get a block-encoding of $A_{\ell_k,\dots, \ell_1}(s_k,\dots,s_1)$ with normalizing constant $s_{\hat{j}}$, which can be bounded by
\begin{align}
    s_{\hat{j}} \leq e^{2(t-s_k)}\alpha_{\ell_k}(t)e^{2(s_k-s_{k-1})}\alpha_{\ell_{k-1}}(t)\cdots \alpha_{\ell_1}(t)e^{2 s_1} = e^{2t} \alpha_{\ell_k}(t)\alpha_{\ell_{k-1}}(t)\cdots \alpha_{\ell_1}(t).
\end{align}
For $t=O(1)$, we have
\begin{align}
    e^{2t} = O(1).
\end{align}
Then $s_{\hat{j}}$ can be bounded by
\begin{align}
    s_{\hat{j}}\leq c_0 \alpha_{\ell_k}(t)\alpha_{\ell_{k-1}}(t)\cdots \alpha_{\ell_1}(t),
\end{align}
where $c_0 = O(1)$ is some constant.

For the third register of \cref{eq:mu}, we need to prepare
\begin{align}
\label{3rd register}
    \frac{1}{\mathcal{N}_1(t)}\sum_{\ell_1,\dots,\ell_k = 1}^m \alpha_{\ell_k}(t)\alpha_{\ell_{k-1}}(t)\cdots \alpha_{\ell_1}(t)|\ell_1,\dots,\ell_k\rangle,
\end{align}
corresponding to
\begin{align}
    \sum_{\ell_1,\dots,\ell_k = 1}^m s_{\hat{j}}|\ell_1,\dots,\ell_k\rangle.
\end{align}
Basically, we prepare each subregister $\alpha_{\ell}|\ell\rangle$
separately, controlled by a unary register $|k\rangle$ (we will discuss how to prepare $|k\rangle$ later). That is we assign $K$ registers for $|\ell_k\rangle$, then we do control state preparation. If the control register is $|k\rangle$, we prepare $\alpha_{\ell_1}|\ell_1\rangle, \alpha_{\ell_2}|\ell_2\rangle,\dots,\alpha_{\ell_k}|\ell_k\rangle$ for only the first $k$ registers, while keeping the rest of them unchanged. 
The normalizing constant $\mathcal{N}_1$ is then
\begin{align}
    \mathcal{N}_1(t) = \left(\sum_{\ell=1}^m \alpha_{\ell}^2(t)\right)^{k/2} \leq (2\|\mathcal{L}(t)\|_{\mathrm{be}})^{k/2}\leq 2^{k/2},
\end{align}
where the last inequality follows from $\|\mathcal{L}(t)\|_{\mathrm{be}} = O(1)$.
The gate complexity for preparing each of $\alpha_{\ell_1}|\ell_1\rangle, \alpha_{\ell_2}|\ell_2\rangle,\dots,\alpha_{\ell_k}|\ell_k\rangle$ is $O(m)$, since $k$ can be as large as $K$, the total time complexity for preparing the third register is $O(Km)$. 

For the second register of \cref{eq:mu}, we aim to prepare the state
\begin{align}
\label{state:second_part}
    \frac{1}{\mathcal{N}_2(t)}\sum_{k=1}^K \sum_{0\leq j_1<\cdots<j_k \leq q} \sqrt{\frac{(2\|\mathcal{L}(t)\|_{\mathrm{be}})^k t^k}{q^k}}|k\rangle |j_1,\dots,j_k\rangle.
\end{align}
We add $(2\|\mathcal{L}(t)\|_{\mathrm{be}})^{k}$ term to absorb the normalizing constant $\mathcal{N}_1(t)$ in our previous step. The normalizing constant $\mathcal{N}_2(t)$ is 
\begin{align}
    \mathcal{N}_2(t) = \sqrt{\sum_{k=1}^K \sum_{0\leq j_1<\cdots<j_k \leq q} \frac{(2\|\mathcal{L}(t)\|_{\mathrm{be}})^k t^k}{q^k}} \leq \sqrt{\sum_{k=1}^K \frac{q^k}{k!}\frac{(2\|\mathcal{L}(t)\|_{\mathrm{be}})^k t^k}{q^k}} \leq e^{t\|\mathcal{L}(t)\|_{\mathrm{be}}}\leq e^t.
\end{align}
For $t = O(1)$, the normalizing constant will be a constant, so the number of steps in the amplitude amplification is $O(1)$. As in \cite{KSB19}, there are two methods to prepare \cref{state:second_part}. In the next subsection, we present the method based on a compressed encoding scheme, and in \cref{sec:qsort}, we present the method based on quantum sorting.

\subsection{State preparation by compressed encoding scheme}
We first initialize $q$ qubits to $|0\rangle ^{\otimes q}$, then we rotate each qubits by a small angle, to give
\begin{equation}
\label{state:compress}
    \begin{aligned}
    \left(\frac{|0\rangle + \sqrt{\zeta}|1\rangle}{\sqrt{1+\zeta}}\right)^{\otimes q} &= (1+\zeta)^{-q/2}\sum_x \zeta^{|x|/2}|x\rangle\\
    &=(1+\zeta)^{-q/2}\sum_{x,|x|\leq K} \zeta^{|x|/2} |x\rangle + (1+\zeta)^{-q/2}\sum_{x,|x|> K} \zeta^{|x|/2} |x\rangle\\   
    &= \sqrt{1-\mu^2}|\mathtt{time}\rangle + \nu |\nu\rangle,
    \end{aligned}
\end{equation}
where $\zeta := \frac{2\|\mathcal{L}\|_{\mathrm{be}}t}{q}$, and
\begin{align}
    \label{state:time}|\mathtt{time}\rangle = \frac{1}{\sqrt{S}} \sum_{|x|\leq K} \zeta^{|x|/2}|x\rangle.
\end{align}
The amplitude $\nu$ satisfies
\begin{align}
    \nu^2 = (1+\zeta)^{-q} \sum_{x,|x|>K}\zeta^{|x|} = O\left(\frac{(2\|\mathcal{L}(t)\|_{\mathrm{be}}t)^{K+1}}{(K+1)!}\right).
\end{align}
For $t\|\mathcal{L}(t)\|_{\mathrm{be}}=O(1)$, by setting $K = O\left(\frac{\log(1/\epsilon)}{\log\log(1/\epsilon)}\right)$, we can get $\nu^2 = O(\epsilon)$. Then we compress each string $x$. For a string $x = 0^{s_1}1 0^{s_2}10^{s_3} \cdots 0^{s_k}10^t$, we represent it with $s_1 s_2\dots s_k$. We define the encoding operator $C_q^K$ as
\begin{align}
    C_q^K|x\rangle :=|s_1,\dots,s_k,q,\dots,q\rangle.
\end{align}
By applying encoding operator $C_q^K$ to [\ref{state:compress}], we get
\begin{align}
    |\Xi_q^K\rangle = \sqrt{1-\mu^2} C_q^K |\mathtt{time}\rangle + \nu |\nu'\rangle,
\end{align}
where $|\nu '\rangle$ is defined as
\begin{align}
    |\nu '\rangle = C_q^K |\nu\rangle.
\end{align}
But this only gives the interval between different time. We still need to compute the absolute time and the Hamming weight for each string $x$. We first count the number times $q$ appears to determine $k$, which has complexity $O(K\log q)$. Then we increment registers $2$ to $k$ to give $|s_1,s_2 +1,\dots,s_k + 1,q,\dots,q\rangle |k\rangle$. Note that, here our $k$ takes unary form. Then we add register $1$ to register $2$, register $2$ to register $3$, and so on to give $|j_1,\dots,j_k,q,\dots,q\rangle |k\rangle$.

However, in practice, we can't construct encoding operator $C_q^K$ by really "counting" $0$'s in state (\ref{state:time}). The detailed construction follows from \cite{kieferova2019simulating}. The idea is that we first prepare a state 
\begin{align}
    |\phi_p\rangle = \sum_{s=0}^{p-1} \beta \alpha^s |s\rangle + \alpha^p|p\rangle,
\end{align}
where $\alpha = 1/\sqrt{1+\zeta}$ and $\beta = \sqrt{\zeta} \alpha$. We choose $p$ to be 
\begin{align}
    \log p = \Theta(\log M + \log\log (1/\delta)).
\end{align}
Taking a tensor product of $K+1$ of the state $|\phi_p\rangle$ gives a state similar to $|\Xi_q^K\rangle$.

According to \cite{KSB19}, to prepare the above state within trace distance $O(\epsilon)$, the total gate complexity for preparing $|\mu\rangle$ is 
\begin{align}
    O(K(m + \log q + \log \log (1/\epsilon))).
\end{align}
Note that the total gate complexity for preparing 
\begin{align}
    \sum_{\hat{j}}|\hat{j}\rangle \langle \hat{j}| \otimes U_{A_{\hat{j}}}
\end{align}
is
\begin{align}
    O(KK'(\log M + \log K' + n)).
\end{align}
So the total gate complexity for simulating (\ref{eq:gk1}) is 
\begin{align}
    O(KK'(\log M + \log q + m + n + \log \log (1/\epsilon))).
\end{align}
And the total query complexity is
\begin{align}
    O(KK').
\end{align}
To make our error to be within $\epsilon$, also by our analysis in the main text, it suffices to take
\begin{align}
    K = \frac{\log(1/\epsilon)}{\log\log (1/\epsilon)},
\end{align}
\begin{align}
    q = \Theta\left(\frac{2K}{\epsilon}\left(4J_{\max} + 2\sum_{j=1}^m \Dot{L}_{j,\max}\right)\right),
\end{align}
\begin{align}
    K' = \frac{\log(1/\epsilon)}{\log\log(1/\epsilon)},
\end{align}
\begin{align}
    M = \Theta\left(\frac{\Dot{J}_{\max}}{\epsilon}\right).
\end{align}
By substituting these back, we can rewrite the total gate complexity as
\begin{align}
    O\left(\left(\frac{\log(1/\epsilon)}{\log\log(1/\epsilon)}\right)^2\cdot \left(m+n+\log(1/\epsilon)\right)\right).
\end{align}
\subsection{State Preparation by Quantum Sort}\label{sec:qsort}
We use the scheme in \cite{KSB19} to sort the time index. For quantum sort scheme, we use \cref{eq:Gk2} to simulate our target operator. Now we define $A_j$ as
\begin{align}
    A_j = \sqrt{\frac{t^k}{k!q^k}} A_{l_k,\dots,l_1}(\frac{j_k t}{q},\dots, \frac{j_1 t}{q}).
\end{align}
Then we use quantum sort scheme to prepare $|\mu\rangle$, such that
\begin{align}
    |\mu\rangle = \sum_{k=1}^K \sum_{\ell_1,\dots,\ell_k = 1}^m \sum_{j_1,\dots,j_k = 0}^q \sqrt{\frac{t^k}{q^k} }s_{\hat{j}}|k\rangle |\ell_1,\dots, \ell_k\rangle \mathrm{SORT}|j_1,\dots,j_k\rangle .
\end{align}
Here $\mathrm{SORT}$ operator is defined as
\begin{align}
    \mathrm{SORT}|\tau_1, \dots,\tau_k\rangle = |s_1, \dots,s_k\rangle,
\end{align}
such that $s_1\leq s_2\leq \cdots \leq s_k$. 
To discuss the details for preparing $|\mu\rangle$, first we need to prepare
\begin{align}
    \frac{1}{\mathcal{N}}\sum_{\ell_1,\dots,\ell_k = 1}^m \alpha_{\ell_k} \alpha_{\ell_{k-1}}\cdots \alpha_{\ell_1} |\ell_1,\dots,\ell_k\rangle.
\end{align}
The state preparation process is the same as in Compression Scheme. Its gate complexity is $O(Km)$. For the rest of them, we first prepare a superposition over $k$ in unary form
\begin{align}
    |0\rangle^{\otimes K} \mapsto \frac{1}{\sqrt{s}} \sum_{k=0}^K \sqrt{\frac{t^k}{k!}} |1^k 0^{K-k}\rangle.
\end{align}
Then we do control rotation to $K$ subregisters, that is we do the following
\begin{align}
    \frac{1}{\sqrt{s}} \sum_{k=0}^K \sqrt{\frac{t^k}{k!}} |1^k 0^{K-k}\rangle |0\rangle \mapsto \frac{1}{\sqrt{s}} \sum_{k=0}^K \sqrt{\frac{t^k}{q^kk!}} |1^k 0^{K-k}\rangle \sum_{j_1=0}^{q-1}\sum_{j_2=0}^{q-1}\cdots \sum_{j_k=0}^{q-1}|j_1,j_2,\dots,j_k\rangle.
\end{align}
The gate complexity for the above operation is $O(K\log q)$. Then we apply $\mathrm{SORT}$ to $|j_1,j_2,\dots,j_k\rangle$ as in \cite{KSB19}. We use a sequence of comparison operation $\mathrm{COMPARE}$ which acts on two multi-qubit registers storing the values $q_1$ and $q_2$ and an ancillary qubit initialized to $|0\rangle$
\begin{align}
    \mathrm{COMPARE}|q_1\rangle |q_2\rangle |0\rangle = |q_1\rangle |q_2\rangle |\theta(q_1 - q_2)\rangle,
\end{align}
where $\theta$ is the Heaviside step function ($\theta(0) = 0$). Then we apply a $\mathrm{SWAP}$ gate to $|q_1\rangle$ and $|q_2\rangle$ controlled by our ancillary qubit. Then the overall effect is, if $q_1\leq q_2$, we do nothing; otherwise, we swap the first two registers. Then we apply such sorting network to all registers $|j_1\rangle,\dots,|j_k\rangle$ and by certain sorting algorithm, we make $j_1\leq j_2\leq \cdots \leq j_k$. We also swap qubits in register $|k\rangle$ to keep their correspondence. Readers may refer to \cite{kieferova2019simulating} for an example of a sorting circuit. 

The gate complexity for quantum sort network is $O(K\log K)$. So the total gate complexity for preparing $\mu$ is $O(K(\log q+ \log K + m))$. Since we take $K = O\left(\frac{\log(1/\epsilon)}{\log\log(1/\epsilon)}\right)$, the gate complexity is the same as the compressed encoding scheme.

\end{document}